\newenvironment{citemize}{\begin{itemize}[leftmargin=0cm,itemindent=.7cm,labelwidth=\itemindent,labelsep=0cm,align=left]}{\end{itemize}}
\DeclareFontFamily{U}{matha}{\hyphenchar\font45}
\DeclareFontShape{U}{matha}{m}{n}{
      <5> <6> <7> <8> <9> <10> gen * matha
      <10.95> matha10 <12> <14.4> <17.28> <20.74> <24.88> matha12
      }{}
\DeclareSymbolFont{matha}{U}{matha}{m}{n}
\DeclareMathSymbol{\HASH}{3}{matha}{"23}
\newcommand{\hash}{\operatorname{\HASH}}
\DeclareFontFamily{U}{MnSymbolC}{}
\DeclareFontShape{U}{MnSymbolC}{m}{n}{
    <-6>  MnSymbolC5
   <6-7>  MnSymbolC6
   <7-8>  MnSymbolC7
   <8-9>  MnSymbolC8
   <9-10> MnSymbolC9
  <10-12> MnSymbolC10
  <12->   MnSymbolC12}{}
\DeclareSymbolFont{MnSyC}{U}{MnSymbolC}{m}{n}
\newcommand{\filledsquare}{\operatorname{\scalebox{0.5}{$\blacksquare$}}}
\newcommand{\descto}[3][]{
    \arrow[draw=none]{#2}[description,fill=none,#1]{#3}
}
\newcommand{\klstar}{\star}  			
\newcommand{\klcomp}{\mathbin{\diamond}}  	
\newcommand{\istar}{\dagger}  			
\newcommand{\iistar}{\ddagger}  		
\newcommand{\comp}{\,}				
\newcommand{\IB}[2]{#1\mathbin{\hash}#2}
\newcommand{\B}[2]{#1\mathbin{\Box}#2}
\newcommand{\IBnu}{\digamma\kern-3pt{}_{\hash}}
\newcommand{\IBnun}{\digamma\kern-3pt{}_{\widehat\hash}}
\newcommand{\mun}{\mu^{\nu}}
\newcommand{\etan}{\eta^{\nu}}
\newcommand{\cbElg}{\catname{CElg}_{\hash}}
\newcommand{\SigF}{\Sigma}
\newcommand{\TF}{T_\SigF}
\newcommand{\BBTF}{\BBT_\SigF}
\newcommand{\ext}{\operatorname{\sf ext}}
\newcommand{\out}{\operatorname{\sf out}}
\newcommand{\tuo}[1][]{\operatorname{\sf out}_{#1}^{\text{\kern.5pt\rm-}\kern-.5pt1}\kern-1pt}
\newcommand{\coit}{\operatorname{\sf coit}}
\newcommand{\corec}{\operatorname{\sf coit}}
\newcommand{\kinl}{\ul{\operatorname{\sf inl}}}
\newcommand{\kinr}{\ul{\operatorname{\sf inr}}}
\def\kto{{%
    \setbox0\hbox{$\longrightarrow$}%
    \rlap{\hbox to \wd0{$\hss\klcomp\kern1pt\hss$}}\box0
}}
\newcommand{\kpl}{\hm{[\kern-2.6pt[}}
\newcommand{\kpr}{\hm{]\kern-2.6pt]}}
\newcommand{\pmonad}{parameterized monad}
\newcommand{\Pmonad}{Parameterized monad}
\renewcommand{\paragraph}[1]{\textbf{#1}}
\renewcommand{\xrightarrow}[2][]{\ext@arrow 0359\rightarrowfill@{~~#1~}{#2}}
\providecommand{\catname}{\mathbf} 
\providecommand{\clsname}{\mathcal}
\def\defcatname#1{\expandafter\def\csname B#1\endcsname{\catname{#1}}}
\def\defcatnames#1{\ifx#1\defcatnames\else\defcatname#1\expandafter\defcatnames\fi}
\def\defclsname#1{\expandafter\def\csname C#1\endcsname{\clsname{#1}}}
\def\defclsnames#1{\ifx#1\defclsnames\else\defclsname#1\expandafter\defclsnames\fi}
\def\defbbname#1{\expandafter\def\csname BB#1\endcsname{\mathbb{#1}}}
\def\defbbnames#1{\ifx#1\defbbnames\else\defbbname#1\expandafter\defbbnames\fi}
\def\Set{\catname{Set}}
\def\Cppo{\catname{Cppo}}
\providecommand{\argument}{\operatorname{-\!-}}
\providecommand{\ul}{\underline}			
\providecommand{\PSet}{{\mathcal P}}			
\providecommand{\Id}{\operatorname{Id}}
\providecommand{\Hom}{\mathsf{Hom}}
\providecommand{\id}{\mathsf{id}}
\providecommand{\comp}{\mathbin{\circ}}
\providecommand{\inl}{\operatorname{\sf inl}}
\providecommand{\inr}{\operatorname{\sf inr}}
\providecommand{\dash}{\nobreakdash-\hspace{0pt}}	
\providecommand{\by}[1]{\text{/\!/~#1}}			
\providecommand{\pacman}[1]{}				
\spnewtheorem{thm}[theorem]{Theorem}{\bfseries}{\itshape}
\spnewtheorem{cor}[theorem]{Corollary}{\bfseries}{\itshape}
\spnewtheorem{cnj}[theorem]{Conjecture}{\bfseries}{\itshape}
\spnewtheorem{lem}[theorem]{Lemma}{\bfseries}{\itshape}
\spnewtheorem{lemdefn}[theorem]{Lemma and Definition}{\bfseries}{\itshape}
\spnewtheorem{prop}[theorem]{Proposition}{\bfseries}{\itshape}
\spnewtheorem{defn}[theorem]{Definition}{\bfseries}{\upshape}
\spnewtheorem{rem}[theorem]{Remark}{\bfseries}{\upshape}
\spnewtheorem{notation}[theorem]{Notation}{\bfseries}{\upshape}
\spnewtheorem{expl}[theorem]{Example}{\bfseries}{\upshape}
\spnewtheorem{thmdefn}[theorem]{Theorem and Definition}{\bfseries}{\itshape}
\spnewtheorem{propdefn}[theorem]{Proposition and Definition}{\bfseries}{\itshape}
\spnewtheorem{assumption}[theorem]{Assumption}{\bfseries}{\upshape}
\spnewtheorem{algorithm}[theorem]{Algorithm}{\bfseries}{\upshape}
\renewenvironment{definition}{\begin{defn}\rm}{\end{defn}}
\newenvironment{proof*}[1]{\begin{proof}[#1]\rm}{\end{proof}}
\newtheorem{constr}[thm]{Construction}
\def\ol{\overline}
\tikzset{shiftarr/.style={
        rounded corners,%
        to path={--([#1]\tikztostart.center)
                     -- ([#1]\tikztotarget.center) \tikztonodes
                     -- (\tikztotarget)},
}}
\begin{document}
\allowdisplaybreaks

\begin{frontmatter}
  \title{Complete Elgot Monads and Coalgebraic Resumptions}
  \author{Sergey Goncharov \thanks{Email: \href{mailto:Sergey.Goncharov@fau.de} {\texttt{ Sergey.Goncharov@fau.de }}}\and
  Stefan Milius \thanks{Email: \href{mailto:mail@stefan-milius.eu} {\texttt{ mail@stefan-milius.eu}} Supported by Deutsche Forschungsgemeinschaft (DFG) under project MI~717/5-1} \and
  Christoph Rauch \thanks{Email: \href{mailto:Christoph.Rauch@fau.de} {\texttt{ Christoph.Rauch@fau.de}}}}
\institute{Lehrstuhl f\"ur Theoretische Informatik, Friedrich-Alexander Universit\"at Erlangen-N\"urnberg, Germany}

\maketitle

\pagenumbering{arabic}

\begin{abstract}
\emph{Monads} are extensively used nowadays to abstractly model a wide range of computational effects such as nondeterminism, statefulness, and exceptions. It turns out that equipping a monad with a \emph{(uniform) iteration operator} satisfying a set of natural axioms allows for modelling \emph{iterative computations} just as abstractly. The emerging monads are called \emph{complete Elgot monads}. It has been shown recently that extending complete Elgot monads with free effects (e.g.\ operations of sending/receiving messages over channels) canonically leads to \emph{generalized coalgebraic resumption monads}, previously used as semantic domains for non-wellfounded guarded processes. In this paper, we continue the study of the relationship between abstract complete Elgot monads and those that capture coalgebraic resumptions, by comparing the corresponding categories of \emph{(Eilenberg-Moore) algebras}. To this end we first provide a characterization of the latter category; even more generally, we formulate this characterization in terms of Uustalu's \emph{parametrized monads}. This is further used for establishing a characterization of complete Elgot monads as precisely those monads whose algebras are coherently equipped with the structure of algebras of coalgebraic resumption monads.
\end{abstract}
\end{frontmatter}

\section{Introduction}
One traditional use of monads in computer science, stemming from the seminal thesis of 
Lawvere~\cite{Lawvere63}, is as a tool for algebraic
semantics where monads arise as a high-level metaphor for (clones of) equational theories.  
More recently, Moggi proposed to associate monads with \emph{computational effects}
and use them as a generic tool for denotational 
semantics~\cite{Moggi91b}, which later had a considerable impact on the design
of functional programming languages, most prominently Haskell~\cite{Haskell98}.
Finally, in the first decade of the new millennium, Plotkin and Power reestablished the connection
between computational monads and algebraic theories in their theory of \emph{algebraic 
effects}~\cite{PlotkinPower01,PlotkinPower02}.

We use the outlined perspective to study the notion of \emph{iteration}, a 
concept, that has a well-established algebraic description, 
and whose relevance in the context of computational effects is certain.
On the technically level our present work can be viewed as a continuation of the previous extensive work on monads with 
iteration~\cite{AczelAdamekEtAl03,AdamekMiliusEtAl06j,AdamekMiliusEtAl10} having its roots in
the work of Elgot~\cite{Elgot75} and Bloom and \'{E}sik~\cite{BloomEsik93} on iteration theories.

More specifically, we are concerned with a particular construction on monads:
given a monad $\BBT$ and a functor $\SigF$, we assume the existencence of the coalgebra
\begin{equation}\label{eq:TF}\tag{$\bigstar$}
  \TF X = \nu\gamma.\,T(X+\SigF\gamma)
\end{equation}
for each object $X$ (these final coalgebras exist under mild assumptions on~$T$,~$\Sigma$, and
the base category).  
It is known~\cite{Uustalu03} that~$\TF$ extends to a monad~$\BBTF$ and we call the latter
the \emph{generalized coalgebraic resumption monad}. 

Intuitively,~\eqref{eq:TF} is a generic 
semantic domain for systems combining \emph{extensional} (via~$\BBT$) and
\emph{intensional} (via~$\Sigma$) features with iteration. To make this intuition more precise, consider 
the following simplistic
\begin{expl}\label{expl:proc}
Let $A=\{a,b\}$ be an alphabet of \emph{actions}. Then the following system of
equations specifies \emph{processes} $x_1,x_2,x_3$ of \emph{basic process algebra (BPA)}:
\begin{align*}
x_1 = a\cdot (x_2 + x_3) && x_2 = a\cdot x_1 + b\cdot x_3 && x_3 = a\cdot x_1 + \checkmark
\end{align*}
We can think of this specification as a map $P\to T (\{\checkmark\}+\Sigma P)$
where $P=\{x_1,x_2,x_3\}$, $\Sigma=A\times\argument$ and $T=\PSet_{\omega}$ is the finite
powerset monad. Using the standard approach~\cite{RuttenTuri94} we can \emph{solve} this specification
by finding a map $P\to T_{\Sigma}\{\checkmark\}$ that assigns to every $x_i$ the
corresponding semantics over the domain of possibly non-wellfounded trees 
$T_{\Sigma}\{\checkmark\}=\nu\gamma.\PSet_{\omega}(\{\checkmark\}+A\times\gamma)$.
The crucial fact here is that the original system is \emph{guarded}, i.e.\ every
recursive call of a variable~$x_i$ is preceded by an action. In particular, this 
implies that the recursive system at hand has a unique solution. 
\end{expl}
If the guardedness assumption is dropped, the uniqueness of solutions can no longer
be ensured, but it is possible to introduce a notion of \emph{canonical solution}
assuming that the monad $\BBT$ has suitable completeness properties under an order,
or more generally is a \emph{complete Elgot monad}. A monad $\BBT$ is called a complete 
Elgot monad if it defines a \emph{solution} $f^\istar:X\to TY$ for every morphisms of the
form $f:X\to T(Y+X)$ satisfying a certain well-established set of axioms for iteration 
(e.g.\ $\PSet_{\omega}$ is not a complete Elgot monad, but the countable powerset monad
$\PSet_{\omega_1}$ is).
The central result of the recent work~\cite{GoncharovRauchEtAl15} is that whenever
$\BBT$ is a complete Elgot monad then so is the transformed monad~\eqref{eq:TF}. 
In particular, this allows for solving recursive equations over processes (in the 
sense of Example~\ref{expl:proc}) whenever recursive equations over 
$\BBT$ are solvable.

In the present paper we study the relationship between guarded and unguarded recursion
implemented via complete Elgot monads and generalized coalgebraic resumptions, respectively.
As an auxiliary abstraction device, we involve the notion of \emph{parametrized monad}
previously developed by Uustalu~\cite{Uustalu03}, e.g.\ the bifunctor $\IB{X}{Y}=T(X+\Sigma Y)$
in~\eqref{eq:TF} is a parametrized monad.

The paper is organized as follows. After categorical preliminaries (Section~\ref{sec:prelim}) 
we present and discuss complete Elgot monads in Section~\ref{sec:emon}. In 
Section~\ref{sec:cealg} we introduce algebras and complete Elgot algebras for
parametrized monads; here we show that existence of free complete Elgot algebras
is equivalent to the existence of certain final coalgebras, which then form carriers
of the corresponding algebras (Theorem~\ref{thm:eq}); furthermore, we show that 
the category of complete Elgot algebras is equivalent to the Eilenberg-Moore 
category of a generalized coalgebraic resumption monad over the corresponding parametrized monad 
(Theorem~\ref{thm:algebras-iso}). Finally, in Section~\ref{sec:alg} we apply the
developed results to characterize complete Elgot monads as those whose algebras
are coherently equipped with complete Elgot algebra structures (Theorem~\ref{thm:emon} and~\ref{thm:elgot_from_alg}).

\iffull\else
Due to space constraints we omit all proofs. A full version of our paper can be found on arXiv.
\fi

\section{Preliminaries}\label{sec:prelim}

We assume that readers are familiar with basic category theory~\cite{Maclane71}; we
write $|\BC|$ for the class of objects of a category $\BC$ and
$f:X\to Y$ for morphisms in $\BC$. We often omit indexes, e.g.\ on natural transformations, if they are
clear from the context. 

In this paper we work with an ambient category $\BC$ with finite coproducts.
We denote by $\inl$ and $\inr$ the left- and right-hand coproduct injections from $X$ and $Y$ to
$X+Y$, 
and $[f,g]:X+Y\to Z$ the is the \emph{copair} of $f:X\to Z$ and
$g:Y\to Z$, i.e.~the unique morphism with $[f,g]\inl = f$ and
$[f,g]\inr = g$. The codiagonal is denoted by $\nabla = [\id,\id]:X+X\to X$ as usual.

We consider \emph{monads} by $\BC$ given in the form of \emph{Kleisli triples}
$\BBT=(T,\eta,\argument^{\klstar})$ where $T$ is an endomap on $|\BC|$, 
$\eta$, called \emph{monad unit}, is a family of morphisms ${\eta_X:X\to TX}$ 
indexed over $|\BC|$, and \emph{(Kleisli) lifting} assigning to each $f: X \to TY$ a morphism $f^\klstar: TX \to TY$ such that the following laws hold:
\begin{align*} 
\eta^{\klstar}=\id, && f^{\klstar}\comp\eta=f, && (f^{\klstar}\comp g)^{\klstar}=f^{\klstar}\comp g^{\klstar}.
\end{align*}
This is equivalent to the definition of a monad in terms of \emph{monad
  multiplication}~$\mu$~\cite{Maclane71}, where in particular $\mu=\id^\klstar$,
$\eta$ extends to a natural transformation, and $T$ to an endofunctor by
$Tf=(\eta\comp f)^\klstar$. The \emph{Kleisli category} $\BC_{\BBT}$ of $\BBT$
is formed by \emph{Kleisli morphisms}: $\Hom_{\BC_{\BBT}}(X,Y)=\Hom_{\BC}(X,TY)$
under $\eta$ used as identity morphisms and \emph{Kleisli composition}
$(f,g)\mapsto f\klcomp g= f^\klstar\comp g$. We adopt the notation $f:X\kto Y$
for Kleisli morphisms $f:X\to TY$.

The forgetful
functor from $\BC_{\BBT}$ to $\BC$ has a left adjoint sending any \mbox{$f:X\to Y$} to $\ul{f}=\eta\comp f:X\to TY$. 
Like any left adjoint, this functor preserves colimits, and 
in particular coproducts. Since $|\BC|=|\BC_{\BBT}|$, this implies that coproducts 
in $\BC_{\BBT}$ exist and are lifted from $\BC$. Explicitly, $\kinl=\eta\inl:X\kto X+Y$,
$\kinr=\eta\inr:X\kto X+Y$ are the coproduct injections in $\BC_{\BBT}$ and 
$[f,g]:A+B\kto C$ is the copair of $A\kto C$ and $B\kto C$. We denote by 
$f\oplus g:A+B\kto A'+B'$ the coproduct of morphisms $f:A\kto A'$ and $g:B\kto B'$ in
$\BC_{\BBT}$. Besides $\BC_{\BBT}$, we consider the category $\BC^{\BBT}$ 
of \emph{(Eilenberg-Moore) algebras} for $\BBT$, whose objects are pairs $(A,a:TA\to A)$, satisfying laws: $a\comp\eta=\id$, $a\comp (Ta)=a\comp\mu$;
see~\cite{Maclane71} for more details.

We call on the standard facts on \emph{($F$-)coalgebras}~\cite{Rutten96}, which are pairs 
of the form $(X,f)$ with \emph{carriers} $X$ ranging over $|\BC|$ and 
\emph{transition structures} $f$ ranging over $\Hom_{\BC}(X,FX)$ for a fixed endofunctor $F$. 
Coalgebras together with morphisms of the carriers compatible
with the transition structure form a category. The final $F$-coalgebra, if it exists, is denoted
$(\nu F,\out)$. By Lambek's lemma, $\out$ is an isomorphism, whose inverse $\tuo$ 
can be obtainend as $\coit(F\out)$ where for any coalgebra $(X,f:X\to FX)$ we denote by $\coit f$ the 
unique coalgebra morphism $X\to\nu F$ to the final coalgebra.
 
\begin{figure}[t!]
  \tikzset{
    font=\tiny,
    nonterminal/.style={
      rectangle,
      minimum size=6mm,
      very thick,
      draw=orange!50!black!50,         
      fill=orange!50!white,
      font=\itshape
    },
    terminal/.style={
      scale=.5,
      circle,
      inner sep=0pt,
      thin,draw=black!50,
      top color=white,bottom color=black!20,
      font=\ttfamily
    },
    iterated/.style={
      fill=green!20,
      thick,
      draw=green!50
    },
    natural/.style={
      circle,
      minimum size=4mm,
      inner sep=2pt,
      thin,draw=black!50,
      top color=white,bottom color=black!20,
      font=\ttfamily},
    skip loop/.style={to path={-- ++(0,#1) -| (\tikztotarget)}}
  }

  {
    \tikzset{nonterminal/.append style={text height=1.5ex,text depth=.25ex}}
    \tikzset{natural/.append style={text height=1.5ex,text depth=.25ex}}
  }
  \captionsetup[subfigure]{labelformat=empty,justification=justified,singlelinecheck=false}
  \pgfdeclarelayer{background}
  \pgfdeclarelayer{foreground}
  \pgfsetlayers{background,main,foreground}
    \begin{subfigure}{\textwidth}
    \centering
    \caption{Fixpoint:}
    \vspace{-2ex}
    \raisebox{-.5\height}{
    \begin{tikzpicture}[
      point/.style={coordinate},>=stealth',thick,draw=black!50,
      tip/.style={->,shorten >=0.007pt},every join/.style={rounded corners},
      hv path/.style={to path={-| (\tikztotarget)}},
      vh path/.style={to path={|- (\tikztotarget)}},
      text height=1.5ex,text depth=.25ex 
      ]
      \node [nonterminal] (f) {$f$};
      \draw [<-] (f.west) -- +(-1,0) node [midway,above] {$X$};
      \path [<-,draw] ($(f.west)+(-0.5,0)$) -- +(0,-0.8) -| ($(f.east)+(0.5,-0.15)$) node [pos=0.8,right] {$X$} -- +(-0.5,0);
      \draw [->] (f.east)++(0,0.15) -- +(1,0) node [midway,above] {$Y$};
      
      \begin{pgfonlayer}{background}
        \draw [iterated] ($(f.north west)+(-0.25,0.25)$) rectangle ($(f.south east)+(0.25,-0.25)$);
      \end{pgfonlayer}
    \end{tikzpicture}
    }
    ~~=~~
    \raisebox{-.5\height}{
    \begin{tikzpicture}[
      point/.style={coordinate},>=stealth',thick,draw=black!50,
      tip/.style={->,shorten >=0.007pt},every join/.style={rounded corners},
      hv path/.style={to path={-| (\tikztotarget)}},
      vh path/.style={to path={|- (\tikztotarget)}},
      text height=1.5ex,text depth=.25ex 
      ]
      \node [nonterminal] (f) {$f$};
      \node [nonterminal] (f2) at ($(f.east)+(1.5,-0.15)$) {$f$};
      \draw [<-] (f.west) -- +(-1,0) node [midway,above] {$X$};
      \draw [->] (f.east)++(0,-0.15) -- (f2.west) node [pos=0.25,below] {$X$};
      \path [<-,draw] (f2.west)++(-0.5,0) -- ++(0,-0.8) -| ($(f2.east)+(0.5,-0.15)$) node [near end,right] {$X$} -- +(-0.5,0);
      \draw [->] (f.east)++(0,0.15) -- node [midway,above] {$Y$} ++(0.5,0) -- ++(0,0.5) -- ++(3,0);
      \draw [->] (f2.east)++(0,0.15) -- ++(0.5,0) -- node [midway,right] {$Y$} ++(0,0.65);

      \begin{pgfonlayer}{background}
        \draw [iterated] ($(f2.north west)+(-0.25,0.25)$) rectangle ($(f2.south east)+(0.25,-0.25)$);
      \end{pgfonlayer}
    \end{tikzpicture}
    }
  \end{subfigure}
  \par
  \begin{subfigure}{\textwidth}
    \centering
    \caption{Naturality:}
    \raisebox{-.5\height}{
    \begin{tikzpicture}[
      point/.style={coordinate},>=stealth',thick,draw=black!50,
      tip/.style={->,shorten >=0.007pt},every join/.style={rounded corners},
      hv path/.style={to path={-| (\tikztotarget)}},
      vh path/.style={to path={|- (\tikztotarget)}},
      text height=1.5ex,text depth=.25ex 
      ]
      \node [nonterminal] (f) {$f$};
      \node [nonterminal] (g) at ($(f.east)+(1.5,0.15)$) {$g$};
      \draw [<-] (f.west) -- +(-1,0) node [midway,above] {$X$};
      \path [<-,draw] (f.west)++(-0.5,0) -- ++(0,-0.8) -| ($(f.east)+(0.5,-0.15)$) node [near end,right] {$X$}  -- ++(-0.5,0);
      \draw [->] ($(f.east)+(0,0.15)$) -- (g) node [midway,above] {$Y$};
      \draw [->] (g.east) -- +(1,0) node [midway,above] {$Z$};
      \begin{pgfonlayer}{background}
        \draw [iterated] ($(f.north west)+(-0.25,0.25)$) rectangle ($(f.south east)+(0.25,-0.25)$);
      \end{pgfonlayer}
    \end{tikzpicture}
    }
    ~~=~~
    \raisebox{-.5\height}{
    \begin{tikzpicture}[
      point/.style={coordinate},>=stealth',thick,draw=black!50,
      tip/.style={->,shorten >=0.007pt},every join/.style={rounded corners},
      hv path/.style={to path={-| (\tikztotarget)}},
      vh path/.style={to path={|- (\tikztotarget)}},
      text height=1.5ex,text depth=.25ex 
      ]
      \node [nonterminal] (f) {$f$};
      \node [nonterminal] (g) at ($(f.east)+(1.5,0.15)$) {$g$};
      \draw [<-] (f.west) -- +(-1,0) node [midway,above] {$X$};
      \path [<-,draw] (f.west)++(-0.5,0) 
        -- ++(0,-0.75) 
        -| ($(g.east)+(0.5,-0.5)$) node [near end,right] {$X$} 
        -| ($(f.east)+(0.6,-0.15)$) 
        -- ++(-0.6,0);
      \draw [->] ($(f.east)+(0,0.15)$) -- (g) node [midway,above] {$Y$};
      \draw [->] (g.east) -- +(1,0) node [midway,above] {$Z$};
      \begin{pgfonlayer}{background}
        \draw [iterated] ($(f.north west)+(-0.25,0.25)$) rectangle ($(g.south east)+(0.25,-0.35)$);
      \end{pgfonlayer}
    \end{tikzpicture}
    }
  \end{subfigure}
  \par\medskip
  \begin{subfigure}{\textwidth}
    \centering
    \vspace{1ex}
    \caption{Codiagonal:}
    \vspace{-1ex}
    \raisebox{-.5\height}{
    \begin{tikzpicture}[
      point/.style={coordinate},>=stealth',thick,draw=black!50,
      tip/.style={->,shorten >=0.007pt},every join/.style={rounded corners},
      hv path/.style={to path={-| (\tikztotarget)}},
      vh path/.style={to path={|- (\tikztotarget)}},
      text height=1.5ex,text depth=.25ex 
      ]
      \node [nonterminal,minimum height=1.2cm] (f) {$g$};
      \draw [<-] (f.west) -- +(-1,0) node [midway,above] {$X$};
      \draw [->] (f.east)++(0,0.4) -- ++(1.5,0) node [pos=0.3,above] {$Y$};
      \draw [->] (f.east) -- ++(1,0) --node [midway,right] {$X$}  ++(0,-1.1) -| ($(f.west)+(-0.5,0)$);
      \draw [->] (f.east)++(0,-0.4) -- node [midway,below] {$X$} ++(0.5,0) -- ++(0,0.4);
      \begin{pgfonlayer}{background}
        \draw [iterated] ($(f.north west)+(-0.25,0.25)$) rectangle ($(f.south east)+(0.75,-0.25)$);
      \end{pgfonlayer}
    \end{tikzpicture}
    }
    ~~=~~
    \raisebox{-.5\height}{
    \begin{tikzpicture}[
      point/.style={coordinate},>=stealth',thick,draw=black!50,
      tip/.style={->,shorten >=0.007pt},every join/.style={rounded corners},
      hv path/.style={to path={-| (\tikztotarget)}},
      vh path/.style={to path={|- (\tikztotarget)}},
      text height=1.5ex,text depth=.25ex 
      ]
      \node [nonterminal,minimum height=1cm] (f) {$g$};
      \draw [<-] (f.west) -- +(-1.6,0) node [pos=0.7,above] {$X$};
      \draw [->] (f.east)++(0,0.3) -- ++(1.5,0) node [pos=0.3,above] {$Y$};
      \draw [->] (f.east) -- ++(1.1,0) -- node [midway,right] {$X$} ++(0,-1.35) -| ($(f.west)+(-0.95,0)$);
      \draw [->] (f.east)++(0,-0.3) -- ++(0.5,0) -- node [midway,right] {$X$} ++(0,-0.65) -| ($(f.west)+(-0.5,0)$);
      \begin{pgfonlayer}{background}
        \draw [iterated] ($(f.north west)+(-0.75,0.35)$) rectangle ($(f.south east)+(0.85,-0.6)$);
        \draw [iterated,fill=green!35] ($(f.north west)+(-0.25,0.2)$) rectangle ($(f.south east)+(0.25,-0.25)$);
      \end{pgfonlayer}
    \end{tikzpicture}
    }
  \end{subfigure}
  \par
  \begin{subfigure}{\textwidth}
    \caption{Uniformity:}
    \centering
\begin{tabular}{rcl}
   \raisebox{-.5\height}{
    \begin{tikzpicture}[
      point/.style={coordinate},>=stealth',thick,draw=black!50,
      tip/.style={->,shorten >=0.007pt},every join/.style={rounded corners},
      hv path/.style={to path={-| (\tikztotarget)}},
      vh path/.style={to path={|- (\tikztotarget)}},
      text height=1.5ex,text depth=.25ex 
      ]
      \node [nonterminal,fill=blue!20,draw=blue!50] (h) {$h$};
      \node [nonterminal] (f) at ($(h.east)+(1.5,0)$) {$f$};
      \draw [<-] (h.west) -- +(-1,0) node [midway,above] {$Z$};
      \draw [->] (h.east) -- (f.west) node [midway,above] {$X$};
      \draw [->] (f.east)++(0,0.15) -- +(1,0) node [midway,above] {$Y$};
      \draw [->] (f.east)++(0,-0.15) -- +(1,0) node [midway,below] {$X$};
    \end{tikzpicture}
    }
    &$~~=~~$&
    \raisebox{-.5\height}{
    \begin{tikzpicture}[
      point/.style={coordinate},>=stealth',thick,draw=black!50,
      tip/.style={->,shorten >=0.007pt},every join/.style={rounded corners},
      hv path/.style={to path={-| (\tikztotarget)}},
      vh path/.style={to path={|- (\tikztotarget)}},
      text height=1.5ex,text depth=.25ex 
      ]
      \node [nonterminal] (f) {$g$};
      \node [nonterminal,fill=blue!20,draw=blue!50] (h) at ($(f.east)+(1.5,-0.15)$) {$h$};
      \draw [<-] (f.west) -- +(-1,0) node [midway,above] {$Z$};
      \draw [->] (f.east)++(0,-0.15) -- (h.west) node [midway,below] {$Z$};
      \draw [->] (f.east)++(0,0.15) -- ++(0.65,0) -- ++(0,0.4) node [midway,left] {$Y$} -- ++(2.35,0);
      \draw [->] (h.east) -- +(1,0) node [midway,above] {$X$};
    \end{tikzpicture}
    }\\
    &$\Downarrow$&\\
    \raisebox{-.5\height}{
    \begin{tikzpicture}[
      point/.style={coordinate},>=stealth',thick,draw=black!50,
      tip/.style={->,shorten >=0.007pt},every join/.style={rounded corners},
      hv path/.style={to path={-| (\tikztotarget)}},
      vh path/.style={to path={|- (\tikztotarget)}},
      text height=1.5ex,text depth=.25ex 
      ]
      \node [nonterminal,fill=blue!20,draw=blue!50] (h) {$h$};
      \node [nonterminal] (f) at ($(h.east)+(1.5,0)$) {$f$};
      \draw [<-] (h.west) -- +(-1,0) node [midway,above] {$Z$};
      \draw [->] (h.east) -- (f.west) node [midway,above] {$X$};
      \draw [->] (f.east)++(0,0.15) -- +(1,0) node [midway,above] {$Y$};
      \path [<-,draw] (f.west)++(-0.5,0) -- ++(0,-0.8) -|
      ($(f.east)+(0.5,-0.15)$) node [pos=0.8,right] {$X$} -- ++(-0.5,0);
      \begin{pgfonlayer}{background}
        \draw [iterated] ($(f.north west)+(-0.25,0.25)$) rectangle ($(f.south east)+(0.25,-0.25)$);
      \end{pgfonlayer}
    \end{tikzpicture}
    }
    &$~~=~~$&
    \raisebox{-.5\height}{
    \begin{tikzpicture}[
      point/.style={coordinate},>=stealth',thick,draw=black!50,
      tip/.style={->,shorten >=0.007pt},every join/.style={rounded corners},
      hv path/.style={to path={-| (\tikztotarget)}},
      vh path/.style={to path={|- (\tikztotarget)}},
      text height=1.5ex,text depth=.25ex 
      ]
      \node [nonterminal] (f) {$g$};
      \draw [<-] (f.west) -- +(-1,0) node [midway,above] {$Z$};
      \path [<-,draw] ($(f.west)+(-0.5,0)$) -- +(0,-0.8) -| ($(f.east)+(0.5,-0.15)$) node [pos=0.8,right] {$Z$} -- +(-0.5,0);
      \draw [->] (f.east)++(0,0.15) -- +(1,0) node [midway,above] {$Y$};
      
      \begin{pgfonlayer}{background}
        \draw [iterated] ($(f.north west)+(-0.25,0.25)$) rectangle ($(f.south east)+(0.25,-0.25)$);
      \end{pgfonlayer}
    \end{tikzpicture}
    }
  \end{tabular}
  \end{subfigure}
\vspace{2ex}
\caption{Axioms of complete Elgot monads.}
\label{fig:ax}
\end{figure}


\section{Complete Elgot Monads for Iteration}\label{sec:emon}
Complete Elgot monads are a slight generalization of Elgot monads from~\cite{AdamekMiliusEtAl10,JirAdamekMiliusEtAl11}, which in turn, for the base category being $\Set$, correspond precisely to those iteration theories of Bloom and \'Esik~\cite{BloomEsik93} that satisfy the functorial dagger implication for base morphisms. 
In the following definition cited 
from~\cite{GoncharovRauchEtAl15}, we 
follow the terminology of~\cite{SimpsonPlotkin00,BentonHyland03} 
where the same axioms were considered in the dual setting of generic parametrized recursion.
\begin{defn}[Complete Elgot monads]\label{defn:elgot}
  A \emph{complete Elgot monad} is a monad $\BBT$ equipped with an
  operator $\argument^\istar$, called \emph{iteration}, that assigns
  to each morphism $f:X\kto Y+X$ a morphism $f^{\istar}:X\kto Y$
  such that the following axioms hold:
\begin{itemize}
  \item\emph{fixpoint:} $f^{\istar}=[\eta, f^{\istar}]\klcomp f$, for $f:X\kto Y+X$;
  \item\emph{naturality:} $g \klcomp f^{\istar} = ((g\oplus\eta)\klcomp f)^{\istar}$ for $g : Y \kto Z$;
  \item\emph{codiagonal\footnote{The codiagonal axiom is often written as $((\eta\oplus\ul\nabla) \klcomp g)^{\istar} = (g^{\istar})^{\istar}$ implicitly alluding to the canonical isomorphism $Y+(X+X)\cong (Y+X)+X$.}:} $([\eta,\kinr] \klcomp g)^{\istar} = (g^{\istar})^{\istar}$ for  $g : X \kto (Y + X) + X$;
  \item\emph{uniformity:} $f \klcomp \ul{h} = (\eta\oplus\ul{h}) \klcomp g$ implies
	$f^{\istar} \klcomp \ul{h} = g^{\istar}$ for  $g: Z \kto Y + Z$ and
	$h: Z \to X$.
\end{itemize}
\end{defn}
The above axioms of iteration can be comprehensibly represented in a flowchart-style
as in Fig.~\ref{fig:ax}. Here the feedback loops correspond to iteration and the
colored frames indicate the scope of the constructs being iterated. 
We believe that this presentation illustrates that these axioms are natural and desirable laws of iteration. For example, the naturality axiom expresses the
fact that the scope of the iteration can be stretched to embrace a function 
postprocessing  the output of the terminating branch. There is an obvious similarity
between the axioms in Fig.~\ref{fig:ax} and the axioms of \emph{traced monoidal 
categories}~\cite{JoyalStreetEtAl96}. In fact, Hasegawa~\cite{Hasegawa97} proved that there is an equivalent presentation of a dagger operation satisfying the above axioms in terms of a uniform trace operator w.r.t.~coproducts (actually, Hasegawa worked in the dual setting with products). 
Note that the present axioms make use of coproduct injections and the codiagonal morphism, while the trace axioms can be formulated more generally for any monoidal product.

One standard source of examples for complete Elgot monads is a suitable enrichment
of the Kleisli category $\BC_{\BBT}$ over complete partial orders. 
\begin{expl}\textbf{($\omega$-continuous monads)}\label{def:omega-cont}
  An \emph{$\omega$-continuous} monad consists of a monad $\BBT$ such
  that the Kleisli category $\BC_{\BBT}$ is enriched over the category
  $\Cppo$ of $\omega$-complete partial orders
  with bottom $\bot$ and
  (nonstrict) continuous maps; moreover, composition in $\BC$ is
  required to be left strict and composition in $\BC_{\BBT}$ right
  strict: $\bot\comp f=\bot$, $f \klcomp\bot=\bot$; equivalently,
  $\bot$ is a \emph{constant} of $\BBT$. Note that it follows that
  copairing in $\BC_{\BBT}$ is continous in both arguments; for
  $\bigsqcup_i [f_i, g]$ is a morphism satisfying
  $(\bigsqcup_i [f_i, g])\comp \kinl = \bigsqcup_i f_i$ and
  $(\bigsqcup_i [f_i, g])\comp \kinr = g$ by the continuity of composition,
  whence $\bigsqcup_i [f_i, g] = [\bigsqcup_i f_i , g]$ (and similarly
  for continuity in the second argument).

It is shown in~\cite{GoncharovRauchEtAl15} that an $\omega$-continuous monad is
a complete Elgot monad with $e^\istar$ calculated as the least fixed point of the
map $f\mapsto [\eta,f]\klcomp e$. This yields the powerset monad $\PSet$, the \emph{Maybe-monad}
$(\argument+1)$, or the nondeterministic state monad $\PSet(\argument\times S)^S$
as examples of complete Elgot monads on $\Set$. The \emph{lifting monad} $(\argument)_{\bot}$
is a complete Elgot monad on the category of complete partial orders without bottom.
\end{expl}
Another principal source of examples are free complete Elgot monads for which
the iteration of guarded morphisms is uniquely defined.
\begin{expl}\textbf{(Free complete Elgot monads)}
Suppose $\BBT$ is the initial complete Elgot monad. It is shown in~\cite{GoncharovRauchEtAl15} that
whenever the functor $\BBT_{\Sigma}$ defined by~\eqref{eq:TF} exists, it yields the 
\emph{free complete Elgot monad on $\Sigma$} (note that the original $\BBT$ is the
free complete Elgot monads on $\Sigma$ being the constant functor on the initial object of~$\BC$). 
On $\Set$ (more generally, on any \emph{hyperextensive category}~\cite{AdamekBorgerEtAl08}) 
the initial complete Elgot monad $\BBT$ is the Maybe-monad $\argument+1$.
\end{expl}
In comparison to the previous work~\cite{GoncharovRauchEtAl15},
Definition~\ref{defn:elgot} remarkably drops the axiom of
\emph{dinaturality} (see Fig.~\ref{fig:dina}). The reason for it is
that this axiom turns out to be derivable, which is a fact that was
recently discovered and formalized on the level of abstract iteration
theories~\cite{EsikGoncharov16}. Corollary~6 from \emph{op.cit.}~can
be couched in present terms (modulo the terminological change:
\emph{parameter identity} instead of \emph{naturality}, \emph{double
  dagger} instead of \emph{codiagonal} and \emph{dagger implication
  for base morphisms} instead of \emph{uniformity}) as follows:
\begin{figure}[t!]
  \tikzset{
    font=\tiny,
    nonterminal/.style={
      rectangle,
      minimum size=6mm,
      very thick,
      draw=orange!50!black!50,         
      fill=orange!50!white,
      font=\itshape
    },
    terminal/.style={
      scale=.5,
      circle,
      inner sep=0pt,
      thin,draw=black!50,
      top color=white,bottom color=black!20,
      font=\ttfamily
    },
    iterated/.style={
      fill=green!20,
      thick,
      draw=green!50
    },
    natural/.style={
      circle,
      minimum size=4mm,
      inner sep=2pt,
      thin,draw=black!50,
      top color=white,bottom color=black!20,
      font=\ttfamily},
    skip loop/.style={to path={-- ++(0,#1) -| (\tikztotarget)}}
  }

  {
    \tikzset{nonterminal/.append style={text height=1.5ex,text depth=.25ex}}
    \tikzset{natural/.append style={text height=1.5ex,text depth=.25ex}}
  }
  \captionsetup[subfigure]{labelformat=empty,justification=justified,singlelinecheck=false}
  \pgfdeclarelayer{background}
  \pgfdeclarelayer{foreground}
  \pgfsetlayers{background,main,foreground}
  \begin{subfigure}{\textwidth}
    \centering
    \vspace{2.5ex}
    \raisebox{-.5\height}{
    \begin{tikzpicture}[
      point/.style={coordinate},>=stealth',thick,draw=black!50,
      tip/.style={->,shorten >=0.007pt},every join/.style={rounded corners},
      hv path/.style={to path={-| (\tikztotarget)}},
      vh path/.style={to path={|- (\tikztotarget)}},
      text height=1.5ex,text depth=.25ex 
      ]
      \node [nonterminal] (f) {$g$};
      \node [nonterminal] (g) at ($(f.east)+(1.5,-0.15)$) {$h$};
      \draw [<-] (f.west) -- +(-1,0) node [midway,above] {$X$};
      \path [<-,draw] (f.west)++(-0.5,0) -- ++(0,-0.95) -| ($(g.east)+(0.85,-0.15)$) node [near end,right] {$X$} -- ++(-0.85,0);
      \draw [->] (f.east)++(0,0.15) -- ++(0.65,0) node [midway,above] {$Y$} -- ++(0,0.4) -- ++(2.5,0);
      \draw [->] (f.east)++(0,-0.15) -- (g.west) node [midway,below] {$Z$};
      \draw [->] (g.east)++(0,0.15) -- ++(0.25,0) -- ++(0,0.55) node [midway,right] {$Y$};

      \begin{pgfonlayer}{background}
        \draw [iterated] ($(f.north west)+(-0.25,0.5)$) rectangle ($(g.south east)+(0.6,-0.25)$);
      \end{pgfonlayer}
    \end{tikzpicture}
    }
    ~~=~~
    \raisebox{-.5\height}{
    \begin{tikzpicture}[
      point/.style={coordinate},>=stealth',thick,draw=black!50,
      tip/.style={->,shorten >=0.007pt},every join/.style={rounded corners},
      hv path/.style={to path={-| (\tikztotarget)}},
      vh path/.style={to path={|- (\tikztotarget)}},
      text height=1.5ex,text depth=.25ex 
      ]
      \node [nonterminal] (h) at ($(f.west)+(-1.2,0.15)$) {$g$};
      \node [nonterminal] (f) {$h$};
      \node [nonterminal] (g) at ($(f.east)+(1.5,-0.15)$) {$g$};
      \draw [<-] (h.west) -- ++(-0.5,0) node [midway,above] {$X$};
      \draw [->] (h.east)++(0,0.15) -- ++(0.4,0) -- ++(0,0.75) node [midway,left] {$Y$} -- ++(3.75,0) -- ++(0,-0.5);
      \draw [<-] (f.west) -- ($(h.east)+(0,-0.15)$) node [pos=0.8,below] {$Z$};
      \path [<-,draw] (f.west)++(-0.5,0) -- ++(0,-0.95) -| ($(g.east)+(0.85,-0.15)$) node [near end,right] {$Z$} -- ++(-0.85,0);
      \draw [->] (f.east)++(0,0.15) -- ++(0.65,0) node [midway,above] {$Y$} -- ++(0,0.4) -- ++(2.5,0);
      \draw [->] (f.east)++(0,-0.15) -- (g.west) node [midway,below] {$X$};
      \draw [->] (g.east)++(0,0.15) -- ++(0.25,0) -- ++(0,0.55) node [midway,right] {$Y$};

      \begin{pgfonlayer}{background}
        \draw [iterated] ($(f.north west)+(-0.25,0.5)$) rectangle ($(g.south east)+(0.6,-0.25)$);
      \end{pgfonlayer}
    \end{tikzpicture}
    }
  \end{subfigure}
  \par\medskip
\caption{Dinaturality axiom.}
\label{fig:dina}
\end{figure}


%
%
%
%
%
%
%
%
%
%
%
%
%
%
%
%
%
\begin{prop}[Dinaturality]\label{prop:dinaturality-from-double}
Given $g : X \kto Y + Z$ and $h:Z\kto Y+X$, then
\[([\kinl, h] \klcomp g)^{\istar} = [\eta, ([\kinl,g] \klcomp h)^{\istar}]\klcomp g\]  
\end{prop}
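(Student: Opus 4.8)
The plan is to derive dinaturality from a single ``interleaved'' iteration. Writing $\inl\colon X\to X+Z$ and $\inr\colon Z\to X+Z$ for the coproduct injections, I would form
\[
  f \;=\; \bigl[\,(\eta\oplus\ul{\inr})\klcomp g,\;\; (\eta\oplus\ul{\inl})\klcomp h\,\bigr]\;\colon\; X+Z\kto Y+(X+Z),
\]
which takes one $g$-step from the $X$-part and one $h$-step from the $Z$-part, in each case feeding the non-terminating branch back into the common iteration space $X+Z$. Unfolding $f^\istar$ once by the fixpoint axiom and simplifying the resulting copairings (precomposition with a morphism of the shape $\ul{(-)}$ only reindexes a copair) gives $f^\istar\klcomp\kinl = [\eta,\,f^\istar\klcomp\kinr]\klcomp g$. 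Hence it suffices to prove the two ``partial solution'' identities $f^\istar\klcomp\kinl = ([\kinl,h]\klcomp g)^\istar$ and $f^\istar\klcomp\kinr = ([\kinl,g]\klcomp h)^\istar$: the first rewrites the left-hand side of the proposition, and combining the displayed fixpoint equation with the second rewrites it into $[\eta,\,([\kinl,g]\klcomp h)^\istar]\klcomp g$. Since $f$ is built symmetrically from $(X,g)$ and $(Z,h)$, the second identity follows from the first by uniformity along the coproduct-swap isomorphism, so only $f^\istar\klcomp\kinl = ([\kinl,h]\klcomp g)^\istar$ remains.

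To establish that identity I would use the codiagonal and uniformity axioms in turn. First I would construct a morphism
\[
  p \;\colon\; X+Z \;\kto\; \bigl(Y+(X+Z)\bigr)+(X+Z)
\]
routing the $Y$-exits of $g$ and $h$ into $Y$, the $Z$-exits of $g$ into the \emph{right} copy of $X+Z$, and the $X$-exits of $h$ into the \emph{left} copy, and verify by a direct computation with the coproduct structure that $[\eta,\kinr]\klcomp p = f$; the codiagonal axiom then yields $f^\istar = (p^\istar)^\istar$. Next I would compute $p^\istar\colon X+Z\kto Y+(X+Z)$ explicitly: the only recursive exits of $p$ are the $Z$-exits of $g$, and they lead, via the $Z$-component (one $h$-step), to non-recursive exits only, so the iteration closes after at most two unfoldings, giving
\[
  p^\istar \;=\; (\eta\oplus\ul{\inl})\klcomp\bigl[\,[\kinl,h]\klcomp g,\;\; h\,\bigr].
\]
Finally, uniformity applied to the system $p^\istar$ along the coprojection $\kinl\colon X\kto X+Z$ --- whose hypothesis $p^\istar\klcomp\kinl = (\eta\oplus\ul{\inl})\klcomp([\kinl,h]\klcomp g)$ is read straight off the displayed form of $p^\istar$ --- yields $(p^\istar)^\istar\klcomp\kinl = ([\kinl,h]\klcomp g)^\istar$, i.e.\ $f^\istar\klcomp\kinl = ([\kinl,h]\klcomp g)^\istar$, as required.

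Everything apart from the construction of $p$ is routine: the fixpoint unfolding and both uniformity hypotheses are just copair reindexings. The main obstacle will be designing $p$ so that it does two jobs at once --- collapsing under $[\eta,\kinr]\klcomp(-)$ to exactly the interleaved system $f$, while having a dagger $p^\istar$ that is both solvable in finitely many steps and of precisely the form $(\eta\oplus\ul{\inl})\klcomp[\,\cdot\,]$ that uniformity can match against $[\kinl,h]\klcomp g$ --- and making the associativity isomorphisms $Y+(X+Z)\cong(Y+X)+Z$ line up on the nose in both checks. All remaining choices are then forced.
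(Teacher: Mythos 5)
Your proposal is correct, but it is a genuinely different route from the paper's: the paper does not prove this proposition at all --- it is imported as Corollary~6 of the cited work of \'Esik and Goncharov on abstract iteration theories --- whereas you give a direct, self-contained derivation from the axioms of Definition~\ref{defn:elgot}. I checked the steps: with $f=[(\eta\oplus\ul\inr)\klcomp g,(\eta\oplus\ul\inl)\klcomp h]$ the fixpoint unfolding $f^\istar\klcomp\kinl=[\eta,f^\istar\klcomp\kinr]\klcomp g$ is a copair reindexing; your $p:X+Z\kto (Y+(X+Z))+(X+Z)$ (routing $Z$-exits of $g$ to the outer copy and $X$-exits of $h$ to the inner one) does satisfy $[\eta,\kinr]\klcomp p=f$ on the nose, so codiagonal gives $f^\istar=(p^\istar)^\istar$ with no associativity juggling needed; the closed form $p^\istar=(\eta\oplus\ul\inl)\klcomp[[\kinl,h]\klcomp g,\,h]$ is not obtained ``because the recursion terminates'' --- daggers are not computed by unfolding in general --- but it \emph{is} derivable by two applications of the fixpoint axiom (first $p^\istar\klcomp\kinr=(\eta\oplus\ul\inl)\klcomp h$, since $p\klcomp\kinr$ factors through $\kinl$, then $p^\istar\klcomp\kinl=[\kinl,p^\istar\klcomp\kinr]\klcomp g$), which is exactly the formal content of your remark; the two uniformity applications (along the base injection $\inl:X\to X+Z$ and along the swap $Z+X\to X+Z$) have hypotheses that are again pure copair computations, and they legitimately use only base morphisms as the axiom requires. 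So the argument goes through using only fixpoint, codiagonal and uniformity --- naturality is never invoked. What each approach buys: the paper's citation places the fact in the general iteration-theory setting (parameter identity, double dagger, functorial implication for base morphisms), while your proof is elementary, stays entirely inside the Kleisli calculus of the monad, and makes visible which axioms carry the weight; in a final write-up you should simply replace the operational phrase ``closes after at most two unfoldings'' by the two explicit fixpoint applications, and spell out the swap-uniformity instance, since these are the only places where a reader could suspect a gap.
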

The codiagonal axiom in Definition~\ref{defn:elgot} can equivalently be replaced 
by a form of the well-known \emph{Beki\'{c} identity}, see~\cite{BloomEsik93}.
\begin{prop}[Beki\'{c} identity]\label{prop:bekic-from-elgot}
  A complete Elgot monad $\BBT$ is, equivalently, a monad satisfying the
  \emph{fixpoint}, \emph{naturality} and \emph{uniformity} axioms (as in
  Definition~\ref{defn:elgot}), and the \emph{Beki\'{c} identity}
  \begin{align*}
   (T\alpha \comp [f, g])^{\istar} = [\eta,
    h^{\istar}] \klcomp [\kinr, g^{\istar}]
  \end{align*}
  where $g : X \kto (Z + Y) + X$, $f : Y \kto (Z + Y) + X$,
  $h = [\eta, g^{\istar}] \klcomp f : Y \kto Z + Y$, with
  $\alpha : (A + B) + C \to A + (B + C)$ being the obvious
  associativity morphism.
\end{prop}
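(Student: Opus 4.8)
The plan is to establish the two implications of the equivalence separately; since the \emph{fixpoint}, \emph{naturality} and \emph{uniformity} axioms occur in both formulations, I only need to show (a) that every complete Elgot monad validates the Beki\'c identity, and (b) that, conversely, the \emph{codiagonal} axiom is a consequence of \emph{fixpoint}, \emph{uniformity} and the Beki\'c identity. At bottom this is the classical observation that, relative to a Conway iteration operator, the double-dagger identity and the pairing (Beki\'c--De Bakker--Scott) identity are interderivable~\cite{BloomEsik93}, which I would re-run in the present, more general categorical setting.

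For (a) I would first note that, by Proposition~\ref{prop:dinaturality-from-double}, a complete Elgot monad also satisfies \emph{dinaturality}, so the full Conway calculus is available. Put $w=(T\alpha\comp[f,g])^{\istar}:Y+X\kto Z$ and split it along the coproduct as $w=[w_1,w_2]$; the fixpoint axiom then yields the mutually recursive equations $w_1=[[\eta,w_1],w_2]\klcomp f$ and $w_2=[[\eta,w_1],w_2]\klcomp g$. The idea is the familiar ``solve $X$ first, then $Y$'': using \emph{naturality} together with \emph{dinaturality} one reorganizes the single iteration over $Y+X$ into an inner iteration over $X$ --- which reproduces $g^{\istar}$ --- followed by a residual iteration over $Y$, and the \emph{codiagonal} axiom is exactly what collapses the feedback arising when the $Y$-stage re-enters $g$, so that the residual iteration is precisely $h^{\istar}$ for $h=[\eta,g^{\istar}]\klcomp f$. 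Reassembling via \emph{naturality} gives $w_1=h^{\istar}$ and $w_2=[\eta,h^{\istar}]\klcomp g^{\istar}$, i.e.\ $w=[\eta,h^{\istar}]\klcomp[\kinr,g^{\istar}]$.

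For (b) the argument is short. I would instantiate the Beki\'c identity at $f:=g$, where $g:X\kto(Y+X)+X$ is the morphism from the codiagonal axiom. By the fixpoint axiom $h=[\eta,g^{\istar}]\klcomp g=g^{\istar}$, hence $h^{\istar}=(g^{\istar})^{\istar}$; applying fixpoint once more to $g^{\istar}:X\kto Y+X$ collapses the right-hand side of the Beki\'c identity to $[\eta,(g^{\istar})^{\istar}]\klcomp[\kinr,g^{\istar}]=(g^{\istar})^{\istar}\klcomp\ul{\nabla}$. On the left, $[g,g]=g\klcomp\ul{\nabla}$, so the Beki\'c identity now reads $((T\alpha\comp g)\klcomp\ul{\nabla})^{\istar}=(g^{\istar})^{\istar}\klcomp\ul{\nabla}$. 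Finally I would apply the \emph{uniformity} axiom along $\nabla:X+X\to X$ to the pair $[\eta,\kinr]\klcomp g:X\kto Y+X$ and $(T\alpha\comp g)\klcomp\ul{\nabla}:X+X\kto Y+(X+X)$; the uniformity premise reduces, via the routine identity $(\id+\nabla)\comp\alpha=[\id,\inr]$, to a triviality, giving $([\eta,\kinr]\klcomp g)^{\istar}\klcomp\ul{\nabla}=((T\alpha\comp g)\klcomp\ul{\nabla})^{\istar}$. Combining the last two equalities and precomposing with the first injection $\kinl:X\kto X+X$ yields the codiagonal axiom $([\eta,\kinr]\klcomp g)^{\istar}=(g^{\istar})^{\istar}$.

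The step I expect to be the real obstacle is (a). The guiding intuition --- that nested solving coincides with simultaneous solving --- is standard, but the proof is delicate precisely because uniqueness of solutions is unavailable: only \emph{uniformity} may be used, so each component identity, in particular the identification of $w_2$ with $[\eta,h^{\istar}]\klcomp g^{\istar}$ (which one is tempted to read off at once by ``substituting the solution''), must be forced through an explicit sequence of fixpoint, naturality, dinaturality and codiagonal applications, all while carefully tracking the coproduct-associativity rearrangements hidden in $T\alpha$. Direction (b), by contrast, is routine once the right instance of the Beki\'c identity has been selected.
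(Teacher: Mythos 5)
Your direction (b) is correct and is essentially the paper's own argument: instantiate the Beki\'c identity at $f:=g$, note that $h=[\eta,g^\istar]\klcomp g=g^\istar$ by fixpoint, and transport the resulting equation along $\nabla$ by uniformity (the paper writes the same computation with $T[\id,\inr]\comp g$ where you write $[\eta,\kinr]\klcomp g$, and evaluates at $\inr$ rather than $\inl$; this is immaterial).

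Direction (a), however, contains a genuine gap: you describe the intended shape of the argument but do not carry it out, and the strategy you sketch cannot succeed as stated. From the fixpoint axiom you do obtain the component equations $w_1=[[\eta,w_1],w_2]\klcomp f$ and $w_2=[[\eta,w_1],w_2]\klcomp g$ for $w=(T\alpha\comp[f,g])^\istar$, and the pair $(h^\istar,\ [\eta,h^\istar]\klcomp g^\istar)$ satisfies the same system; but since solutions of such systems are not unique in a complete Elgot monad (only uniformity is available), this observation by itself proves nothing --- as you concede, the ``explicit sequence of fixpoint, naturality, dinaturality and codiagonal applications'' is precisely what is missing, and it is the entire content of this direction. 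The paper supplies the missing idea as follows: it applies the codiagonal law not to $T\alpha\comp[f,g]$ but to the auxiliary morphism $u=T((\id+\inl)+\inr)\comp[f,g]\colon Y+X\kto (Z+(Y+X))+(Y+X)$, chosen so that $T[\id,\inr]\comp u=T\alpha\comp[f,g]$; hence the Beki\'c left-hand side equals $(u^\istar)^\istar$ by codiagonal. It then shows $u^\istar\comp\inr=T(\id+\inl)\comp g^\istar$ by uniformity and naturality, and unfolds $(u^\istar)^\istar$ by a chain of fixpoint steps together with dinaturality (available via Proposition~\ref{prop:dinaturality-from-double}) into $[\eta,h^\istar]\klcomp[\kinr,g^\istar]$. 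Some such explicit construction --- a concrete auxiliary morphism making the codiagonal law bite, followed by the equational unwinding --- is indispensable here; ``solve $X$ first, then $Y$'' is only the guiding intuition, and your proposal stops exactly where the proof begins.
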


\section{Parametrized Monads and Complete Elgot Algebras}\label{sec:cealg}
In order to study complete Elgot monads and their algebras it is helpful to make a further
abstraction step and generalize from monads to \emph{parametrized monads}~\cite{Uustalu03} (finitary parametrized monads are also called \emph{bases}~\cite{AdamekMiliusEtAl05}), which are of independent interest.
\begin{definition}\textbf{(\Pmonad)}
  A \emph{\pmonad} over $\BC$ is a functor from $\BC$ to the category of monads
  over $\BC$ and monad morphisms. More explicitly, a \pmonad{} is a bifunctor
  $\IB{}{}:\BC\times\BC\to\BC$ such that for any $X\in|\BC|$,
  $\IB{\argument}{X}:\BC\to\BC$ is a monad, and for any $f:X\to Y$, the family
  $(\IB{\id_Z}{f})_{Z}$ yields a monad morphism from $\IB{\argument}{X}$ to
  $\IB{\argument}{Y}$.
\end{definition} 
\begin{remark}
  The order of arguments in $\IB{X}{Y}$ is in agreement with~\cite{Uustalu03}
  and differs from~\cite{AdamekMiliusEtAl05} where the notation $\B{Y}{X}$
  equivalent to the present $\IB{X}{Y}$ is used. We chose the order of arguments
  to ensure agreement with the type profile of the iteration operator
  $\argument^\istar$, which is in turn in agreement with the
  expression~\eqref{eq:TF}.
\end{remark}
Following~\cite{AdamekMiliusEtAl05} we will from now on denote the unit and monad multiplication of monads
$\IB{\argument}{X}$ by $u_A^X : A \to \IB{A}{X}$ and $m_A^X :
\IB{(\IB{A}{X})}{X} \to \IB{A}{X}$, respectively. 
\begin{expl}\textbf{(Parametrized monads)}\label{ex:pmons}
We recall some standard examples of parametrized monads from~\cite{Uustalu03}; further examples can be found e.g.~in~\cite{JirAdamekMiliusEtAl08}.\smnote{If we have space left, I might include those explicitly.}
\begin{enumerate}
 \item Whenever $\BBT=(T,\eta,\argument^\klstar)$ is a monad and $\Sigma$ is a functor, $\IB{A}{X}=T(A+\Sigma X)$
   is a parametrized monad with the unit given by 
   \[
     u^X_A = \Bigl(A \xrightarrow{\inl} A + \Sigma X \xrightarrow{\eta_{A+\Sigma X}} T(A + \Sigma X)\Bigr)
   \]
   and the multiplication by
   \[
     m^X_A = \Bigl(T(T(A + \Sigma X) + \Sigma X) \xrightarrow{[\id,\eta_{A + \Sigma X}\inr]^\klstar} T(A + \Sigma X)\Bigr).
   \]
   Specifically, if $\Sigma$ is the constant functor on object $E$ then
   $\IB{X}{Y}$ is the exception monad transformer with exceptions from
   $E$~\cite{Moggi91b}. Another interesting special case is when $\BBT$ is the identity monad (cf.~Remark~\ref{rem:classicElgotalg}).
 \item $\IB{A}{X}=A\times X^\star$ is a parametrized monad with the unit and multiplication given by
   \[
     u^X_A: a \mapsto (a, \varepsilon) \quad\text{and}\quad m^X_A: (a, w, v) \mapsto (a, wv),
   \]
   where $\varepsilon$ denotes the empty word and $wv$ concatenation of words. 
 \item Given a contravariant endfunctor $H$, $\IB{A}{X}=A^{HX}$ is a parametrized monad with the unit and multiplication given by
   \[
     u^X_A: a \mapsto \lambda x.\,a \quad\text{and}\quad m^X_A: (f:HX\to (HX\to A)) \mapsto \lambda x.\, f(x)(x).
   \]
This is a generalization of the well known \emph{reader monad}, which can be recovered by instantiating $H$ with a constant functor.
\end{enumerate}
\end{expl}
The following is a straightforward extension of the notion of an algebra for a base studied in~\cite{AdamekMiliusEtAl05} to arbitrary parametrized monads.
\begin{definition}\textbf{($\IB{}{}$-algebras)}
  Given a \pmonad{} $\IB{}{}:\BC\times\BC\to\BC$, a \emph{$\IB{}{}$-algebra} is a pair $(A, a )$ consisting of an
  object $A$ of $\BC$, and an algebra for the monad $\IB{\argument}{A}$, i.e.\ a
  morphism $a : \IB{A}{A} \to A$ satisfying
  \[
    \begin{tikzcd}
      A
        \ar[r,  "u_A^A"]
        \ar[dr, "\id", swap,equal] &
      \IB{A}{A}
        \ar[d, "a "]\\
      & A
    \end{tikzcd}
    \hspace{1cm}
    \begin{tikzcd}[column sep=huge]
      \IB{(\IB{A}{A})}{A}
        \ar[r, "\IB{a}{\id}"]
        \ar[d, "m_A^A",swap] &
      \IB{A}{A}
        \ar[d, "a "] \\
      \IB{A}{A}
        \ar[r, "a "] &
      A
    \end{tikzcd}
  \]
A morphism between $\hash$-algebras $(A, a)$ and $(B, b)$ is a $\BC$-morphism
$f : A \to B$ such that $f \comp a = b \comp (\IB{f}{f})$.
\end{definition}
\iffull
\begin{expl}\label{ex:simple}
  Several examples of $\IB{}{}$-algebras have been discussed
  in~\cite{JirAdamekMiliusEtAl07,JirAdamekMiliusEtAl08}. Here we
  recall from \emph{loc.~cit.}~only the following. Consider the three
  bases $A \mathbin{\#_1} X = A + X \times X$,
  $A \mathbin{\#_2} X = A \times X^*$, and $A \mathbin{\#_3} X = BA$
  on $\Set$ where $BA$ is the free algebra with one binary operation
  on $A$ (i.e.~$BA$ consists of all finite binary trees with leaves
  labelled in $A$). Note that $\#_1$ and $\#_3$ are special cases of
  the \pmonad\ of Example~\ref{ex:pmons}(i) for $\BBT$ the indentity monad and
  $\Sigma X = X \times X$ and $\Sigma X = \emptyset$, respectively.
  The category of algebras is in each of the three cases isomorphic to
  the category of algebras with one binary operation. Later, when we
  discuss complete Elgot $\IB{}{}$-algebras, we are going to see a
  difference between these three \pmonad.
\end{expl}
\fi
For our leading example $\IB{X}{Y}=T(X+\Sigma Y)$ the category of $\IB{}{}$-algebras
can be described explicitly.
\begin{prop}\label{prop:T-Sigma-bialg}
Let $\IB{X}{Y}=T(X+\Sigma Y)$ for a monad\/ $\BBT$ and a functor $\Sigma$ on $\BC$. Then $\IB{}{}$-algebras
are precisely $\BBT$-$\Sigma$-bialgebras, i.e.~triples $(A,a,f)$ 
where $a:TA\to A$ is a $\BBT$-algebra and $f:\Sigma A\to A$ is a $\Sigma$-algebra.
\end{prop}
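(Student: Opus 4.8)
The plan is to establish a bijection between $\IB{}{}$-algebra structures on a fixed object $A$ and pairs consisting of a $\BBT$-algebra structure $a:TA\to A$ and a $\Sigma$-algebra structure $f:\Sigma A\to A$, and then check that this bijection is functorial, i.e.\ that it identifies the respective notions of morphism. First I would unfold the definition: a $\IB{}{}$-algebra on $A$ is an Eilenberg--Moore algebra $\alpha:\IB{A}{A}=T(A+\Sigma A)\to A$ for the monad $\IB{\argument}{A}$, whose unit is $u^A_A=\eta_{A+\Sigma A}\comp\inl$ and whose multiplication is $m^A_A=[\id,\eta_{A+\Sigma A}\inr]^\klstar$ as spelled out in Example~\ref{ex:pmons}(i).

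Given such an $\alpha$, I would define $a=\alpha\comp T\inl:TA\to A$ and $f=\alpha\comp\eta_{A+\Sigma A}\comp\inr:\Sigma A\to A$. Conversely, given $(a,f)$, I would set $\alpha=a\comp T[\id,f]:T(A+\Sigma A)\to A$. The bulk of the argument is then four verifications. Namely: (1) that $\alpha=a\comp T[\id,f]$ satisfies the unit law $\alpha\comp u^A_A=\id$ — immediate since $[\id,f]\comp\inl=\id$ and $a\comp\eta=\id$; (2) that it satisfies the associativity/multiplication law $\alpha\comp\IB{\alpha}{\id}=\alpha\comp m^A_A$ — this uses the $\BBT$-algebra law $a\comp\mu=a\comp Ta$ together with the naturality of $\eta$ and the defining equation of Kleisli lifting; (3) that starting from $(a,f)$, forming $\alpha$ and then extracting $(a',f')$ returns $(a,f)$; and (4) that starting from $\alpha$, extracting $(a,f)$ and reassembling gives back $\alpha$ — for this last one I would use that $[\id,f]=[\,\id,\,\alpha\comp\eta\inr\,]$ and that $\alpha$ being an $\IB{\argument}{A}$-algebra lets one "absorb" the outer $T$ via the algebra law, reducing $a\comp T[\id,f]$ to $\alpha$. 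The step I expect to be the main obstacle is (4) together with (2): the round-trip $\alpha\mapsto(a,f)\mapsto a\comp T[\id,f]=\alpha$ requires carefully rewriting $a\comp T[\id,f]$ as an instance of the $\IB{\argument}{A}$-algebra multiplication law applied to $\alpha$, exploiting the explicit shape of $m^A_A$; this is where one must be most careful that the bookkeeping with $T\inl$, $T\inr$, $\eta$ and $\klstar$ lines up, although each individual equation is routine.

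Finally I would match up morphisms. A $\IB{}{}$-algebra morphism $g:(A,\alpha)\to(B,\beta)$ satisfies $g\comp\alpha=\beta\comp\IB{g}{g}=\beta\comp T(g+\Sigma g)$; precomposing with $T\inl$ and with $\eta\inr$ respectively, and using naturality of $\eta$, yields exactly $g\comp a=b\comp Tg$ and $g\comp f=f'\comp\Sigma g$, i.e.\ $g$ is simultaneously a $\BBT$-algebra and a $\Sigma$-algebra morphism. For the converse, if $g$ is both, then $g\comp a\comp T[\id,f]=b\comp Tg\comp T[\id,f]=b\comp T([\id,f']\comp(g+\Sigma g))=\beta\comp T(g+\Sigma g)$, using functoriality of $T$ and the elementary identity $g\comp[\id,f]=[\id,f']\comp(g+\Sigma g)$. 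This shows the two categories are isomorphic, which is the assertion.
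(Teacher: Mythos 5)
Your constructions are exactly the ones the paper uses: extract $a=\alpha\comp T\inl$ and $f=\alpha\comp\eta\inr$ from a $\hash$-algebra $\alpha$, reassemble as $\alpha=a\comp T[\id,f]$, and verify the two algebra laws plus the two round trips; your treatment of morphisms (precompose with $T\inl$ and $\eta\inr$ one way, use $g\comp[\id,f]=[\id,f']\comp(g+\Sigma g)$ the other way) is also the intended argument, and in fact goes a bit beyond the written proof, which only handles objects. The one thing missing from your explicit checklist is the verification that the extracted $a=\alpha\comp T\inl$ is itself an Eilenberg--Moore algebra for $\BBT$: your items (1)--(4) show that reassembly from a bialgebra yields a $\hash$-algebra and that the two passages are mutually inverse, but without checking $a\comp\eta=\id$ and $a\comp Ta=a\comp\mu$ you have not yet shown that extraction lands in bialgebras, so the ``precisely'' in the statement is not fully established. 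This check is routine and of the same flavour as your step (2): the unit law is immediate from $\alpha\comp u^A_A=\id$, and for the multiplication law one rewrites $\alpha\comp T\inl\comp\mu=\alpha\comp\mu\comp TT\inl$ and inserts $[\id,\eta\inr]\comp\inl=\id$ so as to recognise $m^A_A\comp T\inl\comp TT\inl$, after which the multiplication law $\alpha\comp m^A_A=\alpha\comp(\IB{\alpha}{\id})$ gives $a\comp\mu=a\comp Ta$ --- exactly the computation the paper carries out. With that check added, your proof matches the paper's.
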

\begin{corollary}\label{cor:T-Sigma-bialg}
Let $\IB{X}{Y}=T(X+Y)$ for a monad\/ $\BBT$ on $\BC$. The category $\BC^{\BBT}$ 
of\/ $\BBT$-algebras is isomorphic to the full subcategory of those $\IB{}{}$-algebras 
$a:T(A+A)\to A$, which factor through $T\nabla$. 
\end{corollary}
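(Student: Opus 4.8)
The plan is to instantiate Proposition~\ref{prop:T-Sigma-bialg} at $\Sigma=\Id$ and then isolate, among the resulting bialgebras, exactly those whose structure map factors through $T\nabla$. For $\IB{X}{Y}=T(X+Y)$ the proposition identifies a $\IB{}{}$-algebra with a triple $(A,a,f)$, where $a:TA\to A$ is a $\BBT$-algebra and $f:A\to A$ is an arbitrary endomorphism (a $\Sigma$-algebra for $\Sigma=\Id$ being just a morphism $A\to A$, with no equational constraint). Concretely, the correspondence sends $(A,a,f)$ to the $\IB{}{}$-algebra $b=a\comp T[\id,f]:T(A+A)\to A$, with inverse $b\mapsto\bigl(A,\;b\comp T\inl,\;b\comp\eta\comp\inr\bigr)$; I would begin by recording these formulas.

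The core step is to show that $b$ factors through $T\nabla$ precisely when $f=\id_A$, i.e.\ precisely when $b=a\comp T\nabla$. Since $T\nabla\comp T\inl=T(\nabla\comp\inl)=\id$, the morphism $T\nabla$ is split epic; hence any factorization $b=c\comp T\nabla$ is unique and forced to satisfy $c=c\comp T\nabla\comp T\inl=b\comp T\inl=a$. Using naturality of $\eta$ and the unit law $a\comp\eta=\id$ of the $\BBT$-algebra, this yields $f=b\comp\eta\comp\inr=a\comp T\nabla\comp\eta\comp\inr=a\comp\eta\comp\nabla\comp\inr=a\comp\eta=\id_A$. Conversely, if $f=\id_A$ then $b=a\comp T[\id,\id]=a\comp T\nabla$ visibly factors through $T\nabla$. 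So the objects of the full subcategory in question are exactly the $\IB{}{}$-algebras $(A,\,a\comp T\nabla)$ with $(A,a)\in|\BC^{\BBT}|$, and $a=(a\comp T\nabla)\comp T\inl$ is recovered from the structure map; this is a bijection between $|\BC^{\BBT}|$ and the objects of that subcategory.

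It then remains to lift this object bijection to an isomorphism of categories. I would map $(A,a)\mapsto(A,a\comp T\nabla)$, sending every $\BBT$-algebra morphism $h$ to itself, with inverse $(A,b)\mapsto(A,b\comp T\inl)$, and check that $h:A\to B$ is a morphism of $\IB{}{}$-algebras $(A,a\comp T\nabla)\to(B,b\comp T\nabla)$ iff it is a $\BBT$-algebra morphism $(A,a)\to(B,b)$. Indeed, by naturality of the codiagonal $T\nabla\comp T(h+h)=T(h\comp\nabla)=Th\comp T\nabla$, so the $\IB{}{}$-algebra morphism condition $h\comp a\comp T\nabla=b\comp T\nabla\comp T(h+h)$ rewrites as $h\comp a\comp T\nabla=b\comp Th\comp T\nabla$, which, since $T\nabla$ is epic, amounts to $h\comp a=b\comp Th$; alternatively one may appeal to the morphism part of Proposition~\ref{prop:T-Sigma-bialg} and the vacuity of the $\Id$-algebra morphism condition. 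Assembling these observations gives the desired isomorphism. I expect no genuine obstacle; the single delicate point is the \emph{only if} half of the factorization criterion, where one must use that $T\nabla$ is split epic to force the factoring morphism to be the underlying $\BBT$-algebra, and then invoke the algebra's unit law.
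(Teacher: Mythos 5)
Your proposal is correct and takes exactly the route the paper intends: the corollary is stated as an immediate consequence of Proposition~\ref{prop:T-Sigma-bialg}, specialized to $\Sigma=\Id$, with the factorization through $T\nabla$ (split epi via $T\inl$) forcing the $\Id$-algebra component to be $\id_A$ and the unit law pinning down the factoring morphism as the underlying $\BBT$-algebra. Your explicit verification of the morphism correspondence using $T\nabla\comp T(h+h)=Th\comp T\nabla$ and epicness of $T\nabla$ is the natural way to fill in the detail the paper leaves implicit.
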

Analogously to the case of monads, we introduce $\IB{}{}$-algebras with iteration. This generalizes the definition of a \emph{complete Elgot algebra for a functor} from~\cite{AdamekMiliusEtAl06j}.
\begin{definition}\textbf{(Complete Elgot $\IB{}{}$-algebras)}
  A \textit{complete Elgot $\IB{}{}$-algebra} is a $\IB{}{}$-algebra $a  : \IB{A}{A}
  \to A$ equipped with an iteration operator
  \[
    \frac{e : X \to \IB{A}{X}}{e^{\istar} : X \to A}
  \]
  satisfying the following axioms:
  \begin{itemize}
  \item\emph{solution:} for every $e: X \to \IB{A}{X}$ we have $e^{\istar} = a  \comp (\IB{\id}{e^{\istar}}) \comp e$;
  \item\emph{functoriality:} for every $e: X \to \IB{A}{X}$, $f: Y \to \IB{A}{X}$ and $h: X \to Y$, 
$f\comp h = (\IB{\id}{h}) \comp e$ implies $f^{\istar} \comp h = e^{\istar}$;
  \item\emph{compositionality:} for every $f: Y \to \IB{A}{Y}$ and $g: X \to \IB{Y}{X}$ define
    \[
      f^{\istar} \bullet g = (X \xrightarrow{g} \IB{Y}{X} \xrightarrow{\IB{f^\istar}{\id}} \IB{A}{X})
    \]
    and $f \filledsquare g: Y+X \to \IB{A}{(Y+X)}$ by
    \[
      \begin{tikzcd}
        Y+X \ar[r, "{[u^X_Y,g]}"]
        &
        \IB{Y}{X}
        \ar[r, "\IB{f}{\id}"]
        &
        \IB{(\IB{A}{Y})}{X}
        \ar[d, "\IB{(\IB{\id}{\inl})}{\inr}"]
        \\
        \IB{A}{(Y+X)}
        &&
        \IB{(\IB{A}{(Y+X)})}{(Y+X)}
        \ar[ll, "m^{Y+X}_A"']
      \end{tikzcd}
    \]
    Compositionality states that $(f \filledsquare g)^{\istar} \comp \inr = (f^{\istar} \bullet g)^{\istar}: X \to A$.
  \end{itemize}
    A \emph{morphism} from a complete Elgot $\hash$-algebra $(A,a,\argument^\istar)$ to a complete
  Elgot $\hash$-algebra $(B,b,\argument^\iistar)$ is a $\BC$-morphism $f:A\to B$, such that
  $((\IB{f}{\id})\comp e)^\iistar=f\comp e^\istar$ for all $e:X\to \IB{A}{X}$.
  This defines the category of complete $\IB{}{}$-algebras $\cbElg(\BC)$.
\end{definition}
\begin{remark}\label{rem:classicElgotalg}
  Note that complete Elgot $\IB{}{}$-algebras for the parametrized monad $\IB A X = A + \Sigma X$ (i.e.~the parametrized monad of Example~\ref{ex:pmons} (i) for $\BBT$ the identity monad) are precisely the complete Elgot algebras for the functor $\Sigma$ introduced and studied in~\cite{AdamekMiliusEtAl06j}. 
\end{remark}
\iffull
\begin{expl}\label{ex:simple2}
  Let us come back to the three simple \pmonad\/s on $\Set$ in
  Example~\ref{ex:simple} whose algebras are in each case simply
  binary algebras.  In each of the three cases, morphisms
  $X \to \IB{A}{X}$ can be understood as a systems of mutual recursive
  equations of a certain type with variables from the set $X$, and the $\argument^\istar$ operation of
  a complete Elgot $\IB{}{}$-algebra provides a solution of a given
  system of equations. However, the type of these systems of recursive
  equations is different for each of the three \pmonad\/s. For $A \mathbin{\#_1} X = A + X \times X$, $e: X \to A + X \times X$ can be
  understood as specifying for every $x \in X$ precisely one equation of one of the two types below:
  \[
    x \approx x' * x''\quad\text{with $x', x'' \in X$}\qquad\text{or}\qquad x \approx a\quad\text{with $a \in A$}.
  \]
  The solution $e^\istar$ then provides for every $x \in X$ an element
  $x^\istar \in A$ that, when plugged into the above formal equations,
  turn them into identities in $A$ where $*$ is interpreted as the
  binary operation of $A$.

  For $A \mathbin{\#_2} X = A \times X^*$, to give a morphism $e: X \to A \times X^*$ is equivalent to give a system of recursive equations 
  which specifies for each $x \in X$ an equation
  \[
    x \approx a * x'\quad\text{with $x' \in X$, $a\in A$},
  \]
  i.e.~iteration is restricted to the second argument.

  Finally, for $A \mathbin{\#_2} X = A \times X^*$, morphisms
  $e: X \to BA$ simply specify for each $x \in X$ a binary tree
  $e(x)$, and by the solution axiom, $e^\istar(x)$ is then the
  interpretation of this binary tree in $A$. Thus, iteration is
  trivial, in other words, every binary algebra is a complete Elgot
  algebra for $\#_3$.
\end{expl}
\fi
\begin{expl}\label{ex:cpoalg}
  Continuous algebras are complete Elgot $\IB{}{}$-algebras. Consider
  any category $\BC$ that is enriched over $\Cppo$ such that composition is left strict and a \pmonad\
  $\IB{}{}$ that is \emph{locally continuous} in both arguments,
  i.e.~$\bigsqcup_i (\IB{f_i}{g_i}) = \IB{(\bigsqcup_i
    f_i)}{(\bigsqcup_i g_i)}$ holds for any $f_i: A \to B$ and
  $g_i: X \to Y$. Then every $\IB{}{}$-algebra becomes a complete
  Elgot $\IB{}{}$-algebra when equipped with the operation
  $\argument^\istar$ assigning to every $e: X \to \IB{A}{X}$ its least
  solution. In more detail, let $A$ be a $\IB{}{}$-algebra, to every $e: X \to \IB{A}{X}$ we assign $e^\istar: X \to A$ given by 
  \[
    e^\istar = \bigsqcup\nolimits_i e_i^\istar, 
  \]
  where $e_0^\istar = \bot: X \to A$ and
  $e_{i+1} = a \comp (\IB{\id}{e_i^\istar}) \comp e$. That means that
  $e^\istar$ is the least fixed point of the function
  $s \mapsto a \comp (\IB{\id}{s}) \comp e$ on $\Hom_\BC(X,A)$. The verification that this satisfies the axioms of a complete Elgot $\IB{}{}$-algebra can be found \iffull\/in the appendix.\else\/in the full version of our paper.\fi
\end{expl} 
Note that we did not require a morphism of complete Elgot $\hash$-algebras 
to be a morphism of $\hash$-algebras. Somewhat surprisingly, this follows 
automatically.
\begin{prop}\label{prop:hom}
Let $f:A\to B$ be a complete Elgot $\hash$-algebra morphism from $(A,a, \argument^\istar)$ to $(B,b, \argument^\iistar)$.
Then f is a morphism of $\hash$-algebras.
\end{prop}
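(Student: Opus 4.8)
The plan is to pin down one distinguished equation morphism whose solution \emph{is} the algebra structure, and then to transport it along $f$ using only the morphism axiom together with the \emph{functoriality} axiom. Write $a\colon\IB{A}{A}\to A$ and $b\colon\IB{B}{B}\to B$ for the two structures; we must show $f\comp a=b\comp(\IB{f}{f})$. The key observation is that the "one-step unfolding" $e_A:=\IB{\id_A}{u^A_A}\colon\IB{A}{A}\to\IB{A}{(\IB{A}{A})}$ satisfies $e_A^\istar=a$ in \emph{any} complete Elgot $\hash$-algebra structure on $(A,a)$.

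\textbf{Auxiliary claim.} I would first establish $e_A^\istar=a$ as follows. By the \emph{solution} axiom and bifunctoriality, $e_A^\istar=a\comp(\IB{\id_A}{e_A^\istar})\comp(\IB{\id_A}{u^A_A})=a\comp\IB{\id_A}{(e_A^\istar\comp u^A_A)}$. Precomposing this with $u^A_A$ and using naturality of $u^A$ in the parameter (the monad morphism $\IB{\id}{-}$ preserves units, so $\IB{\id_A}{k}\comp u^A_A=u^A_A$ for any $k\colon A\to A$) gives $e_A^\istar\comp u^A_A=a\comp u^A_A=\id_A$ by the $\hash$-algebra unit law. Substituting $e_A^\istar\comp u^A_A=\id_A$ back into the displayed identity yields $e_A^\istar=a\comp\IB{\id_A}{\id_A}=a$.

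Next I would instantiate the morphism axiom at $e:=e_A$. Bifunctoriality gives $(\IB{f}{\id})\comp e_A=\IB{f}{u^A_A}\colon\IB{A}{A}\to\IB{B}{(\IB{A}{A})}$, so by the morphism axiom and the auxiliary claim for $A$ we get $(\IB{f}{u^A_A})^\iistar=f\comp e_A^\istar=f\comp a$. It then suffices to identify $(\IB{f}{u^A_A})^\iistar$ with $b\comp(\IB{f}{f})$. For this I would check that $\IB{f}{u^A_A}$ and $e_B=\IB{\id_B}{u^B_B}\colon\IB{B}{B}\to\IB{B}{(\IB{B}{B})}$ are related by the base morphism $h:=\IB{f}{f}\colon\IB{A}{A}\to\IB{B}{B}$, that is, $(\IB{\id_B}{h})\comp(\IB{f}{u^A_A})=e_B\comp h$; both sides reduce to $\IB{f}{(u^B_B\comp f)}$ by bifunctoriality, the crucial step being $(\IB{f}{f})\comp u^A_A=u^B_B\comp f$, which itself follows by naturality of the parametrized-monad unit first in the carrier and then in the parameter. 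The \emph{functoriality} axiom for the Elgot structure on $B$ now gives $(\IB{f}{u^A_A})^\iistar=e_B^\iistar\comp h$, and $e_B^\iistar=b$ by the auxiliary claim applied to $(B,b,\argument^\iistar)$. Comparing the two computed values of $(\IB{f}{u^A_A})^\iistar$ yields $f\comp a=b\comp(\IB{f}{f})$.

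I expect the obstacle to be conceptual rather than computational: the whole argument is routine once the equation morphism $e_A=\IB{\id_A}{u^A_A}$ is guessed, so the work lies in seeing that this morphism solves to the algebra structure. On the computational side the only care needed is the bookkeeping of the several instances of naturality of the unit $u$ (naturality in the carrier $\IB{g}{\id}\comp u^X_A=u^X_B\comp g$ versus unit-preservation of $\IB{\id}{k}$, i.e.\ $\IB{\id_A}{k}\comp u^X_A=u^Y_A$) and of which of the two iteration operators, $\argument^\istar$ on $A$ or $\argument^\iistar$ on $B$, each solution is taken in. No appeal to the \emph{compositionality} axiom is required.
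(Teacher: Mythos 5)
Your proof is correct; every step type-checks and the two key computations go through: the auxiliary claim $e_A^\istar=a$ for $e_A=\IB{\id_A}{u^A_A}$ follows exactly as you say from the solution axiom, unit preservation of the monad morphisms $(\IB{\id}{k})_Z$, and the $\hash$-algebra unit law, and the intertwining $(\IB{\id_B}{(\IB{f}{f})})\comp(\IB{f}{u^A_A})=e_B\comp(\IB{f}{f})$ reduces, as you note, to $(\IB{f}{f})\comp u^A_A=u^B_B\comp f$, which is unit preservation in the parameter followed by naturality of $u^B$ in the carrier. The route differs from the paper's in two respects. The paper represents $a$ as a ``loop terminating after the first iteration'' on the coproduct carrier $(\IB{A}{A})+A$, using $e=(\IB{\id}{\inr})\comp[\id,u^A_A]$ with $e^\istar=[a,\id]$; it then applies the morphism axiom to this single $e$ and evaluates $((\IB{f}{\id})\comp e)^\iistar$ componentwise (first on $\inr$, then on $\inl$) using only the solution axiom in $B$ -- the \emph{functoriality} axiom is never invoked. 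You instead work on the single carrier $\IB{A}{A}$ with the simpler equation $\IB{\id_A}{u^A_A}$, apply the auxiliary claim twice (once in $A$, once in $B$), and use \emph{functoriality} in $(B,b,\argument^\iistar)$ to identify $(\IB{f}{u^A_A})^\iistar$ with $b\comp(\IB{f}{f})$. What you gain is a cleaner equation morphism and no coproduct bookkeeping; what the paper's version buys is a proof using only the solution axiom plus the morphism property, so it is slightly more economical in the axioms it touches. Both arguments correctly avoid compositionality.
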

\begin{proof*}{Sketch}
The idea is to represent $a$ as a loop terminating after the first iteration and 
then deduce preservation of $a$ by $f$ from preservation of iteration by 
$f$ guaranteed by definition. More concretely, we take  
\[e = (\IB{\id}{\inr}) \comp [\id, u_A^A] : (\IB{A}{A}) + A \to
    \IB{A}{((\IB{A}{A})+A)}\]
and show that $e^{\istar}=[a, \id]$. The remaining proof amounts to deriving $b\comp (\IB{f}{f})=f\comp a$
 from $f\comp e^\istar = ((\IB{f}{\id})\comp e)^\iistar$.
\qed\end{proof*}
It was shown by Uustalu~\cite{Uustalu03}
that parametrized monads give rise to monads at least in two different ways:
\begin{prop}\label{prop:tarmo}
Suppose, $\IB{}{}$ is a parametrized monad on $\BC$ such that the least fixpoint 
$\mu\gamma.\ \IB{X}{\gamma}$ (the greatest fixpoint $\nu\gamma.\ \IB{X}{\gamma}$) 
exists for every $X\in |\BC|$. Then $\mu\gamma.\IB{\argument}{\gamma}$ 
($\nu\gamma.\IB{\argument}{\gamma}$) is the underlying functor of a monad.
\end{prop}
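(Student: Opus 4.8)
The plan is to carry out the least-fixpoint case in full and then indicate the dual treatment of the greatest-fixpoint case. Write $TX=\mu\gamma.\,\IB{X}{\gamma}$, so that $TX$ carries the structure of the initial algebra for the endofunctor $\IB{X}{\argument}$, with structure isomorphism $\iota_X:\IB{X}{TX}\to TX$ (Lambek's lemma). First I would upgrade $T$ to a functor by initiality: for $f:X\to Y$ the object $TY$ becomes an $\IB{X}{\argument}$-algebra via $\iota_Y\comp(\IB{f}{TY}):\IB{X}{TY}\to TY$, and $Tf:TX\to TY$ is taken to be the induced algebra morphism out of the initial algebra, so that $Tf\comp\iota_X=\iota_Y\comp(\IB{f}{TY})\comp(\IB{X}{Tf})$; functoriality of $T$ then follows at once from uniqueness of such morphisms. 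For the unit I would take $\eta_X=\iota_X\comp u_X^{TX}:X\to TX$, whose naturality reduces to the defining equation of $Tf$ together with naturality of $u$.

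I would present the monad structure as a Kleisli triple. The key observation is that $TX$ carries a canonical algebra structure for the monad $\IB{\argument}{TX}$: since $(\IB{X}{TX},m_X^{TX})$ is the free $\IB{\argument}{TX}$-algebra on $X$ and $\iota_X$ is an isomorphism, transporting the algebra structure along $\iota_X$ yields $\beta_X:=\iota_X\comp m_X^{TX}\comp(\IB{\iota_X^{-1}}{TX}):\IB{TX}{TX}\to TX$, which also satisfies $\beta_X\comp(\IB{\iota_X}{TX})=\iota_X\comp m_X^{TX}$. Given $f:X\to TY$, I would then define $f^{\klstar}:TX\to TY$ to be the unique $\IB{X}{\argument}$-algebra morphism from $(TX,\iota_X)$ to $(TY,\,\beta_Y\comp(\IB{f}{TY}))$, i.e.\ the unique morphism satisfying $f^{\klstar}\comp\iota_X=\beta_Y\comp(\IB{f}{f^{\klstar}})$. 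With these definitions the laws $\eta^{\klstar}=\id$ and $f^{\klstar}\comp\eta=f$ reduce, after unfolding, to the two unit laws of the monads $\IB{\argument}{TX}$, the fact that $\IB{\id}{h}$ preserves units (part of the definition of a parametrized monad), and naturality of $u$; both are short computations — for instance, one first checks that $\beta_X\comp(\IB{\eta_X}{TX})=\iota_X$, which forces $\eta_X^{\klstar}=\id_{TX}$.

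The hard part is the associativity law $(f^{\klstar}\comp g)^{\klstar}=f^{\klstar}\comp g^{\klstar}$ for $g:X\to TY$ and $f:Y\to TZ$, which I would reduce to the \emph{substitution lemma}
\[
  f^{\klstar}\comp\beta_Y=\beta_Z\comp(\IB{f^{\klstar}}{f^{\klstar}})\qquad(f:Y\to TZ),
\]
stating that $f^{\klstar}$ is a $\IB{}{}$-algebra morphism between the canonical algebras $(TY,\beta_Y)$ and $(TZ,\beta_Z)$. Because $\IB{\iota_Y}{TY}$ is an isomorphism, it suffices to prove this equation precomposed with it, i.e.\ on $\IB{(\IB{Y}{TY})}{TY}$, where it follows by a diagram chase combining: the defining equation $f^{\klstar}\comp\iota_Y=\beta_Z\comp(\IB{f}{f^{\klstar}})$; naturality of the multiplications $m^{TY}$ and $m^{TZ}$; the fact that $\IB{\id}{f^{\klstar}}$ is a monad morphism $\IB{\argument}{TY}\to\IB{\argument}{TZ}$ and hence preserves multiplication; bifunctoriality of $\IB{}{}$; and the associativity law of the $\IB{\argument}{TZ}$-algebra $(TZ,\beta_Z)$. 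Granting the lemma, associativity follows by uniqueness of the fold out of $(TX,\iota_X)$: using that $g^{\klstar}$ is an $\IB{X}{\argument}$-algebra morphism into $(TY,\beta_Y\comp(\IB{g}{TY}))$, then the substitution lemma, then collapsing the resulting $\IB{}{}$-terms by bifunctoriality, one sees that $f^{\klstar}\comp g^{\klstar}$ is the unique $\IB{X}{\argument}$-algebra morphism from $(TX,\iota_X)$ to $(TZ,\,\beta_Z\comp(\IB{f^{\klstar}\comp g}{TZ}))$, which is by definition $(f^{\klstar}\comp g)^{\klstar}$. Thus $(T,\eta,\argument^{\klstar})$ is a Kleisli triple, hence a monad, with $\mu=\id^{\klstar}$.

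Finally, the greatest-fixpoint statement is obtained by the dual construction: take $TX=\nu\gamma.\,\IB{X}{\gamma}$ with final $\IB{X}{\argument}$-coalgebra $\out_X:TX\to\IB{X}{TX}$, and replace every use of initiality (folds) by finality (corecursion). Thus $T$ and $\eta$ arise by coinduction from the coalgebras $(\IB{f}{TX})\comp\out_X$ and $u_X^X$; the transported structure $\beta_X:=\out_X^{-1}\comp m_X^{TX}\comp(\IB{\out_X}{TX})$ is again a canonical $\IB{\argument}{TX}$-algebra on $TX$, and the multiplication $\mu_X:TTX\to TX$ is defined by corecursion into $TX=\nu\gamma.\,\IB{X}{\gamma}$ along a $\IB{X}{\argument}$-coalgebra structure on $TTX$ assembled from $\out_{TX}$, $\out_X$, the unit $\eta_{TX}$ and $m^{TTX}$. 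The monad laws are verified as before, with uniqueness of corecursors replacing uniqueness of folds, and the dual substitution lemma again the main obstacle.
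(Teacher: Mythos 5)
Your least-fixpoint half is correct and complete in its essentials: the canonical algebra $\beta_X=\iota_X\comp m^{TX}_X\comp(\IB{\iota_X^{-1}}{\id})$, the definition of $f^\klstar$ by initiality, the unit laws, and the substitution lemma $f^\klstar\comp\beta_Y=\beta_Z\comp(\IB{f^\klstar}{f^\klstar})$ (proved after precomposing with the iso $\IB{\iota_Y}{\id}$, using the monad-morphism property of $\IB{\id}{f^\klstar}$, naturality of $m$, and associativity of the Eilenberg--Moore algebra $(TZ,\beta_Z)$) all go through exactly as you indicate. Note that the paper does not prove this proposition at all -- it is cited from Uustalu -- but Section~4 does construct the monad structure on $\IBnu$, and it does so by a route you should compare with yours.

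The genuine gap is in your final paragraph: the greatest-fixpoint case is \emph{not} the formal dual of the least-fixpoint case, because $\IB{}{}$ is a parametrized monad (not a parametrized comonad) in both cases, so the data $u$ and $m$ are not dualized along with the (co)limits. Concretely, your recipe ``replace folds by corecursion'' breaks at the Kleisli lifting: to define $f^\klstar:\IBnu X\to\IBnu Y$ by plain coiteration you would need a $(\IB{Y}{\argument})$-coalgebra structure on $\IBnu X$, but the evident composite $(\IB{\out_Y}{\id})\comp(\IB{f}{\id})\comp\out_X$ lands in $\IB{(\IB{Y}{\IBnu Y})}{\IBnu X}$, where the two inner parameters ($\IBnu Y$ inside, $\IBnu X$ outside) disagree, so $m$ cannot be applied -- in contrast to the initial-algebra case, where $\beta_Y\comp(\IB f{\id})$ typechecks immediately. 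This is precisely why the paper (following Uustalu) first establishes a \emph{primitive corecursion} principle (Proposition~\ref{prop:corec}, Lemma~\ref{lem:IBcorec}) and defines $f^\klstar=\corec((\IB f{\id})\comp\out,\out)$, characterized by~\eqref{eq:klstar_unique}, rather than by plain coiteration. Your sketch of $\mu_X$ via a coalgebra on $TTX$ built from $\out_{TX}$, $\out_X$, $\eta_{TX}$ and $m^{TTX}$ shows you sense this (the $\eta_{TX}$-coercion is exactly the workaround), and that route can be made to work; but then the verification of the laws is not ``as before'': one must first prove auxiliary facts such as $(\coit c)\comp\eta_{TX}=\id$ and a uniqueness principle of the type of~\eqref{eq:klstar_unique} before the analogue of your substitution lemma and associativity can be run by coinduction. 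As written, the appeal to duality papers over exactly the step where the real work in the coalgebraic case lies.
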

It is known that the initial algebra $\mu\gamma.\ \IB{X}{\gamma}$ yields the free $\IB{}{}$-algebra on~$X$; in fact, existence of this free $\IB{}{}$-algebra is equivalent to the existence of that initial algebra (see~\cite[Theorem~2.18]{JirAdamekMiliusEtAl08}). Here we are interested in the final coalgebras $\nu\gamma.\,\IB{X}{\gamma}$. These yield the free complete $\IB{}{}$-algebras, and moreover, existence of these free algebras is equivalent to the existence of that final coalgebras.
\begin{thm}\label{thm:eq}
  \begin{enumerate}
  \item Suppose that $\out_X: \IBnu X \to \IB{X}{\IBnu X}$ is a final $(\IB{X}{\argument})$-coalgebra. Then the following morphisms
    \[
      \begin{tikzcd}
        \IB{\IBnu X}{\IBnu X} \ar[rr, "\IB{\out_X}{\id}"] && \IB{(\IB{X}{\IBnu X})}{\IBnu X} \ar[r, "m^{\IBnu X}_X"] & \IB{X}{\IBnu X} \ar[r, "{\tuo[X]}"] & \IBnu X  
      \end{tikzcd}
    \]
    and
    \[
      \begin{tikzcd}
        X \ar[r, "u^{\IBnu X}_X"] & \IB{X}{\IBnu X} \ar[r, "{\tuo[X]}"] & \IBnu X 
      \end{tikzcd}
    \]
    form the algebra structure and universal morphism of a free complete Elgot algebra for $\IB{}{}$ on $X$. 
  \item Suppose that $\varphi_X: \IB{FX}{FX} \to FX$ and $\eta_X: X \to FX$ form a free complete Elgot $\IB{}{}$-algebra  on $X$. Then
    \[
      \begin{tikzcd}
        \IB{X}{FX} \ar[rr, "\IB{\eta_X}{\id}"] && \IB{FX}{FX} \ar[r, "\varphi_X"] & FX  
      \end{tikzcd}
    \]
    is an isomorphism, and its inverse is the structure of a final ($\IB{X}{\argument}$)-coalgebra.
  \end{enumerate}
\end{thm}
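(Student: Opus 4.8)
The two parts of Theorem~\ref{thm:eq} are mutually inverse constructions, so the strategy is: (1) in part~(1), start from a final coalgebra $\out_X$, define the algebra structure $a_X$ and unit $v_X$ as prescribed, and verify (a) that $(\IBnu X, a_X)$ is a $\IB{}{}$-algebra, (b) that it carries an iteration operator $\argument^\istar$ satisfying the three complete-Elgot-algebra axioms, and (c) that it is free, i.e.\ for any complete Elgot $\IB{}{}$-algebra $(B,b,\argument^\iistar)$ and any $h\colon X\to B$ there is a unique complete-Elgot-algebra morphism $\hat h\colon \IBnu X\to B$ with $\hat h\comp v_X = h$; (2) in part~(2), take $F$ to be the free-complete-Elgot-algebra functor, show the composite $\varphi_X\comp(\IB{\eta_X}{\id})$ is invertible, and identify its inverse with a final-coalgebra structure by exhibiting the coiteration universal property. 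Since Proposition~\ref{prop:tarmo} already tells us $\nu\gamma.\IB{\argument}{\gamma}$ underlies a monad and Proposition~\ref{prop:hom} tells us complete-Elgot morphisms are automatically algebra morphisms, we get to reuse a fair amount of structure.

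For part~(1), defining $\argument^\istar$ is the crux: given $e\colon X'\to\IB{A}{X'}$ with $A=\IBnu X$, I would build a $(\IB{X}{\argument})$-coalgebra on $X'+A$ (or on $A$ directly after "running" $e$ once and folding back into $A$ via $\out_X$ and $m$), then define $e^\istar\colon X'\to\IBnu X$ by coiteration into the final coalgebra, composed with $\inl$ or the evident injection. The \emph{solution} axiom should then be exactly the defining equation of the coiterated morphism after one unfolding of $\out_X$ and one use of the $\IB{}{}$-algebra laws; \emph{functoriality} should follow from uniqueness of coalgebra morphisms out of $X'$ relative to $Y'$ connected by $h$; and \emph{compositionality} is the analogue of the codiagonal/Beki\'c manipulation and will be the most delicate — here I expect to set up two coalgebra structures on a coproduct and show they induce the same coiterated map by finality, using the explicit formulas for $f\filledsquare g$ and $f^\istar\bullet g$ together with the monad multiplication $m$ and associativity. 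Freeness in (c) is again a coiteration argument: given $(B,b,\argument^\iistar)$, the morphism $\out_X$ lets us view $\id_{\IBnu X}$ as (essentially) a recursive specification $e_0\colon \IBnu X\to\IB{X}{\IBnu X}\hookrightarrow\IB{B}{\IBnu X}$ (after postcomposing with $\IB{\hat h}{\id}$ — which is circular, so instead one specifies the coalgebra purely in terms of $X$ and $h$), and $\hat h := e_0^\iistar$; uniqueness uses \emph{functoriality} of $\argument^\iistar$ in $B$ plus finality.

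For part~(2), the key observation is that a free complete Elgot $\IB{}{}$-algebra on $X$ has, by its solution axiom applied to the "identity-like" equation $u^{FX}_X\colon X\to\IB{X}{FX}\hookrightarrow$, and by freeness, enough rigidity to force $\varphi_X\comp(\IB{\eta_X}{\id})$ to be an isomorphism. Concretely, I would show the candidate inverse $\psi_X\colon FX\to\IB{X}{FX}$ is the unique morphism making $FX$ a $(\IB{X}{\argument})$-coalgebra for which $\eta_X$ and $\varphi_X$ behave as a "computed" algebra in the sense of part~(1); then, given any coalgebra $(W,w\colon W\to\IB{X}{W})$, the morphism $w$ can be read as a recursive equation $W\to\IB{FX}{W}$ (via $\IB{\eta_X}{\id}$) whose unique solution $w^\istar\colon W\to FX$ is the unique coalgebra morphism $(W,w)\to(FX,\psi_X)$; existence of the solution comes from $FX$ being a complete Elgot algebra, uniqueness from \emph{functoriality}. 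Showing $\psi_X\comp(\varphi_X\comp\IB{\eta_X}{\id})=\id$ and the reverse composite $=\id$ then reduces to the solution and functoriality axioms of $FX$ together with Lambek's lemma applied to the coalgebra we have just built.

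\textbf{Main obstacle.} The hardest single step is \emph{compositionality} in part~(1): the definitions of $f\filledsquare g$ and $f^\istar\bullet g$ involve the parametrized-monad multiplication $m$, the coproduct injections threaded through both arguments of $\IB{}{}$, and the associativity coherence, so matching the two coiterated maps requires carefully producing a single $(\IB{X}{\argument})$-coalgebra that both sides unfold, and then invoking finality. A close second is the circularity one must avoid in the freeness argument — one cannot define the mediating morphism $\hat h$ using $\hat h$ itself — which forces the recursive specification to be phrased entirely over $X$ and the target data; getting that specification right, and then proving uniqueness cleanly, is where the bookkeeping concentrates.
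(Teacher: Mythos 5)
Your plan for clause (i) is workable in outline: it is essentially the route of Ad\'amek--Milius--Velebil generalized to parametrized monads, and it is close to what the paper actually does, except that the paper does not prove freeness directly -- it derives clause (i) from the concrete isomorphism $\BC^{\IBnu}\cong\cbElg(\BC)$ of Theorem~\ref{thm:algebras-iso} (see Corollary~\ref{cor:algebras-iso}), so that freeness is inherited from the freeness of the Eilenberg--Moore algebras $(\IBnu X,\mun)$. Be aware, though, that the two computations you flag as delicate (compositionality, and the verification that the mediating morphism $\hat h=((\IB{h}{\id})\comp\out_X)^{\iistar}$ preserves \emph{all} solutions, not just the algebra structure) are exactly where the work lies, and your sketch only gestures at them.

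The genuine gap is in clause (ii). First, you never construct the candidate inverse $\psi_X$: describing it as ``the unique morphism making $FX$ a coalgebra for which $\eta_X$ and $\varphi_X$ behave as a computed algebra in the sense of part (1)'' presupposes what is to be proved. The only handle on $FX$ is its universal property, and to invoke it you must first equip $\IB{X}{FX}$ with a complete Elgot $\IB{}{}$-algebra structure and show that $\varphi_X\comp(\IB{\eta_X}{\id})$ is a morphism of complete Elgot algebras; this is the paper's Construction~\ref{c:alg} and Lemma~\ref{lem:aux}, an involved computation, and only then does freeness yield $t:FX\to\IB{X}{FX}$ with $t\comp\eta_X=u^{FX}_X$, after which one checks that $t$ and $\varphi_X\comp(\IB{\eta_X}{\id})$ are mutually inverse (Proposition~\ref{prop:iso}). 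This auxiliary structure is the technical heart of clause (ii) and is absent from your plan. Second, your claim that uniqueness of coalgebra morphisms into $(FX,\psi_X)$ follows ``from functoriality'' does not hold: functoriality relates solutions of two equation morphisms connected by a renaming $h$; it does not force a single equation to have a unique solution, and in a general complete Elgot $\IB{}{}$-algebra solutions are not unique (continuous algebras take \emph{least} solutions, Example~\ref{ex:cpoalg}). Uniqueness genuinely needs the freeness of $FX$; the paper defers it to the nontrivial argument of \cite[Theorem~5.4]{AdamekMiliusEtAl06j}, and note that Proposition~\ref{prop:unique} cannot be used here since its proof already relies on the final-coalgebra characterization, so invoking it would be circular. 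Finally, your appeal to Lambek's lemma to verify that the composites are identities is backwards: Lambek's lemma becomes available only after finality of the coalgebra structure is established, whereas the correct order (followed by the paper) is to prove the isomorphism first, via the auxiliary algebra and freeness, and finality second.
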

\begin{remark}\label{rem:it}
  Note that in Clause~(i) above the iteration operator on $\IBnu Y$ is obtained as follows. Given $e: X \to \IB{\IBnu Y}{X}$ one forms the following coalgebra $c: \IBnu Y + X \to \IB{Y}{(\IBnu Y + X)}$ for $\IB{Y}{\argument}$:
  \[
    \begin{tikzcd}
      \IBnu Y + X 
      \ar[r, "{[u^X_{\IBnu Y}, e]}"]
      &
      \IB{\IBnu Y}{X}
      \ar[r, "\IB{\out}{\id}"]
      &
      \IB{(\IB{Y}{\IBnu Y})}{X}
      \ar[d, "\IB{(\IB{\id}{\inl})}{\inr}"]
      \\
      \IB{Y}{(\IBnu Y +X)}
      &
      &
      \IB{(\IB{Y}{(\IBnu Y + X)})}{(\IBnu Y + X)}
      \ar[ll, "m^{\IBnu Y + X}_Y"']
    \end{tikzcd}
  \]
  Then one puts $e^\dagger = (\coit c)\comp \inr$. 
\end{remark}
The proof of Theorem~\ref{thm:eq} is a non-trivial generalization of the proof
of~\cite[Theorem~5.4]{AdamekMiliusEtAl06j} from complete Elgot algebras for
endofunctors to those for {\pmonad}s; we will establish Clause~(i)~as a consequence of
Theorem~\ref{thm:algebras-iso} (see Corollary~\ref{cor:algebras-iso}) while we
outline the proof of Clause~(ii)~in the \iffull\/appendix.\else\/full version of the paper.\fi

Before we continue, let us note that, surprisingly, in a free complete Elgot algebra
the iteration always assigns a unique solution to any
$e: X \to \IB {FX} X$.
\begin{prop}\label{prop:unique}
  Suppose that $\varphi_Y: \IB{FY}{FY} \to FY$ and $\eta_Y: Y \to FY$ form a free complete Elgot $\IB{}{}$-algebra on $Y$. Then for every $e: X \to \IB {FY} X$, $e^\dagger: X \to FY$ is a unique solution, i.e.\ a unique morphism satisfying the
solution axiom with $e$.
\end{prop}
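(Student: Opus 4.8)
The plan is to use the explicit description of free complete Elgot $\IB{}{}$-algebras provided by Theorem~\ref{thm:eq}. First I would reduce to the canonical case: since any two free complete Elgot $\IB{}{}$-algebras on $Y$ are isomorphic as objects of $\cbElg(\BC)$, and (invoking Proposition~\ref{prop:hom}) such an isomorphism is in particular a $\IB{}{}$-algebra isomorphism, conjugation by it sets up a bijection between the solutions of a system $e:X\to\IB{FY}{X}$ and the solutions of the transported system; hence it suffices to treat the free algebra whose carrier is $\IBnu Y=\nu\gamma.\,\IB{Y}{\gamma}$, with $\out:\IBnu Y\to\IB{Y}{\IBnu Y}$ the (iso) structure of the final $\IB{Y}{\argument}$-coalgebra, algebra structure $\varphi_Y=\tuo[Y]\comp m^{\IBnu Y}_Y\comp(\IB{\out}{\id})$, and $\argument^\istar$ as in Remark~\ref{rem:it}. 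Moreover the solution axiom already says that $e^\istar$ is a solution of $e$, so it is enough to prove that $e$ has \emph{at most} one solution.

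Fix $e:X\to\IB{\IBnu Y}{X}$ and let $c:\IBnu Y+X\to\IB{Y}{(\IBnu Y+X)}$ be the $\IB{Y}{\argument}$-coalgebra of Remark~\ref{rem:it}, so that $e^\istar=(\coit c)\comp\inr$ by definition. The heart of the argument is the claim that whenever $s:X\to\IBnu Y$ satisfies the solution equation $s=\varphi_Y\comp(\IB{\id}{s})\comp e$, the copair $[\id,s]:\IBnu Y+X\to\IBnu Y$ is a coalgebra morphism from $(\IBnu Y+X,c)$ to the final coalgebra $(\IBnu Y,\out)$. Granting this, finality forces $[\id,s]=\coit c$, whence $s=[\id,s]\comp\inr=(\coit c)\comp\inr=e^\istar$; so $e$ has a unique solution, namely $e^\istar$.

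To prove the claim I would check the coalgebra-morphism square over the two coproduct components of $\IBnu Y+X$. Over $\inl$ it is enough to note that $\inl:(\IBnu Y,\out)\to(\IBnu Y+X,c)$ is itself a coalgebra morphism, i.e.\ $c\comp\inl=(\IB{Y}{\inl})\comp\out$; this follows by unfolding the definition of $c$ and using naturality of the units $u$ together with the monad unit law $m^{X}_{A}\comp u^{X}_{\IB{A}{X}}=\id$ for $\IB{\argument}{X}$. Over $\inr$ one must show $\out\comp s=(\IB{Y}{[\id,s]})\comp c\comp\inr$: the left-hand side is rewritten with the solution equation and the formula for $\varphi_Y$ above, the right-hand side by commuting $\IB{Y}{[\id,s]}$ past $m$ via the fact that $(\IB{\id}{[\id,s]})$ is a monad morphism, and then bifunctoriality of $\IB{}{}$ together with $[\id,s]\comp\inl=\id$ and $[\id,s]\comp\inr=s$ collapses both sides to $m^{\IBnu Y}_Y\comp(\IB{\out}{s})\comp e$.

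The main obstacle is the bookkeeping in this last calculation: because $\IB{}{}$ carries two arguments (an ``object'' argument and a ``parameter'' argument), the various occurrences of $\IB{-}{-}$, $m^{-}_{-}$ and $u^{-}_{-}$ are easy to confuse, and one must take care to apply naturality of $m$ in its lower index, naturality of $u$, and the monad-morphism equations of $(\IB{\id}{f})$ to precisely the right instances. There is no difficulty of principle, only a somewhat delicate diagram chase.
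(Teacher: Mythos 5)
Your argument is correct, and it goes through: the coalgebra-morphism verification you sketch does work out (over $\inl$ via naturality of $u$ and the unit law, over $\inr$ both sides collapse to $m^{\IBnu Y}_Y\comp(\IB{\out}{s})\comp e$), and the preliminary transport along the isomorphism of free objects is legitimate since by Proposition~\ref{prop:hom} that isomorphism is in particular a $\IB{}{}$-algebra morphism, so solutions are carried to solutions and uniqueness transfers. The route differs from the paper's in its implementation, though the engine is the same (finality of the $(\IB{Y}{\argument})$-coalgebra forces uniqueness). The paper does not reduce to the canonical carrier $\IBnu Y$: it works with the given free algebra $(FY,\varphi_Y,\eta_Y)$ directly, using only clause~(ii) of Theorem~\ref{thm:eq} to equip $FY$ with the final-coalgebra structure $t=(\varphi_Y\comp(\IB{\eta_Y}{\id}))^{-1}$, and then builds an auxiliary $(\IB{Y}{\argument})$-coalgebra $\ol e$ on the carrier $\IB{FY}{X}$ (independent of any solution) such that for every solution $d$ the map $\varphi_Y\comp(\IB{\id}{d})$ is a coalgebra morphism $\ol e\to t$; finality then identifies these maps for different solutions, and precomposing with $e$ gives $d=e^\dagger$. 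Your version instead invokes both clauses of Theorem~\ref{thm:eq} (clause~(ii) to get the final coalgebra from the assumed free algebra, clause~(i) to know the canonical structure on it is free), plus uniqueness of free objects, and then uses the defining coalgebra $c$ of Remark~\ref{rem:it} on $\IBnu Y+X$, showing $[\id,s]=\coit c$. What you gain is a shorter and more transparent diagram chase, closely parallel to the classical argument for complete Elgot algebras of a functor; what the paper's proof gains is that it is intrinsic, needing no transport step and no appeal to clause~(i), at the price of the slightly more elaborate coalgebra $\ol e$ and the accompanying diagram.
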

From now on we assume that the final coalgebras $\nu\gamma.\,\IB{X}{\gamma}$ exist and denote them $\IBnu X$ (standardly omitting the structure morphisms $\out_X : \IBnu X \to \IB{X}{\IBnu X}$). Recall that $\coit f:X\to \IBnu Y$ is the morphism 
uniquely induced by a coalgebra $(X,f:X\to \IB{Y}{X})$. Following~\cite{Uustalu03}, in order to introduce and reason about the monad structure of $\IBnu$, we use a more flexible \textit{primitive corecursion principle}, derived from the standard \emph{coiteration principle} embodied in $\coit$.
\begin{prop}[\cite{Uustalu03}]\label{prop:corec}
For any functor $F$ with a final coalgebra $\nu F$, and any $f :
X \to F(\nu F + X)$, there is a unique morphism $h$ satisfying $\out \comp h =
F[\id, h] \comp f$.
\end{prop}
The morphism $h$ in Proposition~\ref{prop:corec} is said to be defined by 
primitive corecursion. We use primitive corecursion to slightly generalize the
$\coit$ construct in the special case of $\IBnu$: 
\begin{lem}\label{lem:IBcorec}
For any $e : X \to \IB{B}{X}$ and $f : B
\to \IB{A}{\IBnu{A}}$, there is a unique morphism $h$ satisfying
\begin{equation}\label{eq:corec}
  \begin{tikzcd}
    X
    \arrow[d, "h"']
    \arrow[r, "e"]
    & \IB{B}{X}
    \arrow[d, "{m_A^{\IBnu{A}} \comp (\IB{f}{h})}"] \\
    \IBnu{A}
    \arrow[r, "\out"]
    & \IB{A}{\IBnu{A}}.
  \end{tikzcd}
\end{equation}
\end{lem}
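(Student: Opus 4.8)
The plan is to reduce the statement to the primitive corecursion principle (Proposition~\ref{prop:corec}) applied to the functor $F = \IB{A}{\argument}$, whose final coalgebra is $(\IBnu A, \out)$. From $e$ and $f$ alone I would build the $F$-coalgebra $g : X \to \IB{A}{(\IBnu A + X)}$ given by the composite
\[
  g \;=\; \bigl(X \xrightarrow{e} \IB{B}{X} \xrightarrow{\IB{f}{\inr}} \IB{(\IB{A}{\IBnu A})}{(\IBnu A + X)} \xrightarrow{\IB{(\IB{\id}{\inl})}{\id}} \IB{(\IB{A}{(\IBnu A + X)})}{(\IBnu A + X)} \xrightarrow{m_A^{\IBnu A + X}} \IB{A}{(\IBnu A + X)}\bigr).
\]
This is a variant of the coalgebra used to define the iteration operator in Remark~\ref{rem:it}, with $\out$ omitted: the ``head'' datum $f$ already lands in $\IB{A}{\IBnu A}$, i.e.\ a once-unfolded value of $\IBnu A$, so no further unfolding via $\out$ is needed. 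Proposition~\ref{prop:corec} then supplies a unique $h : X \to \IBnu A$ with $\out \comp h = \IB{\id}{[\id, h]} \comp g$.

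The crux — and the only step that is not pure bookkeeping — is to verify that $\IB{\id}{[\id, h]} \comp g = m_A^{\IBnu A} \comp (\IB{f}{h}) \comp e$ for an arbitrary $h : X \to \IBnu A$. Granting this, the $h$ produced by Proposition~\ref{prop:corec} solves~\eqref{eq:corec}, and conversely any solution of~\eqref{eq:corec} satisfies $\out \comp h = \IB{\id}{[\id,h]} \comp g$ and is therefore unique by Proposition~\ref{prop:corec}; so both halves of the lemma follow at once. For the identity I would use three facts: (a) by the very definition of a parametrized monad, the family $\IB{\id_{(\argument)}}{[\id,h]}$ is a monad morphism from $\IB{\argument}{(\IBnu A + X)}$ to $\IB{\argument}{\IBnu A}$, so in particular $\IB{\id}{[\id,h]} \comp m_A^{\IBnu A + X} = m_A^{\IBnu A} \comp \IB{(\IB{\id}{[\id,h]})}{[\id,h]}$; (b) bifunctoriality of $\IB{}{}$; and (c) the coproduct equalities $[\id,h] \comp \inl = \id$ and $[\id,h] \comp \inr = h$. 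Pushing $\IB{\id}{[\id,h]}$ past $m$ by~(a), then collapsing $\IB{(\IB{\id}{[\id,h]})}{[\id,h]} \comp \IB{(\IB{\id}{\inl})}{\id}$ to $\IB{\id}{[\id,h]}$ by~(b) and $[\id,h]\comp\inl = \id$, and finally rewriting $\IB{\id}{[\id,h]} \comp \IB{f}{\inr} = \IB{f}{[\id,h]\comp\inr} = \IB{f}{h}$ by~(b) and $[\id,h]\comp\inr = h$, yields the claim.

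I expect the main obstacle to be purely organizational: keeping the two-variable functoriality of $\IB{}{}$ straight together with the correct indexing of the units, multiplications and coproduct injections — in particular invoking the monad-morphism law of $\IB{\id_{(\argument)}}{[\id,h]}$ with the parameter morphism $[\id,h] : \IBnu A + X \to \IBnu A$ and object argument $A$. No genuinely new idea beyond ``repackage $f$ as a one-step corecursion datum and let primitive corecursion produce the fixpoint'' seems to be needed.
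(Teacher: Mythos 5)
Your proposal is correct and follows essentially the same route as the paper: it forms the very same auxiliary coalgebra $m_A^{\IBnu A + X}\comp(\IB{((\IB{\id}{\inl})\comp f)}{\inr})\comp e$, applies the primitive corecursion principle (Proposition~\ref{prop:corec}), and then converts the resulting characterization into~\eqref{eq:corec} using that $\IB{\id}{[\id,h]}$ is a monad morphism. The only difference is that you spell out the monad-morphism/bifunctoriality computation that the paper leaves as a one-line remark.
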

For any $e : X \to \IB{B}{X}$ and $f : B \to \IB{A}{\IBnu{A}}$ we
denote by
\[
  \corec(e, f) : X \longrightarrow \IBnu{A}
\]
the unique $h$ making diagram~\eqref{eq:corec} commute.
Using \eqref{eq:corec}, the monad structure on $\IBnu$ can be given as follows:\smnote{From here on the subscripts of $\out$, $u$, $\etan$ etc. are usually dropped. This is very confusing to the reader who must now guess domain/codomain -- e.g.~in the proof of 4.15 one is lost; in any case it should not be done without telling the reader!}
\begin{align*}
  \etan_X &= \tuo \comp u_X^{\IBnu X} = \coit u_X^X \\
  f^{\klstar} &= \corec((\IB{f}{\id}) \comp \out, \out) & \text{where } f : X \to \IBnu Y
\end{align*}
This also defines $\mun=\id^\klstar=\corec(\out, \out)$. Note that, by 
Lemma~\ref{lem:IBcorec}, $f^\klstar$ is the unique morphism satisfying equation
\begin{align}\label{eq:klstar_unique}
\out f^\klstar = m_Y^{\IBnu Y}\comp (\IB{\out f}{f^\klstar})\comp\out.
\end{align}
\vspace{-3ex}
\begin{lem}\label{lem:corec-coit}
  Let $e : X \to \IB{B}{X}$ and $f : B \to \IB{A}{\IBnu{A}}$. Then
  \[ \corec(e, f) = (\tuo f)^{\klstar} \comp (\coit e). \]
\end{lem}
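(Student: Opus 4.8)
The plan is to invoke the uniqueness clause of Lemma~\ref{lem:IBcorec}: it suffices to check that the morphism $h := (\tuo[A]\comp f)^{\klstar}\comp(\coit e)$ makes diagram~\eqref{eq:corec} commute, i.e.\ that $\out\comp h = m_A^{\IBnu A}\comp(\IB{f}{h})\comp e$, and then to conclude $h=\corec(e,f)$.

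First I would introduce the abbreviations $g = \tuo[A]\comp f : B \to \IBnu A$ and $k = \coit e : X\to\IBnu B$, so that $h = g^{\klstar}\comp k$. There are two facts available. First, $\coit e$ is a coalgebra morphism into the final coalgebra, so $\out_B\comp k = (\IB{\id_B}{k})\comp e$. Second, by~\eqref{eq:klstar_unique} the Kleisli lifting $g^{\klstar}$ is characterized by $\out_A\comp g^{\klstar} = m_A^{\IBnu A}\comp(\IB{\out_A\comp g}{\,g^{\klstar}})\comp\out_B$; and since $\out_A$ and $\tuo[A]$ are mutually inverse by Lambek's lemma, $\out_A\comp g = \out_A\comp\tuo[A]\comp f = f$, so this simplifies to $\out_A\comp g^{\klstar} = m_A^{\IBnu A}\comp(\IB{f}{g^{\klstar}})\comp\out_B$.

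Then the computation is direct: $\out_A\comp h = \out_A\comp g^{\klstar}\comp k = m_A^{\IBnu A}\comp(\IB{f}{g^{\klstar}})\comp\out_B\comp k = m_A^{\IBnu A}\comp(\IB{f}{g^{\klstar}})\comp(\IB{\id_B}{k})\comp e$, using the simplified characterization of $g^{\klstar}$ and then the coalgebra-morphism property of $k$. One application of bifunctoriality of $\IB{}{}$ yields $(\IB{f}{g^{\klstar}})\comp(\IB{\id_B}{k}) = \IB{f}{g^{\klstar}\comp k} = \IB{f}{h}$, so $\out_A\comp h = m_A^{\IBnu A}\comp(\IB{f}{h})\comp e$, which is exactly the defining equation of $\corec(e,f)$ in Lemma~\ref{lem:IBcorec}. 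Uniqueness then gives $h=\corec(e,f)$, as claimed.

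There is no real conceptual obstacle here: the statement reduces entirely to the universal property packaged in Lemma~\ref{lem:IBcorec} together with the characterization~\eqref{eq:klstar_unique} of Kleisli lifting in $\IBnun$. The only point requiring attention is the bookkeeping of the (systematically suppressed) indices on $\out$, $m$, and $u$, and checking that the single use of bifunctoriality is applied to genuinely composable pairs of morphisms.
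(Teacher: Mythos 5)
Your proof is correct and follows essentially the same route as the paper: the paper likewise shows that $(\tuo\comp f)^{\klstar}\comp(\coit e)$ satisfies the equation uniquely characterizing $\corec(e,f)$, using the coalgebra-morphism property of $\coit e$ and the corecursive characterization of the Kleisli lifting (your appeal to~\eqref{eq:klstar_unique} is just the already-simplified form of the paper's bottom square), then concludes by the uniqueness clause of Lemma~\ref{lem:IBcorec}.
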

As an easy corollary of Lemma~\ref{lem:corec-coit} we obtain that 
$\coit e =\corec(e,u_X^{\IBnu X})$; indeed, we have
\[
\corec(e,u_X^{\IBnu X}) = (\tuo \comp u_X^{\IBnu X})^{\klstar} \comp (\coit e) = (\eta_X^\nu)^{\klstar} \comp (\coit e) = \coit e.
\]
We state another useful property in the following lemma:
\begin{lem}\label{lem:corec-nat}
Let $e : X \to \IB{B}{X}$ and $g : B \to C$. Then
  \[ \IBnu g \comp (\coit e) = \coit((\IB{g}{\id}) \comp e). \]
\end{lem}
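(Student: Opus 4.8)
The plan is to reduce the claim to the universal property of the final coalgebra $\IBnu C$. Recall first that the morphism part of the functor $\IBnu=\nu\gamma.\,\IB{\argument}{\gamma}$ is given, for $g:B\to C$, by $\IBnu g=\coit\bigl((\IB{g}{\id})\comp\out\bigr):\IBnu B\to\IBnu C$; equivalently, $\IBnu g$ is the unique morphism satisfying $\out\comp\IBnu g=(\IB{g}{\IBnu g})\comp\out$, where one uses bifunctoriality of $\IB{}{}$ to rewrite $\IB{C}{\IBnu g}\comp(\IB{g}{\id})=\IB{g}{\IBnu g}$. This is the standard description of the action of a final\dash coalgebra functor; if one prefers, it can be read off from the monad structure recalled above, since $\IBnu g=(\etan_C\comp g)^{\klstar}$ and $\out\comp\etan_C=u^{\IBnu C}_C$, after which~\eqref{eq:klstar_unique} together with a monad unit law for the monad $\IB{\argument}{\IBnu C}$ yields the displayed equation.

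Now both sides of the claim are morphisms $X\to\IBnu C$. By definition, $\coit\bigl((\IB{g}{\id})\comp e\bigr)$ is the unique $\IB{C}{\argument}$-coalgebra morphism from $\bigl(X,(\IB{g}{\id})\comp e\bigr)$ to the final coalgebra $(\IBnu C,\out)$. Hence it suffices to verify that $h:=\IBnu g\comp(\coit e)$ is such a coalgebra morphism, i.e.\ that $\out\comp h=\IB{C}{h}\comp(\IB{g}{\id})\comp e=(\IB{g}{h})\comp e$, the last equality again being bifunctoriality.

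This is then a short diagram chase: using the characterization of $\IBnu g$, then the coiteration equation $\out\comp(\coit e)=\IB{B}{\coit e}\comp e$, and finally bifunctoriality of $\IB{}{}$, one computes $\out\comp\IBnu g\comp(\coit e)=(\IB{g}{\IBnu g})\comp\out\comp(\coit e)=(\IB{g}{\IBnu g})\comp\IB{B}{\coit e}\comp e=(\IB{g}{\IBnu g\comp\coit e})\comp e=(\IB{g}{h})\comp e$, as required. The only step needing a moment's thought is pinning down the morphism part of $\IBnu$; everything after that is routine, so I do not anticipate a genuine obstacle. One could alternatively route the argument through $\corec$ and Lemma~\ref{lem:corec-coit}, but the direct finality argument above is shorter.
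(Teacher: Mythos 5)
Your proof is correct, but it takes a different route from the paper. The paper stays inside the $\corec$ calculus it has just set up: it writes $\IBnu g=(\etan\comp g)^\klstar$, converts $(\tuo f)^\klstar\comp\coit e$ into $\corec(e,f)$ via Lemma~\ref{lem:corec-coit}, shifts the postprocessing map from the second argument of $\corec$ into the coalgebra (an identity read off directly from diagram~\eqref{eq:corec}), and finishes with the corollary $\coit e=\corec(e,u)$. You instead first extract the coalgebraic characterization $\out\comp\IBnu g=(\IB{g}{\IBnu g})\comp\out$ of the functorial action — and you correctly flag that in this paper this is not a definition but must be derived from $\IBnu g=(\etan_C\comp g)^\klstar$, $\out\comp\etan_C=u^{\IBnu C}_C$, the uniqueness equation~\eqref{eq:klstar_unique} and the unit law $m^{\IBnu C}_C\comp(\IB{u^{\IBnu C}_C}{\id})=\id$ of the monad $\IB{\argument}{\IBnu C}$; that derivation does go through — and then run the classical finality argument, checking that $\IBnu g\comp\coit e$ is a $\IB{C}{\argument}$-coalgebra morphism from $(X,(\IB{g}{\id})\comp e)$ to $(\IBnu C,\out)$ and invoking uniqueness. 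Your version is the more elementary, textbook argument and avoids Lemma~\ref{lem:corec-coit} altogether (it only needs~\eqref{eq:klstar_unique}, i.e.\ Lemma~\ref{lem:IBcorec}); the paper's version buys uniformity, since it exercises exactly the $\corec$ identities that are reused in the subsequent proofs (e.g.\ of Theorem~\ref{thm:algebras-iso}), at the cost of first establishing Lemma~\ref{lem:corec-coit}. Both ultimately rest on the same uniqueness principle.
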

The following theorem is our first main result. It establishes an equivalence of
complete Elgot $\IB{}{}$-algebras and $\IBnu$-algebras.
\begin{thm}\label{thm:algebras-iso}
  For any {\pmonad}\/ $\IB{}{}: \BC \times \BC \to \BC$, the Eilenberg-Moore
  algebras of\/ $\IBnu = \nu\gamma.\,\IB{\argument}{\gamma}$ are precisely the
  complete Elgot $\IB{}{}$-algebras. More precisely,\/ 
  $\BC^{\IBnu}$ and $\cbElg(\BC)$ are isomorphic categories under the identical on morphisms 
isomorphism constructed as follows:
\begin{itemize}
 \item $\BC^{\IBnu}\to\cbElg(\BC)$: for a $\IBnu$-algebra $(A,\chi:\IBnu A\to A)$
we define a $\IB{}{}$-algebra $(A,\chi\tuo\comp(\IB{\id}{\eta^\nu}):\IB{A}{A}\to A,\argument^\istar)$ with
$e^\istar=\chi\comp(\coit e):X\to A$ for any $e:X\to\IB{A}{X}$. 
 \item $\cbElg(\BC)\to\BC^{\IBnu}$: for a $\IB{}{}$-algebra 
$(A,a:\IB{A}{A}\to A,\argument^\istar)$ we define a $\IBnu$-algebra $(A,\out^\istar:\IBnu A\to A)$.
\end{itemize}
\end{thm}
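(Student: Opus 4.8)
The plan is to exhibit the two assignments explicitly, check that each lands in the claimed category, and then verify they are mutually inverse. Since the claimed isomorphism is identical on morphisms, the bulk of the work is on objects, and the key bookkeeping device is the characterization of $f^\klstar$ on $\IBnu$ by equation~\eqref{eq:klstar_unique} together with the corecursion lemmas (Lemma~\ref{lem:corec-coit}, Lemma~\ref{lem:corec-nat}) and the description of iteration on the free algebra $\IBnu X$ in Remark~\ref{rem:it}.

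First I would treat the direction $\BC^{\IBnu}\to\cbElg(\BC)$. Given a $\IBnu$-algebra $(A,\chi)$, set $a=\chi\comp\tuo\comp(\IB{\id}{\etan})$ and $e^\istar=\chi\comp(\coit e)$ for $e:X\to\IB{A}{X}$. I would first check that $(A,a)$ is a $\IB{}{}$-algebra: the unit law $a\comp u^A_A=\id$ and the multiplication law follow by translating the Eilenberg-Moore laws for $\chi$ through the defining equations $\etan=\tuo\comp u^{\IBnu A}_A$ and $\mun=\corec(\out,\out)$, using Lemma~\ref{lem:corec-coit} to unfold $\mun$. Then I would verify the three complete-Elgot-algebra axioms for $\argument^\istar$. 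The \emph{solution} axiom amounts to unfolding $\out\comp(\coit e)=(\IB{\id}{\coit e})\comp e$ (the coiteration equation), post-composing with $\chi$, and recognizing $\chi\comp\tuo=a\comp(\IB{\id}{\etan})^{-1}$-style rewriting; more precisely one uses that $\chi$ is an algebra morphism to collapse $\chi\comp\mun$ and $\chi\comp\IBnu\chi$. \emph{Functoriality} follows from the uniqueness clause in the universal property of $\coit$: if $f\comp h=(\IB{\id}{h})\comp e$ then $(\coit f)\comp h$ and $\coit e$ satisfy the same coiteration equation, hence coincide. \emph{Compositionality} is the most laborious: one has to show $(f\filledsquare g)^\istar\comp\inr=(f^\istar\bullet g)^\istar$, which after applying $\chi$ reduces to an identity between two coalgebra-to-$\IBnu A$ maps; I would prove it by exhibiting both sides as $\coit$ of coalgebras that become equal after one unfolding, invoking Lemma~\ref{lem:corec-nat} and the naturality of $m$.

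For the reverse direction $\cbElg(\BC)\to\BC^{\IBnu}$, given $(A,a,\argument^\istar)$ I set $\chi=\out^\istar:\IBnu A\to A$, where $\out:\IBnu A\to\IB{A}{\IBnu A}$ is itself viewed as a morphism to $\IB{A}{(\IBnu A)}$ so that $\out^\istar$ makes sense. I would first check $\chi\comp\etan=\id$: since $\etan=\tuo\comp u^{\IBnu A}_A$, i.e.\ $\out\comp\etan=u^{\IBnu A}_A$, the equation $\out\comp\etan=(\IB{\id}{\etan})\comp\ul{?}$ can be massaged so that the \emph{functoriality} axiom applied to $h=\etan$ gives $\out^\istar\comp\etan=(u^A_A)^\istar$, and then the \emph{solution} axiom at the trivial coalgebra $u^A_A$ collapses $(u^A_A)^\istar$ to $\id$ using the unit law of the $\IB{}{}$-algebra $a$. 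For the associativity law $\chi\comp\mun=\chi\comp\IBnu\chi$ I would use $\mun=\corec(\out,\out)$ and Lemma~\ref{lem:corec-coit}, rewriting $\mun=(\tuo\comp\out)^\klstar\comp\coit\out=\id^\klstar\comp\coit\out$ — actually more efficiently one expresses everything through \emph{compositionality} of $\argument^\istar$: the law $\chi\comp\mun=\chi\comp\IBnu\chi$ is precisely an instance of compositionality with $f=g=\out$ (possibly up to the coproduct reshuffling in $\filledsquare$), together with \emph{functoriality} to match the domains. This is the step I expect to be the main obstacle: getting the indices, the injections $\inl,\inr$, and the $m^{Y+X}$ reassociations in the definition of $f\filledsquare g$ to line up exactly with the Eilenberg-Moore associativity square requires careful diagram chasing, and it is here that the precise form of the $\filledsquare$ construction was designed to make things work.

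Finally I would show the two constructions are mutually inverse. Starting from $(A,\chi)$, passing to $(A,a,\argument^\istar)$ and back gives the algebra $\out^\istar=\chi\comp\coit\out$; but $\coit\out=\id_{\IBnu A}$ by uniqueness of the identity coalgebra morphism (or by Lemma~\ref{lem:corec-coit} applied to $\corec(\out,u)$), so $\out^\istar=\chi$. Conversely, starting from $(A,a,\argument^\istar)$, passing to $(A,\out^\istar)$ and back yields the $\IB{}{}$-algebra $\out^\istar\comp\tuo\comp(\IB{\id}{\etan})$ with iteration $e\mapsto\out^\istar\comp(\coit e)$; the first equals $a$ by the solution axiom at a one-step coalgebra (represent $a$ as a loop terminating after one iteration, exactly as in the proof sketch of Proposition~\ref{prop:hom}), and the recovered iteration equals the original $\argument^\istar$ by Proposition~\ref{prop:unique} — in the free algebra solutions are unique, and both $\out^\istar\comp\coit e$ and $e^\istar$ satisfy the solution axiom. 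Since both assignments are literally the identity on underlying morphisms, functoriality in both directions and compatibility with composition are immediate, completing the proof that $\BC^{\IBnu}\cong\cbElg(\BC)$.
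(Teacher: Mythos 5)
Your overall route coincides with the paper's: the same two constructions, the solution/functoriality/compositionality checks via the finality of $\IBnu A$ and Lemmas~\ref{lem:corec-coit} and~\ref{lem:corec-nat}, the unit law via functoriality at $h=\etan$, the multiplication law as an instance of compositionality with $f=g=\out$ plus functoriality, and $\coit\out=\id$ for one round trip. However, one step is justified incorrectly. In the round trip $\cbElg(\BC)\to\BC^{\IBnu}\to\cbElg(\BC)$ you argue that the recovered iteration $e\mapsto\out^\istar\comp(\coit e)$ agrees with the given $\argument^\istar$ by Proposition~\ref{prop:unique}, ``both are solutions, hence equal''. That proposition asserts uniqueness of solutions only in \emph{free} complete Elgot $\IB{}{}$-algebras; here $(A,a,\argument^\istar)$ is an arbitrary complete Elgot algebra, where solutions are in general not unique (in the continuous algebras of Example~\ref{ex:cpoalg} the operator selects the \emph{least} of possibly many solutions), so equality does not follow from both being solutions. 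The correct argument, and the one the paper uses, is a one-line application of the \emph{functoriality} axiom of the given algebra to the coiteration identity $\out\comp(\coit e)=(\IB{\id}{(\coit e)})\comp e$, which gives $\out^\istar\comp(\coit e)=e^\istar$ directly; the identity $\out^\istar\comp\tuo\comp(\IB{\id}{\etan})=a$ then follows from the solution axiom for $\out$ together with the already-established unit law $\out^\istar\comp\etan=\id$, with no need for the one-step-loop trick of Proposition~\ref{prop:hom}.

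A second, smaller omission: declaring functoriality of both assignments ``immediate'' because they are the identity on underlying morphisms skips a genuine verification, namely that the \emph{same} $\BC$-morphism satisfies the morphism condition of the target category. One must check that an $\IBnu$-algebra homomorphism $f:(A,\chi)\to(B,\zeta)$ satisfies $((\IB{f}{\id})\comp e)^\iistar=f\comp e^\istar$ for all $e:X\to\IB{A}{X}$ — this uses Lemma~\ref{lem:corec-nat} in the form $\IBnu f\comp(\coit e)=\coit((\IB{f}{\id})\comp e)$ — and conversely that a complete Elgot $\hash$-algebra morphism is an $\IBnu$-algebra morphism, which uses the functoriality axiom applied to $\out\comp\IBnu f=(\IB{\id}{\IBnu f})\comp(\IB{f}{\id})\comp\out$ together with the defining condition of Elgot-algebra morphisms. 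With these two repairs your outline reproduces the paper's proof.
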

\begin{proof*}{Sketch} 
For the direction from $\BC^{\IBnu}$ to $\cbElg(\BC)$ we have to verify the axioms of
complete Elgot $\IB{}{}$-algebras. The hardest case is that of the compositionality
identity.   We have on the one hand
  \begin{flalign*}
    &&(f^{\istar} \bullet g)^{\istar} =&~ \chi \comp \coit(f^{\istar} \bullet g) \\
    &&=&~\chi \comp \coit\left((\IB{\chi \comp (\coit f)}{\id}) \comp g\right) \\
    &&=&~\chi \comp \coit\left((\IB{\chi}{\id}) \comp (\IB{(\coit f)}{\id}) \comp g\right) \\
    &&=&~\chi \comp (\IBnu{\chi}) \comp \coit((\IB{(\coit f)}{\id}) \comp g) & \by{Lemma~\ref{lem:corec-nat}} \\
    &&=&~\chi \comp \mun  \coit((\IB{(\coit f)}{\id}) \comp g)  & \by{$\chi$ is an $\IBnu$-algebra} \\
    &&=&~\chi \comp \corec((\IB{(\coit f)}{\id}) \comp g, \out), & \by{Lemma~\ref{lem:corec-coit}}
\intertext{and on the other hand, by definition,
}
    && (f \filledsquare g)^{\istar} \comp \inr =&~
\chi \comp \coit(m^{Y+X}_A \comp (\IB{((\IB{\id}{\inl}) \comp f)}{\inr}) \comp [u_Y^X, g]) \comp \inr.&
\end{flalign*} 
Let us denote $m^{Y+X}_A \comp (\IB{((\IB{\id}{\inl}) \comp f)}{\inr}) \comp
  [u_Y^X, g]$ by $h$. By Lemma~\ref{lem:IBcorec}, it suffices to show the identity
$
\out \comp (\coit h) \comp \inr = m^{\IBnu{A}}_A \comp (\IB{\out}{((\coit h) \comp \inr)}) \comp (\IB{\coit f}{\id}) \comp g.
$ 
The latter is easy to obtain from the auxiliary equation $(\coit h)\comp\inl=\coit f$ whose proof is a routine.

For the direction from $\cbElg(\BC)$ to $\BC^{\IBnu}$, we have to prove the two
axioms of Eilenberg-Moore algebras. The harder one is 
$\out^\istar\comp\IBnu(\out^\istar)=\out^\istar\comp\mu^\nu$ and it is obtained 
from the instance of compositionality 
$(\out \filledsquare \out)^{\istar}\inr=(\out^{\istar} \bullet \out)^{\istar}$ 
by establishing
$\out^{\istar} \comp [\id, \mun] = (\out \filledsquare \out)^{\istar}$ and
$ \out^{\istar} \comp \IBnu{(\out^{\istar})} = (\out^{\istar} \bullet \out)^{\istar}$.
Further calculations ensure that the correspondence between $\cbElg(\BC)$ and
$\BC^{\IBnu}$ is functional and moreover an isomorphism.
\qed\end{proof*}
\begin{cor}\label{cor:algebras-iso}
  Free complete Elgot $\IB{}{}$-algebras exist for all objects $A$ of\/ $\BC$ if and only if
  the final coalgebras $\IBnu{A}$ exist. The functor assigning to an object
  $A$ its free complete Elgot $\IB{}{}$-algebra is
  $
    FA = \bigl( \IBnu{A}, \tuo[A] \comp m_A^{\IBnu{A}} \comp
      (\IB{\out_A}{\id_{\IBnu{A}}}), \argument^\dagger \bigr),
  $
  where $\argument^\dagger$ is the iteration operation defined in Remark~\ref{rem:it}.
\end{cor}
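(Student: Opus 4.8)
The plan is to treat the two implications of the equivalence separately and then read off the formula for $FA$ from the ``if'' direction. For the ``only if'' direction, suppose free complete Elgot $\IB{}{}$-algebras exist for all objects. Fixing $X$ and applying Clause~(ii) of Theorem~\ref{thm:eq}, the composite $\varphi_X\comp(\IB{\eta_X}{\id})\colon\IB{X}{FX}\to FX$ is an isomorphism whose inverse is the structure of a final $(\IB{X}{\argument})$-coalgebra, so $\IBnu X$ exists; since $X$ is arbitrary, all final coalgebras exist.

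For the ``if'' direction, suppose all $\IBnu A$ exist, so that Proposition~\ref{prop:tarmo} makes $\IBnu$ the underlying functor of a monad and the standing hypothesis of Theorem~\ref{thm:algebras-iso} is in force. The isomorphism $I\colon\BC^{\IBnu}\to\cbElg(\BC)$ of Theorem~\ref{thm:algebras-iso} is the identity on morphisms and preserves carriers, hence commutes with the forgetful functors to $\BC$; therefore applying $I$ to the free Eilenberg-Moore algebra $(\IBnu A,\mun_A)$ on $A$, whose universal morphism is $\etan_A$, yields a free complete Elgot $\IB{}{}$-algebra on $A$ with the same carrier $\IBnu A$ and the same universal morphism $\etan_A=\tuo[A]\comp u_A^{\IBnu A}$ (which is exactly the morphism named in Clause~(i) of Theorem~\ref{thm:eq}). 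By the description in Theorem~\ref{thm:algebras-iso}, its structure map is $\mun_A\comp\tuo[\IBnu A]\comp(\IB{\id}{\etan_{\IBnu A}})$ and its iteration sends $e\colon X\to\IB{\IBnu A}{X}$ to $\mun_A\comp(\coit e)$. It remains to identify these with the data in the statement.

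For the structure map, write $\mun_A=\id^{\klstar}$ and instantiate its characterising equation~\eqref{eq:klstar_unique} at $f=\id_{\IBnu A}$, obtaining $\out_A\comp\mun_A=m_A^{\IBnu A}\comp(\IB{\out_A}{\mun_A})\comp\out_{\IBnu A}$. Precomposing with $\tuo[\IBnu A]\comp(\IB{\id}{\etan_{\IBnu A}})$ and simplifying with $\out_{\IBnu A}\comp\tuo[\IBnu A]=\id$, the unit law $\mun\comp\etan=\id$ and bifunctoriality of $\IB{}{}$, one finds that postcomposing either $\mun_A\comp\tuo[\IBnu A]\comp(\IB{\id}{\etan_{\IBnu A}})$ or $\tuo[A]\comp m_A^{\IBnu A}\comp(\IB{\out_A}{\id_{\IBnu A}})$ with the isomorphism $\out_A$ gives $m_A^{\IBnu A}\comp(\IB{\out_A}{\id_{\IBnu A}})$, so these two morphisms agree. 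For the iteration, Lemma~\ref{lem:corec-coit} gives $\mun_A\comp(\coit e)=(\tuo[A]\comp\out_A)^{\klstar}\comp(\coit e)=\corec(e,\out_A)$, which by Lemma~\ref{lem:IBcorec} is the unique $h\colon X\to\IBnu A$ with $\out_A\comp h=m_A^{\IBnu A}\comp(\IB{\out_A}{h})\comp e$. So it suffices to show that the morphism $(\coit c)\comp\inr$ of Remark~\ref{rem:it} solves this equation. One first notes the auxiliary identity $(\coit c)\comp\inl=\id_{\IBnu A}$: naturality of the parametrized-monad unit and the law $m\comp u=\id$ give $c\comp\inl=(\IB{\id}{\inl})\comp\out_A$, whence $(\coit c)\comp\inl$ is an endomorphism of the final coalgebra $(\IBnu A,\out_A)$, hence the identity. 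Then, expanding $c\comp\inr$, substituting it into $\out_A\comp(\coit c)=(\IB{\id}{\coit c})\comp c$, and simplifying with the monad-morphism laws of the parameter substitution $\coit c$, bifunctoriality of $\IB{}{}$ and $(\coit c)\comp\inl=\id$, one arrives at $\out_A\comp((\coit c)\comp\inr)=m_A^{\IBnu A}\comp(\IB{\out_A}{(\coit c)\comp\inr})\comp e$ as required; thus $e^{\dagger}=(\coit c)\comp\inr=\mun_A\comp(\coit e)$. This completes the ``if'' direction and the formula for $FA$, and in combination with Theorem~\ref{thm:algebras-iso} also establishes Clause~(i) of Theorem~\ref{thm:eq}.

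The only genuinely nontrivial point is this last reconciliation of the two presentations of the free iteration: the transported one, $e\mapsto\mun_A\comp(\coit e)$, coiterates into $\IBnu(\IBnu A)$ and then multiplies, whereas $\argument^{\dagger}$ of Remark~\ref{rem:it} coiterates once into $\IBnu A$ from the auxiliary coalgebra $c$ on $\IBnu A+X$ and takes the right coprojection. Bridging them comes down to the uniqueness half of primitive corecursion (Proposition~\ref{prop:corec}, via Lemma~\ref{lem:IBcorec}) plus the short-but-fiddly computation of $c\comp\inl$ and $c\comp\inr$ and a careful bookkeeping of the monad-morphism laws attached to the parameter substitutions in $\IB{}{}$.
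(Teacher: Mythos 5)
Your argument is correct and follows the route the paper intends: the ``only if'' direction is exactly Clause~(ii) of Theorem~\ref{thm:eq} (proved independently), and the ``if'' direction transports the free Eilenberg--Moore algebra $(\IBnu A,\mun_A)$ with unit $\etan_A$ across the carrier- and morphism-preserving isomorphism of Theorem~\ref{thm:algebras-iso}. Your two reconciliations --- of the structure map via~\eqref{eq:klstar_unique} at $f=\id$, and of the iteration via Lemmas~\ref{lem:IBcorec} and~\ref{lem:corec-coit} together with $(\coit c)\comp\inl=\id$ --- are precisely the details the paper leaves implicit (the latter mirroring the $(\coit h)\comp\inl=\coit f$ step in the compositionality part of the proof of Theorem~\ref{thm:algebras-iso}), and they check out.
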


\section{Algebras of Complete Elgot Monads}\label{sec:alg}
We are now in a position to apply the results on complete Elgot $\IB{}{}$-algebras
developed in the previous section to explore the connection between complete Elgot monads
and complete Elgot algebras. We briefly motivate our further technical contribution as follows.

Recall that given a monad $\BBT$ and an endofunctor $\Sigma$ over $\BC$, 
$\IB{X}{Y}=T(X+\Sigma Y)$ is a parametrized monad and therefore, by Proposition~\ref{prop:tarmo},
$\BBT_{\Sigma}$ given by~\eqref{eq:TF} is a monad. We reserve notation $\BBT_{\nu}$
for the special case when $\Sigma=\Id$:
\begin{align*}
T_{\nu} X =\nu\gamma.\, T(X+\gamma).
\end{align*}
From a computational point of view, $T_{\nu} X$ can be considered as a type of
processes triggering a computational effect formalized by $\BBT$ at each step and
eventually outputting values from $X$ in case of successful termination.
The unary operation captured by $\Sigma=\Id$ intuitively means the action of
\emph{delaying}. This perspective was previously pursued in~\cite{GoncharovSchroder13}.
Now, if $\BBT$ is a complete Elgot monad, or more generally, any monad equipped
with an iteration operator, we can define a \emph{collapsing morphism} 
$\delta_X:T_{\nu}X\to TX$ as follows:
\begin{align}\label{eq:collapse}
\delta_X = \Bigl(T_{\nu} X\xrightarrow{\out_X} T(X+T_{\nu}X) \Bigr)^\istar,
\end{align}
which intuitively flattens every possibly infinite sequence of computational steps
of $T_{\nu} X$ into a single step of $TX$. Let us illustrate this with the following
toy example.
\begin{expl}\label{expl:toy}
Let $TX=\PSet_{\omega_1}(A^\star\times X)$ where $\PSet_{\omega_1}$ is the countable
powerset functor and $A$ is some fixed alphabet of \emph{actions} like in Example~\ref{expl:proc}. 
We extend\/ $T$ to a monad~$\BBT$ by putting 
\begin{align*}
\eta_X(x)=\{(\varepsilon,x)\}&&\text{and}&& f^\klstar(s\subseteq A^\star\times X) = \{(ww',y)\mid (w,x)\in s, (w',y)\in f(x)\},
\end{align*}
where $\varepsilon\in A^*$ is the empty word and $f:X\to \PSet_{\omega1}(A^\star\times Y)$. It is easy
to see that~$\BBT$ is an $\omega$-continuous monad (see Example~\ref{def:omega-cont}) and hence
a complete Elgot monad with the iteration operator defined using least fixed points. 
An element of $TX$ is intuitively a countably branching process,
with results in $X$, at each step capable of executing a finite series of actions.
Now the collapsing morphism~\eqref{eq:collapse} for every process $p\in T_{\nu}\{\checkmark\}$ 
calculates the set $\mathsf{tr}(p)\subseteq A^\star$ of all sucessful traces of $p$. 
\end{expl}
As we will see latter (Theorem~\ref{thm:emon}~(i)), 
$
\bigl(TX,~T_\nu TX\xrightarrow{\delta_{TX}} TTX\xrightarrow{\mu_X} TX\bigr)
$ 
is a $\BBT_{\nu}$-algebra
and hence, by Theorem~\ref{thm:algebras-iso}, a complete Elgot $\IB{}{}$-algebra. 
We can now change the perspective and instead of $TX$ consider an arbitrary 
complete Elgot $\IB{}{}$-algebra. The question we consider next is: Is it possible
to recover the laws of iteration for $\BBT$ assuming that every $\BBT$ is coherently
equipped with the structure of a complete Elgot $\IB{}{}$-algebra? It turns out that
without any further assumptions on the category of complete Elgot $\IB{}{}$-algebras
almost all laws of complete Elgot monads become derivable. More precisely, we introduce
the following class of monads.
\begin{defn}
A monad\/ $\BBT$ is called a \emph{weak complete Elgot monad} if it is equipped with an
iteration operator $\argument^\istar$ that satisfies \emph{fixpoint}, \emph{naturality},
and \emph{uniformity} axioms and the following identity: for any $g:X\kto Y + X$, $f:Y\kto Z+Y$ we have
\begin{align}\label{eq:waterfall2}
  \Bigl(Y+X \xrightarrow{[\inl, g]} Y + X \xrightarrow{f +\id} Z + Y + X\Bigr)^\istar \inr = X \xrightarrow{g^\istar} Y \xrightarrow{f^\istar} Z.
\end{align}
(See Fig.~\ref{fig:ax_weak} for the pictorial form.)%
\end{defn}
It is relatively easy to deduce~\eqref{eq:waterfall2} from the codiagonal identity, hence
we obtain
\begin{figure}[t!]
\tikzset{
    font=\tiny,
    nonterminal/.style={
      rectangle,
      minimum size=6mm,
      very thick,
      draw=orange!50!black!50,         
      fill=orange!50!white,
      font=\itshape
    },
    terminal/.style={
      scale=.5,
      circle,
      inner sep=0pt,
      thin,draw=black!50,
      top color=white,bottom color=black!20,
      font=\ttfamily
    },
    iterated/.style={
      fill=green!20,
      thick,
      draw=green!50
    },
    natural/.style={
      circle,
      minimum size=4mm,
      inner sep=2pt,
      thin,draw=black!50,
      top color=white,bottom color=black!20,
      font=\ttfamily},
    skip loop/.style={to path={-- ++(0,#1) -| (\tikztotarget)}}
  }

  {
    \tikzset{nonterminal/.append style={text height=1.5ex,text depth=.25ex}}
    \tikzset{natural/.append style={text height=1.5ex,text depth=.25ex}}
  }
  \captionsetup[subfigure]{labelformat=empty,justification=justified,singlelinecheck=false}
  \pgfdeclarelayer{background}
  \pgfdeclarelayer{foreground}
  \pgfsetlayers{background,main,foreground}

    \centering
    \raisebox{-.5\height}{
    \begin{tikzpicture}[
      point/.style={coordinate},>=stealth',thick,draw=black!50,
      tip/.style={->,shorten >=0.007pt},every join/.style={rounded corners},
      hv path/.style={to path={-| (\tikztotarget)}},
      vh path/.style={to path={|- (\tikztotarget)}},
      text height=1.5ex,text depth=.25ex 
      ]
      \node [nonterminal] (f) {$g$};
      \node [nonterminal] (g) at ($(f.east)+(1.5,.75)$) {$f$};
      \draw [<-] (f.west) -- +(-1,0) node [midway,above] {$X$};
      \path [<-,draw] (f.west)++(-0.5,0) 
        -- ++(0,-0.75) 
        -| ($(f.east)+(2.5,-.15)$) node [near end,right] {$X$} 
        -- ($(f.east)+(0,-.15)$);
      \draw [->] ($(f.east)+(0,0.15)$) -| ($(f.east)+(0.45,0.15)$) node [pos=0.2,above] {$Y$}
        -- ($(f.east)+(0.45,.75)$);
      \draw [->] ($(g.east)+(0,.15)$) -- +(1,0) node [midway,above] {$Z$};
      \draw [->] ($(g.east)-(0,.15)$) -- +(1,0) node [midway,below] {$Y$};
      \draw [<-] (g.west) -- +(-2.8,0) node [pos=0.82,above] {$Y$};
      \begin{pgfonlayer}{background}
        \draw [iterated] ($(f.north west)+(-0.25,-.85)$) rectangle ($(g.south east)+(0.25,0.85)$);
      \end{pgfonlayer}
    \end{tikzpicture}
}~~=~~
    \raisebox{-.5\height}{
    \begin{tikzpicture}[
      point/.style={coordinate},>=stealth',thick,draw=black!50,
      tip/.style={->,shorten >=0.007pt},every join/.style={rounded corners},
      hv path/.style={to path={-| (\tikztotarget)}},
      vh path/.style={to path={|- (\tikztotarget)}},
      text height=1.5ex,text depth=.25ex 
      ]
      \node [nonterminal] (f) {$g$};
      \node [nonterminal] (g) at ($(f.east)+(2.5,0.15)$) {$f$};
      \draw [<-] (f.west) -- +(-1,0) node [midway,above] {$X$};
      \path [<-,draw] (f.west)++(-0.5,0) -- ++(0,-0.8) -| ($(f.east)+(0.5,-0.15)$) node [near end,right] {$X$}  -- ++(-0.5,0);
      \draw [->] ($(f.east)+(0,0.15)$) -- (g) node [pos=0.25,above] {$Y$};
      \draw [->] ($(g.east)+(0,.15)$) -- ++(1,0) node [midway,above] {$Z$};
      \path [<-,draw] (g.west)++(-0.5,0) -- ++(0,-0.8) -| ($(g.east)+(0.5,-0.15)$) node [near end,right] {$Y$}  -- ++(-0.5,0);
      \begin{pgfonlayer}{background}
        \draw [iterated] ($(f.north west)+(-0.25,0.25)$) rectangle ($(f.south east)+(0.25,-0.25)$);
        \draw [iterated] ($(g.north west)+(-0.25,0.25)$) rectangle ($(g.south east)+(0.25,-0.25)$);
      \end{pgfonlayer}
    \end{tikzpicture}
}
\vspace{1ex}
\caption{The additional axiom for weak complete Elgot monads.}
\label{fig:ax_weak}
\end{figure}

\begin{prop}\label{prop:elgot_to_weak}
Any complete Elgot monad is a weak complete Elgot monad.
\end{prop}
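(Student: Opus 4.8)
Since the \emph{fixpoint}, \emph{naturality} and \emph{uniformity} axioms required of a weak complete Elgot monad already occur among the axioms of a complete Elgot monad, they transfer verbatim, and the whole content of the proposition is the identity~\eqref{eq:waterfall2}. The plan is to obtain~\eqref{eq:waterfall2} as an instance of the \emph{Beki\'{c} identity}, which is at our disposal by Proposition~\ref{prop:bekic-from-elgot}: a complete Elgot monad satisfies fixpoint, naturality, uniformity and Beki\'{c}.

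So fix $g : X \kto Y+X$ and $f : Y \kto Z+Y$. I would apply Beki\'{c} to $f' = \kinl \klcomp f : Y \kto (Z+Y)+X$, where $\kinl : Z+Y\kto (Z+Y)+X$, in the r\^ole of the ``$f$'' of Proposition~\ref{prop:bekic-from-elgot}, and to $g' = (f\oplus\id_X)\klcomp g : X\kto (Z+Y)+X$ in the r\^ole of its ``$g$''. A one-line computation with copairs gives $[f',g'] = (f\oplus\id_X)\klcomp[\kinl,g]$, so that $T\alpha\comp[f',g']$ is exactly the morphism iterated on the left-hand side of~\eqref{eq:waterfall2} (its codomain $(Z+Y)+X$ re-associated to $Z+(Y+X)$ via $\alpha$). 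Beki\'{c} then yields
\[
  \bigl(T\alpha\comp[f',g']\bigr)^{\istar} = [\eta, h^{\istar}] \klcomp [\kinr, {g'}^{\istar}], \qquad h = [\eta,{g'}^{\istar}]\klcomp f' : Y\kto Z+Y ,
\]
and post-composing with $\kinr : X\kto Y+X$ leaves $\bigl(T\alpha\comp[f',g']\bigr)^{\istar}\klcomp\kinr = [\eta, h^{\istar}]\klcomp {g'}^{\istar}$.

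It then remains to evaluate the two occurring pieces. First, since $\eta_X$ is the Kleisli identity we have $f\oplus\eta_X = f\oplus\id_X$, so \emph{naturality} gives ${g'}^{\istar} = \bigl((f\oplus\id_X)\klcomp g\bigr)^{\istar} = f\klcomp g^{\istar}$. Second, because $[\eta,{g'}^{\istar}]\klcomp\kinl = \eta$ and $\eta$ is the Kleisli identity, $h = [\eta,{g'}^{\istar}]\klcomp\kinl\klcomp f = f$, whence $h^{\istar} = f^{\istar}$. Substituting these and applying the \emph{fixpoint} axiom $f^{\istar} = [\eta,f^{\istar}]\klcomp f$ at the last step,
\[
  \bigl(T\alpha\comp[f',g']\bigr)^{\istar}\klcomp\kinr = [\eta,f^{\istar}]\klcomp (f\klcomp g^{\istar}) = \bigl([\eta,f^{\istar}]\klcomp f\bigr)\klcomp g^{\istar} = f^{\istar}\klcomp g^{\istar} ,
\]
which is exactly~\eqref{eq:waterfall2}.

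I do not expect a genuine obstacle: the argument is essentially bookkeeping, and the only point needing care is matching the Beki\'{c} instance to~\eqref{eq:waterfall2}, i.e.\ tracking the coproduct injections and the associativity isomorphism $\alpha : (Z+Y)+X \to Z+(Y+X)$ so that $T\alpha\comp[f',g']$ really coincides with $(f\oplus\id_X)\klcomp[\kinl,g]$ read with its codomain re-associated as in the statement. One could alternatively stay closer to the remark preceding the proposition and extract~\eqref{eq:waterfall2} directly from the \emph{codiagonal} axiom, viewing the feedback object $Y+X$ as a ``doubled'' iteration variable to be collapsed by an $[\eta,\kinr]$-morphism; that route is viable but needs a more delicate set-up, so going through Beki\'{c} seems the smoother option.
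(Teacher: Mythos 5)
Your argument is correct, and the bookkeeping checks out: with $f'=\kinl\klcomp f$ and $g'=(f\oplus\eta)\klcomp g$ one indeed has $[f',g']=(f\oplus\eta)\klcomp[\kinl,g]$, the Beki\'c instance types match, ${g'}^\istar=f\klcomp g^\istar$ is exactly naturality, $h=f$ because $[\eta,{g'}^\istar]\klcomp\kinl=\eta$, and the final step is the fixpoint law, so $(T\alpha\comp[f',g'])^\istar\klcomp\kinr=f^\istar\klcomp g^\istar$, which is~\eqref{eq:waterfall2} once the codomain is re-associated as you note. However, this is a genuinely different route from the paper's. The paper proves~\eqref{eq:waterfall2} directly from the \emph{codiagonal} axiom: it forms the morphism $w:Y+X\kto Z+(Y+X)+(Y+X)$ obtained from $(f\oplus\eta)\klcomp[\kinl,g]$ by splitting the feedback into two copies of $Y+X$, uses codiagonal to rewrite the left-hand side as $w^{\istar\istar}\klcomp\kinr$, and then evaluates the inner and outer iterations with uniformity (functoriality in the injections), naturality and fixpoint. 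Your route via Proposition~\ref{prop:bekic-from-elgot} is shorter and arguably more conceptual, since Beki\'c is precisely the ``mutual recursion'' principle that~\eqref{eq:waterfall2} specializes; but it rests on a heavier prerequisite, as the paper's proof of Beki\'c itself goes through the derived dinaturality law (Proposition~\ref{prop:dinaturality-from-double}), so the total derivation is longer, whereas the paper's direct argument uses only the four defining axioms. Both are valid given that Beki\'c is established earlier in the paper; your closing remark correctly identifies the codiagonal route as the alternative — that is in fact the one the authors take.
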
  
We now can establish a tight connection between weak complete Elgot monads and
complete Elgot $\IB{}{}$-algebras.
\begin{thm}\label{thm:emon}
Let $\BBT$ be a monad on $\BC$ and let $\IB{X}{Y} = T(X + Y)$.
\begin{enumerate}
 \item If\/ $\BBT = (T, \eta, \argument^\klstar, \argument^\istar)$ is a weak 
complete Elgot monad then $\BC^{\BBT}$ is isomorphic to
the full subcategory of $\ \cbElg(\BC)$ formed by those complete Elgot $\IB{}{}$\dash algebras 
$(A,a:T(A+A)\to A,\argument^\iistar)$ which factor through $T\nabla:T(A+A)\to TA$ and
for which $e^\iistar = a\comp (T\inl)\comp e^\istar$ for every $e:X\to T(A+X)$.
 \item Conversely, any functor $J:\BC^{\BBT}\to \cbElg(\BC)$ sending a $\BBT$-algebra
\mbox{$a:TA\to A$} to $a \comp (T\nabla):T(A+A)\to A$ and identical on morphisms
induces a weak complete Elgot monad structure on $\BBT$ as follows:
\begin{align}\label{eq:iistar_def}
\frac{e:X\to T(Y+X)}{e^\istar = (T(\eta+\id)\comp e)^\iistar:X\to TY}
\end{align}
where $\argument^{\iistar}$ is the iteration operator on $J(TY,\mu)$ (by Clause~(i), 
$J$ is then full and faithful). 
\end{enumerate}
\end{thm}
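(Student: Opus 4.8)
The plan is to exhibit the isomorphism directly. Send a $\BBT$-algebra $(A,\alpha\colon TA\to A)$ to the $\IB{}{}$-algebra $\alpha\comp T\nabla\colon T(A+A)\to A$ (a legitimate $\IB{}{}$-algebra by Corollary~\ref{cor:T-Sigma-bialg}) equipped with
\[
  e^{\iistar}\ :=\ \bigl(X\xrightarrow{e^{\istar}}TA\xrightarrow{\alpha}A\bigr),\qquad e\colon X\to T(A+X),
\]
where $e^{\istar}$ comes from the weak Elgot structure of $\BBT$; I would show this defines a functor $J\colon\BC^{\BBT}\to\cbElg(\BC)$, identical on morphisms, that is an isomorphism onto the stated subcategory. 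For the \emph{complete Elgot $\IB{}{}$-algebra axioms}: using the elementary identity $\alpha\comp h^{\klstar}=\alpha\comp T(\alpha\comp h)$ valid for any $\BBT$-algebra $\alpha$, the \emph{solution} axiom reduces to $\alpha\comp e^{\istar}=\alpha\comp T[\id,\alpha\comp e^{\istar}]\comp e$, which is the fixpoint axiom of $\BBT$ postcomposed with $\alpha$; the \emph{functoriality} axiom is the uniformity axiom of $\BBT$, since $f\comp h=T(\id+h)\comp e$ unwinds precisely to the premise $f\klcomp\ul h=(\eta\oplus\ul h)\klcomp e$; and the \emph{compositionality} axiom is the crux, for which I would first prove a \textbf{matching lemma}: for $\IB{X}{Y}=T(X+Y)$, unfolding $u$ and $m$ as in Example~\ref{ex:pmons}(i) with $\Sigma=\Id$ shows that $f\filledsquare g$ coincides, up to the canonical reassociation, with the left-hand morphism $(f\oplus\id)\klcomp[\kinl,g]$ of~\eqref{eq:waterfall2}; granting this, compositionality becomes exactly the weak-Elgot identity~\eqref{eq:waterfall2} (combined with $\alpha\comp h^{\klstar}=\alpha\comp T(\alpha\comp h)$ and the naturality axiom of $\BBT$ to rewrite both sides as $\alpha\comp T f^{\iistar}\comp g^{\istar}$). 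Finally, $J$ is a bijective-on-morphisms functor onto the subcategory: a $\BBT$-algebra morphism between images is an $\IB{}{}$-algebra morphism by Corollary~\ref{cor:T-Sigma-bialg} and preserves $\iistar$ by naturality of $\BBT$, while conversely every complete Elgot $\IB{}{}$-algebra morphism is an $\IB{}{}$-algebra morphism by Proposition~\ref{prop:hom}; and the two side conditions on an object of the subcategory force it to equal $J$ applied to its underlying $\BBT$-algebra $a\comp T\inl$.

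\textbf{Clause (ii).} Write $\iistar_{Y}$ for the iteration of $J(TY,\mu)$ and $e_{f}:=T(\eta_{Y}+\id)\comp f$, so that $f^{\istar}=e_{f}^{\iistar_{Y}}$ by~\eqref{eq:iistar_def}. The \emph{fixpoint} axiom follows by rewriting the solution axiom of $J(TY,\mu)$ via $\mu\comp T[\id_{TY},s]\comp T(\eta_{Y}+\id)=[\eta_{Y},s]^{\klstar}$; the \emph{uniformity} axiom follows from the functoriality axiom of $J(TY,\mu)$, since the two premises coincide after applying $T(\eta+\id)$. For identity~\eqref{eq:waterfall2} I would reuse the matching lemma: specialising the \emph{compositionality} axiom of $J(TZ,\mu)$ to the pair $\bigl(T(\eta_{Z}+\id)\comp f,\ g\bigr)$ — whose first component has $\iistar_{Z}$-iterate $f^{\istar}$ by definition — and using that the free-algebra morphism $(f^{\istar})^{\klstar}\colon(TY,\mu)\to(TZ,\mu)$ preserves iteration (so that $\bigl(T(f^{\istar}+\id)\comp g\bigr)^{\iistar_{Z}}=(f^{\istar})^{\klstar}\comp g^{\istar}=f^{\istar}\klcomp g^{\istar}$) yields~\eqref{eq:waterfall2} exactly.

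\textbf{The main obstacle} is the \emph{naturality} axiom $g\klcomp f^{\istar}=((g\oplus\eta)\klcomp f)^{\istar}$ for $g\colon Y\kto Z$. Here I would exploit that $J$ is a \emph{functor}: since $g^{\klstar}\colon(TY,\mu)\to(TZ,\mu)$ is a $\BBT$-algebra morphism, $J(g^{\klstar})$ preserves iteration, which applied to $e_{f}$ gives $g\klcomp f^{\istar}=g^{\klstar}\comp e_{f}^{\iistar_{Y}}=\bigl(T(g+\id)\comp f\bigr)^{\iistar_{Z}}$, whereas by definition $((g\oplus\eta)\klcomp f)^{\istar}=\bigl(T(\eta_{Z}+\id)\comp(g\oplus\eta)^{\klstar}\comp f\bigr)^{\iistar_{Z}}$. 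The two coalgebras fed to $\iistar_{Z}$ differ only in that the "point" reindexing $\eta_{TZ}\comp g$ of the $Y$-branch is replaced by the "spread" reindexing $T\eta_{Z}\comp g$; the delicate point is that $\iistar_{Z}$ is nonetheless insensitive to this, and I expect to derive it from further instances of the functoriality of $J$ — on the multiplication morphism $\mu_{Z}\colon(TTZ,\mu)\to(TZ,\mu)$ and on morphisms between free $\BBT$-algebras — combined with the solution and functoriality axioms of the complete Elgot $\IB{}{}$-algebras involved, rewriting both iterates as $g^{\klstar}\comp f^{\istar}$. Once the four axioms hold, $\BBT$ is a weak complete Elgot monad, and a short computation — applying functoriality of $J$ to $\alpha\colon(TA,\mu)\to(A,\alpha)$ gives $\alpha\comp(T(\eta_{A}+\id)\comp e)^{\iistar_{A}}=e^{\iistar_{J(A,\alpha)}}$, so the iteration $J$ assigns to $(A,\alpha)$ is precisely the one of Clause~(i) — identifies $J$ with the isomorphism of Clause~(i), whence $J$ is full and faithful.
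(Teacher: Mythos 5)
Your clause (i), and the fixpoint, uniformity and waterfall parts of clause (ii), follow essentially the paper's own route (your derivation of \eqref{eq:waterfall2} via preservation of iteration by $J((f^\istar)^\klstar)$ is a harmless variation of the paper's final rewriting, which instead invokes naturality of $\argument^\istar$, already available at that point of the paper's proof). The genuine problem is exactly where you locate it: the naturality axiom in clause (ii) is not proved, only conjectured. Your reduction is correct as far as it goes: functoriality of $J$ at $g^\klstar$ gives $g\klcomp f^\istar=(T(g+\id)\comp f)^{\iistar}$, and by \eqref{eq:iistar_def} what remains is to identify this with $([T(\inl\comp\eta)\comp g,\eta\inr]^\klstar\comp f)^{\iistar}$, i.e.\ to show that the iteration of $J(TZ,\mu)$ does not distinguish feeding a $TZ$-output back as a ``point'' via $\eta\comp\inl$ from feeding it back ``spread'' via $T(\inl\comp\eta)$. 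But the tools you list for this step --- further instances of functoriality of $J$ together with the solution and functoriality axioms of the algebras --- cannot close it: the functoriality axiom of complete Elgot $\IB{}{}$-algebras only permits reindexing the state object $X$ and never changes how the $A$-outputs are reinjected, while functoriality of $J$ only lets you transport an iteration along a $\BBT$-algebra morphism applied on the output side, which is precisely the move you have already spent. Neither touches the point-versus-spread discrepancy; and since naturality is an independent axiom of weak complete Elgot monads, it will not fall out of fixpoint/uniformity-style reasoning alone.

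The missing ingredient is the compositionality axiom. The paper splits naturality into the base case $g=\eta\comp h$ (which your $g^\klstar$ argument already subsumes) and the case of the Kleisli morphism $\id:TY\kto Y$, i.e.\ the identity $\mu\comp f^\istar=([T\inl,\eta\inr]^\klstar\comp f)^\istar$ (its \eqref{eq:nat_id}), and these two cases are then recombined to give general naturality. The hard case \eqref{eq:nat_id} is obtained by applying compositionality in $J(TY,\mu)$ to the pair consisting of $T(\inl\comp\eta):TY\to T(TY+TY)$ and $f$, computing $(T(\inl\comp\eta))^\iistar=\id$ from the solution axiom, simplifying $(T(\inl\comp\eta)\filledsquare f)^{\iistar}\comp\inr$ with the algebra-functoriality axiom at $h=\inr$, and using that $\mu$ lies in the image of $J$ and hence preserves $\iistar$. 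Without an argument of this kind --- some essential use of compositionality, which your naturality sketch omits --- clause (ii) is incomplete at its central step, as you yourself anticipated.
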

If $\cbElg(\BC)$ additionally satisfy a version of the codiagonal identity, the 
construction from Clause~(ii) of Theorem~\ref{thm:emon} produces precisely complete
Elgot monads.
\begin{thm}\label{thm:elgot_from_alg}
Let $\BBT$ be a monad on $\BC$, let $\IB{X}{Y} = T(X + Y)$ and let 
$J:\BC^{\BBT}\to \cbElg(\BC)$ be a functor as in Clause~(ii) of Theorem~\ref{thm:emon}.
Then $\BBT$ is equipped with the structure of a weak complete Elgot monad given by~\eqref{eq:iistar_def}, 
and moreover $\BBT$ is a complete Elgot monad iff every $(A,a,\argument^\iistar)$ in $\cbElg(\BC)$
satisfies the equation
\begin{align}\label{eq:codiag_alg}
(m^X_{\IB{A}{X}}\comp e)^\iistar = (e^\iistar)^\iistar
\end{align} 
for every $e:X\to \IB{(\IB{A}{X})}{X}$ (this uses the fact that $\IB{A}{X}=T(A+X)$ is a 
free $\BBT$-algebra and hence a complete Elgot $\IB{}{}$-algebra).
\end{thm}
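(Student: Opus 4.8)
The first assertion — that $\BBT$ with the operator~\eqref{eq:iistar_def} is a weak complete Elgot monad — is exactly Clause~(ii) of Theorem~\ref{thm:emon}, so nothing new is needed there. Thus the plan is to prove the equivalence: $\BBT$ is a complete Elgot monad if and only if every complete Elgot $\IB{}{}$-algebra $(A,a,\argument^\iistar)$ satisfies~\eqref{eq:codiag_alg}. By Proposition~\ref{prop:bekic-from-elgot} (or directly by Definition~\ref{defn:elgot}), $\BBT$ being a complete Elgot monad amounts, on top of the weak axioms already available, to the \emph{codiagonal} identity $([\eta,\kinr]\klcomp g)^{\istar}=(g^{\istar})^{\istar}$ for $g:X\kto (Y+X)+X$. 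So the whole theorem reduces to showing that, under the standing hypotheses, codiagonal for $\argument^\istar$ is equivalent to equation~\eqref{eq:codiag_alg} for all complete Elgot $\IB{}{}$-algebras.

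\medskip\noindent\textbf{From \eqref{eq:codiag_alg} to codiagonal.}
Given $g : X \kto (Y+X)+X$, i.e.\ $g : X \to T((Y+X)+X)$, I would unfold both $(g^{\istar})^{\istar}$ and $([\eta,\kinr]\klcomp g)^{\istar}$ through the translation~\eqref{eq:iistar_def} into iterations $\argument^\iistar$ on the free complete Elgot $\IB{}{}$-algebra $\IB{A}{X} = T(A+X)$ — more precisely on the chain of free algebras $T(TY+\cdots)$ obtained by instantiating $A:=TY$. The key point is that applying~\eqref{eq:iistar_def} once turns $g^\istar$ into an $\argument^\iistar$-iteration whose carrier is itself $T(TY+X)$, a free $\BBT$-algebra and hence (by Corollary~\ref{cor:algebras-iso} / Theorem~\ref{thm:algebras-iso}) a complete Elgot $\IB{}{}$-algebra; iterating again produces exactly a left-hand side of the shape $(e^\iistar)^\iistar$. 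On the other side, the reindexing by $[\eta,\kinr]$ and the associativity isomorphism $(Y+X)+X\cong Y+(X+X)$ correspond, after translation, precisely to precomposition with the parametrized-monad multiplication $m^X_{\IB{A}{X}}$; this is the same bookkeeping already carried out in the sketch of Theorem~\ref{thm:algebras-iso} (the $\filledsquare$/$\bullet$ computation) and in Remark~\ref{rem:it}. So~\eqref{eq:codiag_alg}, applied to the relevant free algebra with the canonical $e$ built from $g$, yields codiagonal after we transport the identity back through the isomorphism $\BC^{\IBnu}\cong\cbElg(\BC)$ and strip off the two layers of~\eqref{eq:iistar_def}. The naturality and uniformity axioms (available since $\BBT$ is already weak complete Elgot) are used to massage the coproduct/injection manipulations, and Lemma~\ref{lem:corec-nat}, Lemma~\ref{lem:corec-coit} to rewrite the $\coit$/$\corec$ expressions.

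\medskip\noindent\textbf{From codiagonal to \eqref{eq:codiag_alg}.}
For the converse, assume $\BBT$ is a full complete Elgot monad; then so is $\BBTF$ for $\Sigma=\Id$, and by Theorem~\ref{thm:algebras-iso} every complete Elgot $\IB{}{}$-algebra is a $\BBTF$-algebra, i.e.\ an Eilenberg--Moore algebra $\chi:\IBnu A\to A$, with $e^\iistar=\chi\comp(\coit e)$. Under this presentation, $m^X_{\IB{A}{X}}\comp e$ and the double iteration $(e^\iistar)^\iistar$ become, via Lemma~\ref{lem:corec-nat}, Lemma~\ref{lem:corec-coit} and the $\BBTF$-algebra laws $\chi\comp\etan=\id$, $\chi\comp\mun=\chi\comp\IBnu\chi$, two expressions of the form $\chi\comp\coit(\dots)$; the identity~\eqref{eq:codiag_alg} then follows because the corresponding coalgebra maps into $\IBnu$ agree — which in turn is exactly the codiagonal law for the iteration operator $\argument^\dagger$ of the free complete Elgot algebra $FA$ (Corollary~\ref{cor:algebras-iso}, Remark~\ref{rem:it}), transported to $\chi$ via the algebra homomorphism. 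Since codiagonal for $\BBT$ propagates to codiagonal for $\BBTF$ (this is essentially the content of~\cite{GoncharovRauchEtAl15} that transformed monads of complete Elgot monads are again complete Elgot), we are done.

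\medskip\noindent\textbf{Main obstacle.}
The delicate point is the bookkeeping of the two nested applications of~\eqref{eq:iistar_def} and their interaction with the associativity isomorphism $\alpha$ hidden in the codiagonal axiom: one must check that the ``outer'' iteration on $T(TY+X)$ really is $\argument^\iistar$ for the \emph{free} algebra structure on $T(TY+X)$ (so that~\eqref{eq:codiag_alg} is applicable) and not some twisted variant, and that the $m^X_{\IB{A}{X}}$ appearing in~\eqref{eq:codiag_alg} matches on the nose the reindexing $[\eta,\kinr]\klcomp(\argument)$ after translation. This is where the uniformity axiom and the explicit description of $m$ from Example~\ref{ex:pmons}(i) (namely $m^X_A = [\id,\eta\inr]^\klstar$) must be combined carefully; all remaining steps are the kind of $\coit$/$\corec$ diagram chases already rehearsed in Section~\ref{sec:cealg}.
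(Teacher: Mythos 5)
Your reduction of the theorem to ``codiagonal for $\argument^\istar$ $\Leftrightarrow$ \eqref{eq:codiag_alg}'' is the right frame, and your direction \emph{from \eqref{eq:codiag_alg} to codiagonal} is essentially the paper's argument: one instantiates \eqref{eq:codiag_alg} at $A:=TY$ with the canonical morphism $e=T(\eta\comp(\eta+\id)+\id)\comp f$ for $f:X\to T((Y+X)+X)$, and strips the two layers of \eqref{eq:iistar_def} by a purely Kleisli-level computation using $m^X_A=[\id,\eta\inr]^\klstar$ and naturality. Note that no transport through $\BC^{\IBnu}\cong\cbElg(\BC)$ is needed (or used in the paper) here; invoking it only smuggles in an existence assumption on the final coalgebras that the theorem does not make.

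The genuine gap is in your converse direction (\emph{codiagonal $\Rightarrow$ \eqref{eq:codiag_alg}}). The pivotal sentence --- that the identity ``is exactly the codiagonal law for the iteration operator $\argument^\dagger$ of the free complete Elgot algebra $FA$, transported to $\chi$'' --- does not hold up. Complete Elgot $\IB{}{}$-algebras satisfy only solution, functoriality and compositionality; there is no codiagonal axiom for $\argument^\dagger$, and indeed the whole point of Theorem~\ref{thm:elgot_from_alg} is that \eqref{eq:codiag_alg} is an \emph{extra} condition not implied by the algebra axioms. Moreover, the inner iteration in \eqref{eq:codiag_alg} is the iteration of $J(T(A+X),\mu)$, i.e.\ of the free \emph{$\BBT$-algebra} $T(A+X)$, not of a free complete Elgot algebra $\IBnu Z$, so there is no instance of ``codiagonal for $\dagger$ of $FA$'' to transport along $\chi$ in the first place. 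Appealing to~\cite{GoncharovRauchEtAl15} (that $\BBT_{\nu}$ is again a complete Elgot monad) does not close this: that theorem concerns the Kleisli-level iteration of $\BBT_{\nu}$ on morphisms $X\to T_{\nu}(Y+X)$, and relating it to the mixed double iteration $(e^\iistar)^\iistar$ for an arbitrary Eilenberg--Moore algebra is an unproven bridge of essentially the same difficulty as the statement you are trying to prove; besides, it again needs $\IBnu$ to exist. The paper's proof of this direction is instead a short direct calculation: first establish $f^\iistar=a\comp f^\istar$ for every $\BBT$-algebra $(A,a)$ and $f:X\to\IB{A}{X}$ (because $a$ is a $\BBT$-algebra morphism $(TA,\mu)\to(A,a)$, hence --- $J$ being a functor which is the identity on morphisms --- a morphism of complete Elgot $\IB{}{}$-algebras, combined with \eqref{eq:iistar_def}), and then rewrite $(m^X_A\comp e)^\iistar$ into $(e^\iistar)^\iistar$ using exactly this identity together with $\BBT$'s codiagonal and naturality. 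You should replace the $\IBnu$-detour by this argument, or supply the missing bridge lemmas explicitly.
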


\section{Conclusions and Further Work}
We introduced the notion of complete Elgot algebra for a
parametrized monad, based on the previous work~\cite{Uustalu03,AdamekMiliusEtAl05}.
We showed that the category of complete Elgot algebras for a parametrized monad $\IB{}{}$ is
isomorphic to the category of Eilenberg-Moore algebras for the monad $\nu\gamma.\,\IB{\argument}{\gamma}$
whenever the latter exists. As the category of complete Elgot $\IB{}{}$-algebras is
given axiomatically, this can be considered as a form of soundness and completeness
result, specifically, it indicates that algebras for
$\nu\gamma.\,\IB{\argument}{\gamma}$ are subject to a lightweight theory of (uniform) 
iteration. 

We explored the connection between complete Elgot $\IB{}{}$-algebras 
for $\IB{X}{Y}=T(X+Y)$ and Eilenberg-Moore algebras of complete Elgot monads, i.e.\ monads
from~\cite{GoncharovRauchEtAl15} supporting a uniform iteration operator satisfying standard 
axioms of iteration.
Specifically, we showed that monads $\BBT$ whose algebras are coherently equipped with the structure
of a complete Elgot $\IB{}{}$-algebra are precisely complete Elgot monads with the 
codiagonal axiom replaced by its weakened form (Theorem~\ref{thm:emon}). Moreover,
if the category of complete Elgot $\IB{}{}$-algebras satisfies a variant of the 
codiagonal law, such monads $\BBT$ are complete Elgot monads (Theorem~\ref{thm:elgot_from_alg}).

As further work we plan to improve Theorem~\ref{thm:elgot_from_alg} to
obtain an intrinsic characterization of complete Elgot monads in the
style of Theorem~\ref{thm:emon} (i.e.~without assuming extra
properties of the complete Elgot algebras). We believe that the
results we obtained are potentially useful for facilitating
constructions over complete Elgot monads, in particular we a seeking
for a conceptual simplification for the sophisticated proofs
underlying the main result of~\cite{GoncharovRauchEtAl15} stating
that~\eqref{eq:TF} is a complete Elgot monad whenever $\BBT$ is. Also
we are interested in applications of the obtained results to semantics
of abstract side-effecting processes in the style
of~\cite{GoncharovSchroder13}.

\bibliographystyle{entcs}
\bibliography{monads}

\iffull 
\clearpage
\appendix
\allowdisplaybreaks

\section{Appendix: Omitted proofs}

\subsection*{Proof of Proposition~\ref{prop:bekic-from-elgot} (Beki\'{c} identity)}
  Let us show that complete Elgot monads validate the Beki\'{c}
  identity. Let
  \begin{align*}
    u = T((\id + \inl) + \inr) \comp [f, g] : Y + X \to T((Z+(Y+X))
    + (Y+X)).
  \end{align*}
  By \textit{codiagonal},
  \begin{equation}
    (T[\id, \inr] \comp u)^{\istar} =
    (u^{\istar})^{\istar}.\label{eq:u-codiag}
  \end{equation}
  Now the left-hand side of \eqref{eq:u-codiag} simplifies to
  \begin{align*}
    &~(T[\id, \inr] \comp T((\id + \inl) + \inr) \comp [f,g])^{\istar} \\
    =&~(T[\id + \inl, \inr \comp \inr] \comp [f, g])^{\istar} \\
    =&~(T\alpha \comp [f, g])^{\istar},
  \end{align*}  
  i.e.~to the left-hand side of the Beki\'{c} identity.  Now observe
  that, by \textit{uniformity} and \textit{naturality},
  \begin{align}\label{eq:bek_proof}
    u^{\istar} \comp \inr = (T(\id + \inl) + \id) \comp g)^{\istar} = T(\id
    + \inl) \comp g^{\istar}.
  \end{align}
  Therefore, the right-hand side of \eqref{eq:u-codiag} can be
  rewritten in the form
  \begin{flalign*}
    &&(u^{\istar})^{\istar} =&~ ([\eta, u^{\istar}]^{\klstar} \comp u)^{\istar}&\by{fixpoint} \\
    &&=&~ ([\eta \comp (\id + \inl), u^{\istar} \comp \inr] \comp [f, g])^{\istar} \\
    &&=&~ ([T(\id + \inl) \comp \eta, T(\id + \inl) \comp g^{\istar}]^{\klstar} \comp [f, g])^{\istar} &\by{\eqref{eq:bek_proof}}\\
    &&=&~ (T(\id + \inl) \comp [\eta, g^{\istar}]^{\klstar} \comp [f, g])^{\istar} \\
    &&=&~ (T(\id + \inl) \comp [[\eta, g^{\istar}]^{\klstar} \comp f, g^{\istar}])^{\istar} &\by{fixpoint} \\
    &&=&~ ([\eta \comp \inl, \eta \comp \inr \comp \inl]^{\klstar} \comp [[\eta, g^{\istar}]^{\klstar} \comp f, g^{\istar}])^{\istar} \\
    &&=&~ [\eta, ([\eta \comp \inl, [[\eta, g^{\istar}]^{\klstar} \comp f, g^{\istar}]]^{\klstar} \comp \eta \comp \inr \comp \inl)^{\istar}]^{\klstar}\\&&& ~~\comp [[\eta, g^{\istar}]^{\klstar} \comp f, g^{\istar}] &\by{dinaturality} \\
    &&=&~ [\eta, ([\eta, g^{\istar}]^{\klstar} \comp f)^{\istar}]^{\klstar} \comp [[\eta, g^{\istar}]^{\klstar} \comp f, g^{\istar}] \\
    &&=&~ [([\eta, g^{\istar}]^{\klstar} \comp f)^{\istar}, [\eta, ([\eta, g^{\istar}]^{\klstar} \comp f)^{\istar}]^{\klstar} \comp g^{\istar}] &\by{fixpoint} \\
    &&=&~ [h^{\istar}, [\eta, h^{\istar}]^{\klstar} \comp g^{\istar}] \\
    &&=&~ [\eta, h^{\istar}]^{\klstar} \comp [\eta \comp \inr, g^{\istar}],
  \end{flalign*}
  i.e.~equals the right-hand side of the Beki\'{c} identity.

  For the opposite direction, we need to show that the Beki\'{c}
  identity implies \textit{dinaturality} and \textit{codiagonal}. For
  the latter, let $k : X \to T((Y+X) + X)$. By the Beki\'{c} identity,
  \begin{align*}
    (T\alpha \comp [k,k])^{\istar} =&~ [\eta, ([\eta, k^{\istar}]^{\klstar} \comp
    k)^{\istar}]^{\klstar} \comp [\eta \comp \inr, k^{\istar}] \\
    =&~ [\eta, (k^{\istar})^{\istar}]^{\klstar} \comp [\eta \comp \inr, k^{\istar}].
\intertext{Thus, $(T\alpha \comp [k, k])^{\istar} \comp \inl = (T\alpha \comp [k,
  k])^{\istar} \comp \inr = (k^{\istar})^{\istar}$. On the other hand, by
  \textit{uniformity},}
  (T\alpha \comp [k, k])^{\istar} =&~ (T[\id, \inr] \comp k)^{\istar} \comp [\id, \id]
\intertext{and therefore}
  (k^{\istar})^{\istar} =&~ (T\alpha \comp [k, k])^{\istar} \comp \inr =
    (T[\id, \inr] \comp k)^{\istar}
\end{align*}
  as required.
  To prove \textit{dinaturality}, we define the term
  \[
    w = ([T(\id + \inr) \comp h, T(\id + \inl) \comp g])^{\istar}
  \]
  for $g : X \to T(Y+Z)$, $h: Z \to T(Y+X)$. By uniformity,
  \[ w \comp [\inr,\inl] = ([T(\id + \inr) \comp g, T(\id + \inl) \comp
    h])^{\istar}. \]
 The Beki\'{c} identity then gives us
  \begin{align*}
    w =&~ ([T(\id + \inr) \comp h, T(\id + \inl) \comp g])^{\istar} \\
    =&~ (T\alpha \comp [ T(\inl + \id) \comp h, T\inl \comp g])^{\istar} \\
    =&~ [\eta, ([\eta, (T\inl \comp g)^{\istar}]^{\klstar} \comp T(\inl + \id) \comp h)^{\istar}]^{\klstar}
       \comp [\eta \comp \inr, (T\inl \comp g)^{\istar}] \\
    =&~ [\eta, ([\eta \comp \inl, g]^{\klstar} \comp h)^{\istar}]^{\klstar} \comp [\eta \comp \inr, g]
\intertext{as well as}
    w \comp [\inr, \inl] =&~ [\eta, ([\eta \comp \inl, h]^{\klstar} \comp g)^{\istar}]^{\klstar} \comp [\eta \comp \inr, h].
  \end{align*}
  Therefore,
  \begin{flalign*}
    &&([\eta \comp \inl, h]^{\klstar} \comp g)^{\istar} = w \comp [\inr, \inl] \comp \inl 
    = w \comp \inr = [\eta, ([\eta \comp \inl, g]^{\klstar} \comp h)^{\istar}]^{\klstar} \comp g.&&\text{\qed}
  \end{flalign*}

\subsection*{Proof of Proposition~\ref{prop:T-Sigma-bialg}}
  We define constructions to convert from $\IB{}{}$-algebras to
  $\BBT$-$\Sigma$-bialgebras and vice versa.
\begin{enumerate}
\item Given a $\hash$-algebra $\alpha : T(A+\Sigma A) \to A$, let
  \begin{align*}
    a =&~ TA \xrightarrow{~T\inl~} T(A + \Sigma A) \xrightarrow{~~\alpha~~} A \\
    f =&~ \Sigma A \xrightarrow{~\eta \comp \inr~} T(A + \Sigma A) \xrightarrow{~~\alpha~~} A.
  \end{align*}
  We immediately check that $a$ satisfies the properties of a $\BBT$-algebra:

\vspace{-3ex}
  \begin{minipage}[t]{.4\linewidth}
  \begin{align*}
    a \comp \eta =&~ \alpha \comp (T\inl) \comp \eta\\
    =&~ \alpha \comp \eta \comp \inl \\
    =&~ \alpha \comp u\\
    =&~ \id
  \end{align*}
  \end{minipage}
  \begin{minipage}[t]{.6\linewidth}
  \begin{align*}
    a \comp \mu =&~ \alpha \comp (T\inl) \comp \mu \\
    =&~ \alpha \comp \mu \comp (TT\inl) \\
    =&~ \alpha \comp \mu \comp T[\id,\eta \comp \inr] \comp (T\inl) \comp (TT\inl) \\
    =&~ \alpha \comp m \comp (T\inl) \comp (TT\inl) \\
    =&~ \alpha \comp T(\alpha + \id) \comp (T\inl) \comp (TT\inl) \\
    =&~ \alpha \comp (T\inl) \comp (T\alpha) \comp \comp (TT\inl) \\
    =&~ a \comp (Ta)\\[-2ex]
  \end{align*}
  \end{minipage}
\item Conversely, given a bialgebra $TA \xrightarrow{~~a~~} A \xleftarrow{~~f~~} \Sigma
  A$, form
  \[ \alpha = T(A + \Sigma A) \xrightarrow{~~T[\id, f]~~} TA \xrightarrow{~~a~~} A. \]
  The constructed $\alpha$ is a $\hash$-algebra:

\vspace{-3ex}
  \begin{minipage}[t]{.4\linewidth}
  \begin{align*}
    \alpha \comp u =&~ a \comp T[\id, f] \comp \eta \comp \inl\\
    =&~ a \comp \eta \\
    =&~ \id
  \end{align*}
  \end{minipage}
  \begin{minipage}[t]{.6\linewidth}
  \begin{align*}
    \alpha \comp m =&~ a \comp T[\id, f] \mu \comp T[\id, \eta \comp \inr] \\
    =&~ a \comp \mu \comp TT[\id, f] \comp T[\id, \eta \comp \inr] \\
    =&~ a \comp Ta \comp TT[\id, f] \comp T[\id, \eta \comp \inr] \\
    =&~ a \comp Ta \comp T[T[\id, f], \eta \comp f] \\
    =&~ a \comp T[a \comp T[\id, f], f] \\
    =&~ a \comp T[\id, f] \comp T[\inl \comp a \comp T[\id, f], \inr] \\
    =&~ \alpha \comp T(\alpha+ \id)
  \end{align*}
  \end{minipage}
\end{enumerate}
Next, we show that the passages (i) and (ii) are mutually inverse.
\begin{itemize}
\item From bialgebras to $\hash$-algebras and back: Given the bialgebra $TA
  \xrightarrow{~a~}A\xleftarrow{~f~}\Sigma A$, constructing $\alpha = a \comp T[\id,
  f]$ as in (ii), one obtains back $a$ and $f$ using (i):
  \begin{align*}
    \alpha \comp (T\inl) =&~ a \comp T[\id, f] \comp (T\inl) = a \\
    \alpha \comp \eta \comp \inr =&~ a \comp T[\id, f] \comp \eta \comp \inr = a \comp \eta \comp f = f
  \end{align*}
\item From $\hash$-algebras to bialgebras and back: Given $\alpha : T(A + \Sigma
  A) \to A$, we construct $a = \alpha \comp (T\inl)$ and $f = \alpha \comp \eta\inr$ and obtain:
  \begin{align*}
    a \comp T[\id, f] =&~ \alpha \comp (T\inl) \comp T[\id, \alpha \comp \eta \comp \inr] \\
    =&~ \alpha \comp (T\inl) \comp T[\alpha \comp u, \alpha \comp \eta \comp \inr] \\
    =&~ \alpha \comp (T\inl) \comp T[\alpha \comp \eta \comp \inl, \alpha \comp \eta \comp \inr] \\
    =&~ \alpha \comp (T\inl) \comp (T\alpha) \comp (T\eta) \\
    =&~ \alpha \comp T(\alpha + \id) \comp (T\inl) \comp (T\eta) \\
    =&~ \alpha \comp m \comp (T\inl) \comp (T\eta) \\
    =&~ \alpha \comp \mu \comp T[\id, \eta \comp \inr] \comp (T\inl) \comp (T\eta) \\
    =&~ \alpha \comp \mu \comp (T\eta) \\
    =&~ \alpha
  \end{align*}
\end{itemize}
\todo{extend to functors.}
\qed

\subsection*{Details for Example~\ref{ex:cpoalg}}

We verify the three axioms of complete Elgot $\IB{}{}$-algebras for $A$ equipped with the least solution. 
\begin{itemize}
\item solution: this clearly holds because $e^\istar$ is a fixed point of the map $s \mapsto a \comp (\IB{\id}{s}) \comp e$. 
\item uniformity: let $e: X \to \IB{A}{X}$, $f: Y \to \IB{A}{Y}$ and $h: X \to Y$ such that $fh = (\IB{\id}{h})e$ holds. In order to show that $f^\istar h = e^\istar$ we show by induction that for every $i$ we have 
  \[
    f_i^\istar h = e_i^\istar.
  \]
  The base case is clear: since composition is left strict we have $\bot h = \bot$. For the induction step we compute:
  \begin{align*}
    f_{i+1}^\istar h &= a \comp (\IB{\id}{f_i^\istar})\comp f \comp h \\
    &= a \comp (\IB{\id}{f_i^\istar})\comp(\IB{\id}{h})\comp e \\
    &= a \comp (\IB{\id}{f_i^\istar \comp h})\comp e \\
    &= a \comp (\IB{\id}{e_i^\istar})\comp e \\
    &= e_{i+1}^\istar.
  \end{align*}
  \def\fq{\filledsquare}
\item compositionality: let $f: Y \to \IB{A}{Y}$ and
  $g: X \to \IB{Y}{X}$. The desired equation
  $(f \fq g)^\istar\inr = (f^\istar \bullet g)^\istar$ is
  established by proving by induction the following two inequalities
  for every $i$:
  \begin{align}
    (f \fq g)_i^\istar\inr &\sqsubseteq (f^\istar \bullet g)^\istar\label{eq:ineq1} \\
    (f^\istar \bullet g)_i^\istar &\sqsubseteq (f \fq g)^\istar\inr\label{eq:ineq2}
  \end{align}
  For~\eqref{eq:ineq1}, the base case is clear by left strictness. For the induction step first note that $(f \fq g)\inl = (\IB{\id}{\inl})f$, thus $(f \fq g)^\istar\inl = f^\istar$ by uniformity. Now we consider the following diagram 
  \[\scriptsize
    \begin{tikzcd}
      X 
      \ar[rrrr, "(f^\istar \bullet g)^\istar"]
      \ar[rrd, "\inr"]
      \ar[dddd, "g"]
      & &\mbox{ } & &
      A
      \\
      &
      &
      Y+X
      \arrow{rru}[below]{(f\fq g)_{i+1}^\istar}
      \descto{u}{\begin{turn}{90}$\sqsubseteq$\end{turn}}
      \ar[d, "f \fq g"]
      \\
      &
      &
      \IB{A}{(Y+X)}
      \ar[r, "\IB{\id}{(f \fq g)_i^\istar}"]
      &
      \IB{A}{A}
      \ar[ruu, "a"]
      \\
      &
      \IB{(\IB{A}{(Y+X)})}{(Y+X)}
      \ar[ru, "m"]
      \ar[rr, "\IB{(\IB{\id}{(f \fq g)_i^\istar})}{(f \fq g)_i^\istar}"]
      &\mbox{ }&
      \IB{(\IB{A}{A})}{A}
      \ar[u, "m"]
      \ar[rdd, "\IB{a}{\id}"]
      \\
      \IB{Y}{X}
      \ar[r, "\IB{f}{\id}"]
      \ar[d, "\IB{f^\istar}{\id}"]
      &
      \IB{(\IB{A}{Y})}{X}
      \ar[u, "\IB{(\IB{\id}{\inl})}{\inr}"]
      \arrow{rru}[below,xshift=15pt]{\IB{(\IB{\id}{f^\istar})}{(f^\istar \bullet g)^\istar}}
      \descto{ru}{\begin{turn}{-77}$\sqsubseteq$\end{turn}}
      \\
      \IB{A}{X}
      \ar[rrrr, "\IB{\id}{(f^\istar \bullet g)^\istar}"]
      &&&&
      \IB{A}{A}
      \ar[uuuuu, "a"]
    \end{tikzcd}\normalsize
  \]
  We are to prove the inequality in the upper triangle. We start with
  the inequality in the middle triangle; it holds by the induction
  hypothesis and since
  $(f \fq g)_i^\istar \inl \sqsubseteq (f\fq g)^\istar \inl =
  f^\istar$. The other inner parts clearly commute; for the lower
  square consider the left- and right-hand components of $\IB{}{}$
  separately: the right-hand one commutes trivially, and for the
  left-hand one use the solution axiom for $f$. Since the outside of the 
  diagram also commutes by the solution axiom, we obtain the desired inequality 
  in the upper triangle. 

  For~\eqref{eq:ineq2}, the base case is clear once again. For the induction step we consider the diagram below:
  \[\scriptsize
    \begin{tikzcd}
      X 
      \ar[rrrr, "(f^\istar \bullet g)_{i+1}^\istar"]
      \ar[rrd, "\inr"]
      \ar[dddd, "g"]
      & &\mbox{ } & &
      A
      \\
      &
      &
      Y+X
      \arrow{rru}[below]{(f\fq g)^\istar}
      \descto{u}{\begin{turn}{-90}$\sqsubseteq$\end{turn}}
      \ar[d, "f \fq g"]
      \\
      &
      &
      \IB{A}{(Y+X)}
      \ar[r, "\IB{\id}{(f \fq g)^\istar}"]
      &
      \IB{A}{A}
      \ar[ruu, "a"]
      \\
      &
      \IB{(\IB{A}{(Y+X)})}{(Y+X)}
      \ar[ru, "m"]
      \ar[rr, "\IB{(\IB{\id}{(f \fq g)^\istar})}{(f \fq g)^\istar}"]
      &\mbox{ }&
      \IB{(\IB{A}{A})}{A}
      \ar[u, "m"]
      \ar[rdd, "\IB{a}{\id}"]
      \\
      \IB{Y}{X}
      \ar[r, "\IB{f}{\id}"]
      \ar[d, "\IB{f^\istar}{\id}"]
      &
      \IB{(\IB{A}{Y})}{X}
      \ar[u, "\IB{(\IB{\id}{\inl})}{\inr}"]
      \arrow{rru}[below,xshift=15pt]{\IB{(\IB{\id}{f^\istar})}{(f^\istar \bullet g)_i^\istar}}
      \descto{ru}{\begin{turn}{103}$\sqsubseteq$\end{turn}}
      \\
      \IB{A}{X}
      \ar[rrrr, "\IB{\id}{(f^\istar \bullet g)_i^\istar}"]
      &&&&
      \IB{A}{A}
      \ar[uuuuu, "a"]
    \end{tikzcd}
    \normalsize
  \]
  We are to prove the inequality in the upper triangle. We start with the inequality in the middle triangle; it holds by the induction hypothesis and since $(f \fq g)^\istar \inl = f^\istar$. The other inner parts commute as in the previous diagram. Finally, the outside commutes by the definition of $(f^\istar\bullet g)_{i+1}^\istar$. Thus, we obtain the inequality in the upper triangle as desired. \qed
\end{itemize}
\subsection*{Proof of Proposition~\ref{prop:hom}}
    Let $e = (\IB{\id}{\inr}) \comp [\id, u_A^A] : (\IB{A}{A}) + A \to
    \IB{A}{((\IB{A}{A})+A)}$. We show that $e^{\istar}=[a, \id]$, and therefore $e^{\istar} \comp \inl = a$. To that end we successively  calculate $e^{\istar} \comp \inr$ and $e^{\istar} \comp \inl$:
    \begin{flalign*}
      &&e^{\istar} \comp \inr =&~ a \comp (\IB{\id}{e^{\istar}}) \comp e \comp \inr &\by{solution} \\
      &&=&~ a \comp (\IB{\id}{e^{\istar}}) \comp (\IB{\id}{\inr}) \comp u_A^A \\
      &&=&~ a \comp (\IB{\id}{e^{\istar}}) \comp u_A^{(\IB{A}{A}) + A} \comp \id \\
      &&=&~ a \comp u_A^A \comp \id \\
      &&=&~ \id.\\
      &&e^{\istar} \comp \inl =&~ a \comp (\IB{\id}{e^{\istar}}) \comp e \comp \inl &\by{solution} \\
      &&=&~ a \comp (\IB{\id}{e^{\istar}}) \comp (\IB{\id}{\inr}) \\
      &&=&~ a \comp (\IB{\id}{(e^{\istar} \comp \inr)}) \\
      &&=&~ a \comp (\IB{\id}{\id}) \\
      &&=&~ a.
    \end{flalign*}
    To finish the proof, let us show that the following diagram:
    \[
      \begin{tikzcd}[column sep=huge, row sep=huge]
        \IB{A}{A}
          \ar[r, "\inl"]
          \ar[d, "\IB{f}{f}", swap] &
        (\IB{A}{A}+A)
          \ar[dr, sloped, pos=0.7, "{((\IB{f}{\id})\comp e)^{\iistar}}"']
          \ar[r, "{e^{\istar}=[a,\id]}"] &
        A
          \ar[d, "f"] \\
        \IB{B}{B}
          \ar[rr, "b"] &&
        B
      \end{tikzcd}
    \]
    commutes.
    The triangle commutes since $f$ is a morphism of $\IB{}{}$-algebras, and we are left to show commutativity of the inner quadrangle. Observe that
    \begin{flalign*}
      &&((\IB{f}{\id}) \comp e)^{\iistar} \comp \inr =&~ b \comp
      (\IB{\id}((\IB{f}{\id})\comp e)^{\iistar}) \comp (\IB{f}{\id}) \comp e
      \comp \inr & \by{solution} \\
      && =&~ b \comp (\IB{\id}{((\IB{f}{\id})\comp e)^{\iistar}}) \comp
      (\IB{f}{\inr}) \comp u_A^A \\
      && =&~ b \comp (\IB{f}{((\IB{f}{\id})\comp e)^{\iistar}}) \comp
      u_A^{(\IB{A}{A})+A} \\
      && =&~ b \comp u_B^B \comp f \\
      && =&~ f
\intertext{from which we conclude the desired identity:}
      &&((\IB{f}{\id}) \comp e)^{\iistar} \comp \inl =&~ b \comp
      (\IB{\id}((\IB{f}{\id})\comp e)^{\iistar}) \comp (\IB{f}{\id}) \comp e
      \comp \inl & \by{solution} \\
      && =&~ b \comp (\IB{\id}{((\IB{f}{\id})\comp e)^{\iistar}}) \comp
      (\IB{f}{\inr}) \\
      && =&~ b \comp (\IB{f}{(((\IB{f}{\id})\comp e)^{\iistar} \comp \inr)}) \\
      && =&~ b \comp (\IB{f}{f}).&\text{\qed}
    \end{flalign*}

\subsection*{Proofsketch for Theorem~\ref{thm:eq}}

Before we outline the proof of the desired result, we explain an auxiliary construction that produces from a given complete Elgot algebra $A$ and morphism $f: Y \to A$ a new complete Elgot algebra on $\IB Y A$. 
\begin{constr}\label{c:alg}
  Let $(A, \alpha, \argument^\dagger)$ a complete Elgot $\IB{}{}$-algebra and let $m: Y \to A$ be a morphism. Then form the following morphism
  \[
    \alpha^f = \Bigl(\IB{(\IB Y A)}{(\IB Y A)} \xrightarrow{\IB{\id}{(\IB {f}{\id})}} \IB{(\IB Y A)}{(\IB A A)} \xrightarrow{\IB{\id}{\alpha}}
    \IB{(\IB Y A)} A \xrightarrow{m^A_Y} \IB Y A\Bigr)
  \]
  and define the dagger operation $\argument^\ddagger$ as follows: given $e: X \to \IB{(\IB Y A)} X$ one forms 
  \[
    \ol e = (X \xrightarrow{e} \IB{(\IB Y A)}X \xrightarrow{\IB{(\IB{f} {\id})} {\id}} \IB{(\IB A A)}X \xrightarrow{\IB {\alpha}{\id}} \IB A X), 
  \]
  i.e.~$\ol e = (\alpha (\IB{f}{\id})) \bullet e$, and then one puts
  \[
    e^\ddagger = \Bigl(X \xrightarrow{e} \IB{(\IB Y A)}X \xrightarrow{\IB{\id}{\ol e}^\dagger} \IB{(\IB Y A)}A \xrightarrow{m^A_Y} \IB Y A\Bigr).
  \]
\end{constr}
\begin{lem}\label{lem:aux}
  The triple $(\IB Y A, \alpha^f, \argument^\ddagger)$ is a complete Elgot $\IB{}{}$-algebra such that
  \[
    \IB Y A \xrightarrow{\IB f {\id}} \IB A A \xrightarrow{\alpha} A
  \]
  is a morphism of complete Elgot algebras.
\end{lem}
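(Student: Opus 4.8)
\begin{proof*}{Proof sketch}
The plan is to reduce the whole verification to the corresponding facts for the given complete Elgot $\IB{}{}$-algebra $(A,\alpha,\argument^\dagger)$, organised around the morphism $g := \alpha\comp(\IB{f}{\id}):\IB Y A\to A$ from the statement of the lemma. The starting point is that $\IB Y A$ carries the free algebra $(\IB Y A,m_Y^A)$ for the monad $\IB{\argument}{A}$ on $Y$, and that $g$ is the unique $\IB{\argument}{A}$-algebra morphism from it to $(A,\alpha)$ extending $f$; from this one reads off $g\comp u_Y^A=f$ (naturality of $u^A$ and the unit law of $\alpha$) and $g\comp m_Y^A=\alpha\comp(\IB{g}{\id})$ (naturality of $m^A$ and the associativity law of $\alpha$). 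It is also convenient to rewrite the data of Construction~\ref{c:alg} as $\alpha^f=m_Y^A\comp(\IB{\id}{g})$ and $\ol e=(\IB{g}{\id})\comp e$, so that $e^\ddagger=m_Y^A\comp(\IB{\id}{(\ol e)^\dagger})\comp e$ for every $e:X\to\IB{(\IB Y A)}{X}$.

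First I would dispose of the $\IB{}{}$-algebra axioms for $\alpha^f$: by the parametrized-monad axiom applied to $g$, the morphisms $\IB{\id_Z}{g}$ are the components of a monad morphism $\IB{\argument}{\IB Y A}\to\IB{\argument}{A}$, and $\alpha^f=m_Y^A\comp(\IB{\id_{\IB Y A}}{g})$ is just the free $\IB{\argument}{A}$-algebra structure on $\IB Y A$ restricted along it; since restriction of an Eilenberg--Moore algebra along a monad morphism is again an Eilenberg--Moore algebra, $(\IB Y A,\alpha^f)$ is a $\IB{}{}$-algebra. Next I would establish the key identity $g\comp e^\ddagger=(\ol e)^\dagger$ for all $e:X\to\IB{(\IB Y A)}{X}$, a short calculation from $g\comp m_Y^A=\alpha\comp(\IB{g}{\id})$, bifunctoriality of $\IB{}{}$, and the solution axiom of $(A,\alpha,\argument^\dagger)$. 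This identity is precisely the assertion that $g$ is a morphism of complete Elgot $\IB{}{}$-algebras $(\IB Y A,\alpha^f,\argument^\ddagger)\to(A,\alpha,\argument^\dagger)$, so the last claim of the lemma follows as soon as $\argument^\ddagger$ is shown to be a legitimate Elgot-algebra structure. Two of the required axioms are then immediate: for \emph{solution}, expand $\alpha^f\comp(\IB{\id}{e^\ddagger})\comp e=m_Y^A\comp(\IB{\id}{g\comp e^\ddagger})\comp e$ and substitute the key identity; for \emph{functoriality}, note that a coherence $e'\comp h=(\IB{\id}{h})\comp e$ transports to $\ol{e'}\comp h=(\IB{\id}{h})\comp\ol e$, apply functoriality of $\argument^\dagger$ to get $(\ol{e'})^\dagger\comp h=(\ol e)^\dagger$, and read off $(e')^\ddagger\comp h=e^\ddagger$ from the formula for $\argument^\ddagger$.

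The one remaining axiom, \emph{compositionality}, is where I expect the real work, and I would tackle it last. Given $p:V\to\IB{(\IB Y A)}{V}$ and $q:X\to\IB{V}{X}$, the plan is first to prove the two comparison identities $\ol{p\filledsquare q}=\ol p\filledsquare q$ and $\ol{p^\ddagger\bullet q}=(\ol p)^\dagger\bullet q$, with the $\filledsquare$ and $\bullet$ on the right-hand sides taken in $(A,\alpha,\argument^\dagger)$, and then to expand both sides of the target equation $(p\filledsquare q)^\ddagger\comp\inr=(p^\ddagger\bullet q)^\ddagger$ through the formula for $\argument^\ddagger$ and reduce it to the compositionality axiom of $(A,\alpha,\argument^\dagger)$. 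The second comparison identity drops out of the key identity and bifunctoriality; the first one, together with the reassembly of the expanded equation, requires commuting $g$ and the resulting barred solutions past the multiplications $m_{\IB Y A}^{V+X}$ and $m_{\IB Y A}^{A}$, which comes down to naturality of these multiplications plus the monad-morphism laws for $\IB{\id}{g}$ and its compatibility with the coproduct injections. Carrying out this bookkeeping --- keeping straight the three monads $\IB{\argument}{\IB Y A}$, $\IB{\argument}{A}$, $\IB{\argument}{V+X}$ and the morphism $\IB{\id}{g}$ relating them --- is the main obstacle; every other ingredient is already supplied by the axioms of $(A,\alpha,\argument^\dagger)$.
\qed\end{proof*}
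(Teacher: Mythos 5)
There is nothing in the paper to compare your argument against line by line: the authors do not write out a proof of Lemma~\ref{lem:aux} at all, but only remark that it is an involved computation adapted from the proof of Lemma~5.6 of~\cite{AdamekMiliusEtAl06j}. Your sketch supplies exactly such an adaptation, and its skeleton is correct. Organising everything around $g=\alpha\comp(\IB{f}{\id})$ is sound: $g$ is the unique $\IB{\argument}{A}$-algebra morphism $(\IB{Y}{A},m^A_Y)\to(A,\alpha)$ with $g\comp u^A_Y=f$; the Eilenberg--Moore laws for $\alpha^f=m^A_Y\comp(\IB{\id}{g})$ do follow by restricting the free algebra along the monad morphism $(\IB{\id_Z}{g})_Z$; and the key identity $g\comp e^\ddagger=(\ol e)^\dagger$ (from $g\comp m^A_Y=\alpha\comp(\IB{g}{\id})$, bifunctoriality and the solution axiom for $\argument^\dagger$) indeed yields at once the morphism claim of the lemma and the solution and functoriality axioms for $\argument^\ddagger$, exactly as you describe.

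For compositionality (with $p\colon V\to\IB{(\IB{Y}{A})}{V}$ and $q\colon X\to\IB{V}{X}$ in your notation) your plan also goes through, but when you carry it out note two further ingredients beyond the comparison identities $\ol{p\filledsquare q}=\ol{p}\filledsquare q$ and $\ol{p^\ddagger\bullet q}=(\ol p)^\dagger\bullet q$: first, the functoriality axiom of $\argument^\dagger$ applied to $(\ol p\filledsquare q)\comp\inl=(\IB{\id}{\inl})\comp\ol p$, which gives $(\ol p\filledsquare q)^\dagger\comp\inl=(\ol p)^\dagger$ and is what makes the inner term collapse to $p^\ddagger$ after you push $\IB{\id}{(\ol{p\filledsquare q})^\dagger}$ past $m^{V+X}_{\IB{Y}{A}}$ using the monad-morphism law for $\IB{\id}{(\ol{p\filledsquare q})^\dagger}$; second, the associativity law $m^A_Y\comp m^A_{\IB{Y}{A}}=m^A_Y\comp(\IB{m^A_Y}{\id})$. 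With these, together with compositionality of $\argument^\dagger$ (identifying $(\ol p\filledsquare q)^\dagger\comp\inr$ with $((\ol p)^\dagger\bullet q)^\dagger$), both sides of $(p\filledsquare q)^\ddagger\comp\inr=(p^\ddagger\bullet q)^\ddagger$ reduce to $m^A_Y\comp(\IB{p^\ddagger}{((\ol p)^\dagger\bullet q)^\dagger})\comp q$. So there is no gap in your approach; what remains is precisely the bookkeeping computation that the paper's pointer to~\cite{AdamekMiliusEtAl06j} leaves implicit.
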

The proof is a somewhat involved computation adapted from the proof of~\cite[Lemma~5.6]{AdamekMiliusEtAl06j}.\smnote{We will type it later.}
The previous lemma and the following proposition provide a Lambek-type lemma for complete Elgot $\IB{}{}$-algebras. 
\begin{prop}\label{prop:iso}
  If $(FY, \varphi_Y, \argument^\dagger)$ is a free complete Elgot $\IB{}{}$-algebra on $Y$ with universal morphism $\eta_Y: Y \to FY$, then 
  \[
    \IB Y {FY} \xrightarrow{\IB{\eta_Y}{\id}} \IB{FY}{FY} \xrightarrow{\varphi_Y} FY
  \]
  is an isomorphism.
\end{prop}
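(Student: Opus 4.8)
The plan is to run a Lambek-style argument and produce a two-sided inverse of the composite in question out of the universal property of the free complete Elgot $\IB{}{}$-algebra on $Y$.

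Write $\psi_Y = \varphi_Y \comp (\IB{\eta_Y}{\id_{FY}}) : \IB Y{FY} \to FY$. First I would instantiate Construction~\ref{c:alg} at the complete Elgot $\IB{}{}$-algebra $(FY, \varphi_Y, \argument^\dagger)$ and the universal morphism $\eta_Y : Y \to FY$; this turns $\IB Y{FY}$ into a complete Elgot $\IB{}{}$-algebra $(\IB Y{FY}, \varphi_Y^{\eta_Y}, \argument^\ddagger)$ (writing $\varphi_Y^{\eta_Y}$ for the structure map denoted $\alpha^f$ in Construction~\ref{c:alg}), and Lemma~\ref{lem:aux} states exactly that $\psi_Y$ is a morphism of complete Elgot $\IB{}{}$-algebras $(\IB Y{FY}, \varphi_Y^{\eta_Y}, \argument^\ddagger) \to (FY, \varphi_Y, \argument^\dagger)$. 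Two facts I would record for later use: by functoriality of $\IB{\id}{\argument}$ the structure map collapses to $\varphi_Y^{\eta_Y} = m_Y^{FY} \comp (\IB{\id_{\IB Y{FY}}}{\psi_Y})$, and, by naturality of the unit $u^{FY}_{\argument} : \Id \Rightarrow \IB{\argument}{FY}$ at $\eta_Y$ together with the unit law $\varphi_Y \comp u_{FY}^{FY} = \id_{FY}$ of the $\IB{}{}$-algebra $FY$, one has $\psi_Y \comp u_Y^{FY} = \eta_Y$.

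Now, since $(FY, \varphi_Y, \argument^\dagger)$ is free on $Y$, the morphism $u_Y^{FY} : Y \to \IB Y{FY}$ into the carrier of $(\IB Y{FY}, \varphi_Y^{\eta_Y}, \argument^\ddagger)$ extends uniquely to a complete Elgot $\IB{}{}$-algebra morphism $k_Y : FY \to \IB Y{FY}$ with $k_Y \comp \eta_Y = u_Y^{FY}$; this $k_Y$ is the candidate inverse. For one of the triangle identities, $\psi_Y \comp k_Y$ is a complete Elgot $\IB{}{}$-algebra endomorphism of $FY$ with $(\psi_Y \comp k_Y) \comp \eta_Y = \psi_Y \comp u_Y^{FY} = \eta_Y$, so uniqueness in the universal property of $FY$ (satisfied also by $\id_{FY}$) gives $\psi_Y \comp k_Y = \id_{FY}$. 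For the other triangle identity I would invoke Proposition~\ref{prop:hom} to see that $k_Y$ is in particular a $\IB{}{}$-algebra morphism, $k_Y \comp \varphi_Y = \varphi_Y^{\eta_Y} \comp (\IB{k_Y}{k_Y})$, and then compute, using the collapse of $\varphi_Y^{\eta_Y}$, bifunctoriality of $\IB{\argument}{\argument}$, the identity $k_Y \comp \eta_Y = u_Y^{FY}$, and the just-proved $\psi_Y \comp k_Y = \id_{FY}$:
\begin{align*}
  k_Y \comp \psi_Y
  &= k_Y \comp \varphi_Y \comp (\IB{\eta_Y}{\id_{FY}})
   = m_Y^{FY} \comp (\IB{\id}{\psi_Y}) \comp (\IB{k_Y}{k_Y}) \comp (\IB{\eta_Y}{\id_{FY}}) \\
  &= m_Y^{FY} \comp (\IB{u_Y^{FY}}{\psi_Y \comp k_Y})
   = m_Y^{FY} \comp (\IB{u_Y^{FY}}{\id_{FY}})
   = \id_{\IB Y{FY}},
\end{align*}
the final equality being the right unit law of the monad $\IB{\argument}{FY}$. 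Thus $\psi_Y$ is an isomorphism with inverse $k_Y$ (and, as a follow-up for Theorem~\ref{thm:eq}(ii), its inverse is then a final $(\IB{Y}{\argument})$-coalgebra structure).

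I expect the genuinely hard part to be already behind me, packaged in Lemma~\ref{lem:aux}: one has to check that Construction~\ref{c:alg} really delivers a complete Elgot $\IB{}{}$-algebra --- the compositionality axiom being the delicate one, handled by an adaptation of the endofunctor case in~\cite{AdamekMiliusEtAl06j} --- and that $\psi_Y$ preserves all of that structure. Granting this, the plan above is a routine diagram chase; the only real care needed is bookkeeping the domains and codomains of the various instances of $u$, $m$ and $\IB{\argument}{\argument}$ (the paper suppresses these subscripts), and the observation that $\varphi_Y^{\eta_Y}$ factors as $m_Y^{FY} \comp (\IB{\id}{\psi_Y})$, which is precisely what makes the last computation collapse to a monad unit law rather than requiring the iteration axioms.
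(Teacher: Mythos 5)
Your proposal is correct and follows essentially the same route as the paper: you invoke Construction~\ref{c:alg} and Lemma~\ref{lem:aux} to make $\IB Y{FY}$ a complete Elgot $\IB{}{}$-algebra with $\psi_Y$ a morphism, obtain the candidate inverse by freeness, prove $\psi_Y \comp k_Y = \id$ by uniqueness, and prove $k_Y\comp\psi_Y=\id$ via Proposition~\ref{prop:hom} together with $k_Y\comp\eta_Y=u_Y^{FY}$ and the unit law of $\IB{\argument}{FY}$. Your equational computation for the second composite is just a rewriting of the paper's diagram chase (which likewise feeds $t'\comp t=\id$ into the second component and finishes with the monad unit law), so no substantive difference.
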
 
\begin{proof}
  By Lemma~\ref{lem:aux}, $\IB Y {FY}$ with algebra structure and
  $\argument^\ddagger$ formed as in Construction~\ref{c:alg} (for $A = FY$,
  $\alpha = \varphi_Y$ and $f = \eta_Y$) is a complete Elgot $\IB{}{}$-algebra.
  By the freeness of $FY$, we obtain a unique morphism of complete Elgot
  $\IB{}{}$-algebras $t: FY \to \IB Y {FY}$ such that $t \eta_Y = u^{FY}_Y: Y
  \to \IB Y{FY}$. Let us denote $t' = \varphi_Y(\IB{\eta_Y}{\id})$. Then it is
  our task to prove that $t$ and $t'$ are mutually inverse.

  Indeed, we have $t' \comp t = \id_{FY}$ since both $t$ and $t'$ are morphisms
  of complete Elgot $\IB{}{}$-algebras and since
  \[
    t' \comp t \comp \eta_Y = t'\comp u^{FY}_Y = \varphi_Y \comp
    (\IB{\eta_Y}{\id}) \comp u^{FY}_Y = \varphi_Y \comp u^{FY}_{FY} \comp \eta_Y = \eta_Y
  \]
  using naturality of $u$ and the unit law of the $\IB{}{}$-algebra structure
  $\varphi_Y$. The freeness of $FY$ now yields the desired equation.

  In order to prove $t \comp t' = \id_{\IB Y {FY}}$ notice first that $t$ is a
  morphism of $\IB{}{}$-algebras by Proposition~\ref{prop:hom}. Now consider the
  diagram below:
  \[
    \begin{tikzcd}
      \IB Y {FY}
      \ar[d, "\IB{\eta_Y}{\id}"] \ar[dddd, shiftarr={xshift=-30pt}, "t'", swap]
      \ar[rr, "\IB {\id} t"]
      &&
      \IB Y {(\IB Y {FY})}
      \ar[ld, "\IB{u^{FY}_Y}{\id}",swap]
      \ar[dddd, shiftarr={xshift=45pt}, "\IB {\id}{t'}"]
      \ar[dd, "\IB {\id}{(\IB{\eta_Y}{\id})}",swap]
      \\
      \IB{FY}{FY}
      \ar[r, "\IB t t"]
      \ar[ddd, "\varphi_Y"]
      &
      \IB{(\IB Y {FY})}{(\IB Y {FY})}
      \ar[d, "\IB{\id}{(\IB{\eta_Y}{\id})}"]
      \\
      &
      \IB{(\IB Y {FY})}{(\IB{FY}{FY})}
      \ar[d,"\IB{\id}{\varphi_Y}"]
      &
      \IB Y {(\IB{FY}{FY})}
      \ar[l, "\IB{u^{FY}_{FY}}{\id}"]
      \ar[dd, "\IB {\id}{\varphi_Y}"]
      \\
      &
      \IB{(\IB Y {FY})}{FY}
      \ar[d, "m^{FY}_Y"]
      \\
      FY
      \ar[r, "t"]
      \ar[rr, shiftarr={yshift=-15pt}, "t", swap]
      &
      \IB Y {FY}
      &
      \IB Y {FY}
      \ar[lu, "\IB{u^{FY}_Y}{\id}",swap]
      \ar[l, "\id"]
    \end{tikzcd}
  \]
  All its inner parts commute: the left- and right-hand parts commute by the
  definition of $t'$, the big lower left-hand part commutes since $t$ is a
  $\IB{}{}$-algebra morphism, the upper part commutes using that $t \comp \eta_Y
  = u^{FY}_Y$, the lower right-hand triangle commutes by the monad laws for
  $\IB{\argument}{FY}$ and the remaining three parts are obvious. Thus, the
  outside commutes, and since $t' \comp t = \id$ we conclude that $t \comp t' =
  \id$, which completes the proof. 
\end{proof}
Let us henceforth denote for a given free complete Elgot $\IB{}{}$-algebra on $Y$
\[
  t = (\varphi_Y (\IB{\eta_Y}{\id}))^{-1}: FY \to \IB Y {FY}.
\]
\begin{cor}\label{cor:t}
  The following square commutes:
  \[
    \begin{tikzcd}
      \IB{FY}{FY} \ar[r, "\varphi_Y"]\ar[d, "\IB{t}{\id}"] & FY \ar[d, "t"] \\
      \IB{(\IB Y {FY})}{FY} \ar[r, "m^{FY}_Y"] & \IB Y {FY}
    \end{tikzcd}
  \]
\end{cor}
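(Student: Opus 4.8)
The square to be proved asserts $t \comp \varphi_Y = m^{FY}_Y \comp (\IB{t}{\id})$, where by definition $t$ is the inverse of the isomorphism $t' := \varphi_Y \comp (\IB{\eta_Y}{\id_{FY}})$ supplied by Proposition~\ref{prop:iso}. Since $t'$ is an isomorphism, it suffices to verify this identity after post-composing both sides with $t'$; that is, the plan is to show that $t' \comp t \comp \varphi_Y$ and $t' \comp m^{FY}_Y \comp (\IB{t}{\id})$ both equal $\varphi_Y : \IB{FY}{FY} \to FY$. For the first composite this is immediate from $t' \comp t = \id_{FY}$.

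For the second composite I would expand $t' = \varphi_Y \comp (\IB{\eta_Y}{\id_{FY}})$ and then apply three rewrites in turn. First, naturality of the multiplication $m^{FY}$ of the monad $\IB{\argument}{FY}$, instantiated at the morphism $\eta_Y : Y \to FY$, slides $\IB{\eta_Y}{\id}$ through $m^{FY}_Y$, turning $(\IB{\eta_Y}{\id_{FY}}) \comp m^{FY}_Y$ into $m^{FY}_{FY} \comp (\IB{(\IB{\eta_Y}{\id_{FY}})}{\id_{FY}})$. Second, the associativity law for the Eilenberg-Moore algebra $(FY, \varphi_Y)$ of $\IB{\argument}{FY}$ rewrites $\varphi_Y \comp m^{FY}_{FY}$ as $\varphi_Y \comp (\IB{\varphi_Y}{\id_{FY}})$. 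Third, bifunctoriality of $\IB{}{}$ collapses $(\IB{\varphi_Y}{\id}) \comp (\IB{(\IB{\eta_Y}{\id})}{\id}) \comp (\IB{t}{\id})$ into $\IB{(\varphi_Y \comp (\IB{\eta_Y}{\id}) \comp t)}{\id} = \IB{(t' \comp t)}{\id} = \id_{\IB{FY}{FY}}$, so that only $\varphi_Y$ survives. Equality of the two post-composites then yields the claim, since $t'$ is invertible.

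An essentially equivalent route is to note that the square expresses exactly that $t : FY \to \IB{Y}{FY}$ is a morphism of $\IB{\argument}{FY}$-algebras from $(FY, \varphi_Y)$ to the free $\IB{\argument}{FY}$-algebra $(\IB{Y}{FY}, m^{FY}_Y)$; since $t$ is already known to be a $\IB{}{}$-algebra morphism (this is established inside the proof of Proposition~\ref{prop:iso}, via Proposition~\ref{prop:hom}), it then suffices to unfold the $\IB{}{}$-algebra structure $\varphi_Y^{\eta_Y}$ on $\IB{Y}{FY}$ from Construction~\ref{c:alg} and again combine bifunctoriality of $\IB{}{}$ with $t' \comp t = \id$. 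I do not expect a genuine obstacle here; the only delicate point is the bookkeeping of the sub- and superscripts on $u$, $m$ and the various occurrences of $\IB{}{}$, and in particular invoking naturality of $m$ in its \emph{first} argument, where the monad morphisms $\IB{g}{\id}$ act, rather than in the second.
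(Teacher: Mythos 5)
Your proof is correct, but your primary route is genuinely different from the paper's. The paper derives the square from the fact that $t$ is a morphism of $\IB{}{}$-algebras from $(FY,\varphi_Y)$ to the algebra $\varphi_Y^{\eta_Y}=m^{FY}_Y\comp(\IB{\id}{\varphi_Y})\comp(\IB{\id}{(\IB{\eta_Y}{\id})})$ of Construction~\ref{c:alg} (this is the ``big lower left-hand part'' of the diagram in Proposition~\ref{prop:iso}, and rests on Lemma~\ref{lem:aux} and Proposition~\ref{prop:hom}), and then collapses $(\IB{\id}{\varphi_Y})\comp(\IB{\id}{(\IB{\eta_Y}{\id})})\comp(\IB{t}{t})$ to $\IB{t}{\id}$ using $t'\comp t=\id$ in the \emph{second} argument. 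Your first argument bypasses the algebra-morphism property of $t$ altogether: post-composing with the monomorphism $t'$, you reduce everything to $t'\comp t=\id$, naturality of $m^{FY}$ in its first argument, the associativity law of the $\IB{}{}$-algebra $(FY,\varphi_Y)$, and bifunctoriality — all of which check out, so you get a shorter and more self-contained derivation that needs only one half of the inverse relation and none of the Elgot-algebra machinery. Your second, alternative route (reading the square as saying that $t$ is a morphism of $\IB{\argument}{FY}$-algebras into the free algebra $(\IB{Y}{FY},m^{FY}_Y)$ and unfolding Construction~\ref{c:alg}) is essentially the paper's proof. One small imprecision: in your closing remark the morphisms $\IB{g}{\id}$ are just the functorial action inside the fixed monad $\IB{\argument}{FY}$ — it is the maps $\IB{\id}{f}$ that are the monad morphisms of the parametrized-monad structure — but the naturality instance you actually use, $(\IB{\eta_Y}{\id})\comp m^{FY}_Y=m^{FY}_{FY}\comp(\IB{(\IB{\eta_Y}{\id})}{\id})$, is stated and applied correctly.
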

\begin{proof}
  Use that $t$ is a morphism of $\IB{}{}$-algebras, i.e.\ consider the big lower
  left-hand part of the diagram in the proof of Proposition~\ref{prop:iso}. Then
  use that $\varphi_Y \comp (\IB{\eta_Y}{\id}) \comp t = \id$ implies that
  \[
    (\IB{\id}{\varphi_Y})\comp (\IB{\id}{(\IB{\eta_Y}{\id})}) \comp (\IB t t) = \IB{t}{\id}
  \]
  to obtain the commutativity of the desired square. 
\end{proof}
We are now ready to prove item 2.~of Theorem~\ref{thm:eq}. So suppose that $(FY,
\varphi_Y, \argument^\dagger)$ is a free complete Elgot $\IB{}{}$-algebra on
$Y$ with universal morphism $\eta_Y: Y \to FY$. By Lemma~\ref{lem:aux}, we know
that $\varphi_Y \comp (\IB{\eta_Y}{\id}): \IB Y {FY} \to FY$ is an isomorphism
with inverse $t: FY \to \IB Y{FY}$. One then proves that $(FY, t)$ is a final
coalgebra for $\IB{Y}{\argument}$.

Indeed, given any coalgebra $c: X \to \IB Y X$ one forms the equation morphism
\[
e = (X \xrightarrow{c} \IB Y X \xrightarrow{\IB{\eta_Y}{\id}} \IB{FY}X).
\] 
Then it is easy to see that $e^\dagger: X \to FY$ is a coalgebra homomorphism
from $(X,c)$ to $(FY, t)$; in fact, consider the diagram below:
\[
\begin{tikzcd}
  X \ar[rr, "e^\dagger"] \ar[dd,"c",swap] \ar[rd, "e"] && FY \ar[dd, shiftarr={xshift=30pt}, "t"]\\
  &
  \IB {FY} X \ar[r, "\IB{FY}{e^\dagger}"] & \IB{FY}{FY} \ar[u, "\varphi_Y"] \\
  \IB Y X \ar[ru,"\IB{\eta_Y}{\id}"] \ar[rr, "\IB {\id}{e^\dagger}", swap] && \IB Y {FY} \ar[u, "\IB{\eta_Y}{\id}"]
\end{tikzcd}
\]
All its inner parts commute: the upper part commutes since $e^\dagger$ is a
solution of $e$, the left-hand triangle commutes by the definition of $e$, the
lower part commutes trivially, and for the right-hand part use that $t$ is the
inverse of $\varphi_Y \comp (\IB{\eta_Y}{\id})$.
It remains to prove that uniqueness of
a coalgebra homomorphism from $(X,c)$ to $(FY,t)$. This proof can be performed
analogously to the proof of part (2) $\Rightarrow$ (1)
of~\cite[Theorem~5.4]{AdamekMiliusEtAl06j}. \qed

\subsection*{Proof of Proposition~\ref{prop:unique}}

  Recall first from Theorem~\ref{thm:eq} that $FY$ is (equivalently) a final $(\IB{Y}{\argument})$-coalgebra with the structure $t:FY \to \IB{Y}{FY}$ obtained as an inverse of
  \[
    \IB Y {FY} \xrightarrow{\IB{\eta_Y}{FY}} \IB{FY}{FY} \xrightarrow{\varphi_Y} FY.
  \]
  Let $e: X \to \IB {FY} X$ and consider the following $(\IB{Y}{\argument})$-coalgebra
  \[
    \begin{tikzcd}
      \ol e = (\IB{FY} X \ar[r, "\IB t e"] & 
      \IB{(\IB Y {FY})}{(\IB{FY}X)} \ar[d, "\IB{(\IB Y {u^X_{FY}})}{(\IB{FY}{Y})}"] \\ 
      & \IB{(\IB{Y}{(\IB{FY}X)})}{(\IB{FY} X)} \ar[r, "m^{\IB{FY}X}_Y"] & \IB{Y}{(\IB{FY}X)}).
    \end{tikzcd}
  \]
  Now let $d: X \to FY$ be any solution of $e$, i.e.~we have $d = \varphi_Y (\IB{FY} d) e$.
  We will prove below that $\varphi_Y(\IB{FY}{d}): \IB{FY} X \to FY$ is a coalgebra homomorphism from $\ol e$ to $t$. Since $\ol e$ does not depend on the solution $d$ we then conclude that 
  \[
    e^\dagger = \varphi_Y(\IB{FY}{e^\dagger})e = \varphi_Y(\IB{FY}{d})e = d
  \]
  using finality of $FY$ in the middle step. 
  
  To finish the proof consider the following diagram:
  \[
    \hspace*{-4cm}
    \begin{tikzcd}
      \IB{FY}X
      \ar[dd, "\IB t e"]
      \ar[r, "\IB{\id} d"]
      &
      \IB{FY}{FY}
      \ar[d, "\IB t {\id}"]
      \ar[rr, "\varphi_Y"]
      &&
      FY
      \ar[dddd, "t"]
      \\
      &
      \IB{(\IB Y {FY})}{FY}
      \ar[rrddd, bend left=25, "m^{FY}_Y"]
      \\
      \IB{(\IB{Y}{FY})}{(\IB{FY}{X})}
      \ar[r, "\IB{\id}{(\IB{\id}{d})}"]
      \ar[d, "\IB{(\IB{\id}{u^X_{FY}})}{\id}"]
      &
      \IB{(\IB{Y}{FY})}{(\IB{FY}{FY})}
      \ar[u, "\IB{\id}{\varphi_Y}"]
      \ar[rd, swap, "\IB{(\IB{\id}{u^{FY}_{FY}})}{\id}"]
      \\
      \IB{(\IB{Y}{(\IB{FY}{X})})}{(\IB{FY}{X})}
      \ar[d, "m^{\IB{FY}{X}}_Y"]
      \ar[rr, swap, "\IB{(\IB{\id}{(\IB{\id}{d})})}{(\IB{\id}{d})}"]
      &&
      \IB{(\IB{Y}{(\IB{FY}{FY})})}{(\IB{FY}{FY})}
      \ar[luu, swap, "\IB{(\IB Y{\varphi_Y})}{\varphi_Y}"]
      \ar[d, "m^{\IB{FY}{FY}}_Y"]
      \\
      \IB{Y}{(\IB{FY}{X})}
      \ar[rr, swap, "\IB {\id}{(\IB{\id}{d})}"]
      &&
      \IB{Y}{(\IB{FY}{FY})}
      \ar[r, swap, "\IB{\id}{\varphi_Y}"]
      &
      \IB{Y}{FY}
    \end{tikzcd}
  \]
  Note first that the left-hand edge is $\ol e$. The upper left-hand square commutes since $d$ is a solution of $e$, for the part below it use that $(\IB{FY} d) u^X_{FY} = u^{FY}_{FY}$ holds since $\IB{FY}d$ is a monad morphism, and the lower left-hand part commutes by the laws of $\IB{}{}$. The upper right-hand part commutes by Corollary~\ref{cor:t}, and the remaining little inner triangle commutes since $\varphi_Y u^{FY}_{FY} = \id_{FY}$ since $\varphi_Y$ is the structure of a $\IB{}{}$-algebra. Hence $\varphi_Y(\IB{FY}d)$ is a coalgebra homomorphisms as desired, which completes the proof.\qed

\subsection*{Proof of Lemma~\ref{lem:IBcorec}}
To show the claim, form the following coalgebra for $\IB{A}{\argument}$:
\[
  \begin{tikzcd}
    X \ar[r, "e"] & \IB B X \ar[r, "\IB f X"] & \IB{(\IB{A}{\IBnu A})}{X}
    \ar[rr, "\IB{(\IB{A}{\inl})}{\inr}"] && \IB{(\IB{A}{(\IBnu A + X}))}{(\IBnu A + X)}
    \ar[d, "m^{\IBnu A + X}_A"] \\
    &&&& \IB{A}{(\IBnu A + X)}
  \end{tikzcd}
\]
By Proposition~\ref{prop:corec} we obtain a unique $h: X \to \IBnu A$ such that the diagram below commutes:
\[
  \begin{tikzcd}[column sep=10.8em]
    X
    \arrow[d, "h"']
    \arrow[r, "{m_A^{\IBnu{A} + X} \comp (\IB{((\IB{\id}{\inl}) \comp
      f)}{\inr}) \comp e}"]
    & \IB{A}{(\IBnu{A} + X)}
    \arrow[d, "{\IB{\id}{[\id, h]}}"] \\
    \IBnu{A}
    \arrow[r, "\out"]
    & \IB{A}{\IBnu{A}}.
  \end{tikzcd}
\]
Now use that $\IB{\id}{[\id, h]}$ is a monad morphism to see that, equivalently, $h$ is unique such that~\eqref{eq:corec} commutes. 
\qed

\subsection*{Proof of Lemma~\ref{lem:corec-coit}}
Let $g = (\tuo \comp f)^{\klstar} = \corec((\IB{(\tuo \comp f)}{\id}) \comp
\out, \out)$. Then we have
  \[
    \begin{tikzcd}[column sep=8.6em]
      X
      \arrow[r, "e"]
      \arrow[d, "\coit(e)"]
      & \IB{B}{X}
      \arrow[d, "\IB{\id}{\coit(e)}"] \\
      \IBnu{B}
      \arrow[r, "\out"]
      \arrow[d, "g"]
      & \IB{B}{\IBnu{B}}
      \arrow[r, "\IB{(\tuo \comp f)}{\id}"]
      & \IB{\IBnu{A}}{\IBnu{A}}
      \arrow[d, "m_A^{\IBnu A} \comp (\IB{\out}{g})"] \\
      \IBnu{A}
      \arrow[rr, "\out"]
      & & \IB{A}{\IBnu{A}}
    \end{tikzcd}
  \]
  and the uppermost path from $X$ to $\IB{A}{\IBnu{A}}$ amounts to 
  $m_A^{\IBnu A} \comp (\IB{f}{(g \comp \coit(e))})$. Therefore, $g \comp \coit(e)$ satisfies the equation uniquely
  determining $\corec(e, f)$, implying the result. 
\qed
\subsection*{Proof of Lemma~\ref{lem:corec-nat}}
  Notice that diagram \eqref{eq:corec} implies trivially that
  \[ \corec(e, f) = \corec((\IB{f}{\id})\comp e, \id). \]
 Thus we get 
  \begin{flalign*}
    && &~ (\IBnu g) \comp \coit(e) \\
    && &~ (\etan \comp g)^{\klstar} \comp \coit(e) \\
    &&=&~ (\tuo \comp \out \comp \etan \comp g)^{\klstar} \comp \coit(e) \\
    &&=&~ \coit(e, \out \comp \etan \comp g) &\by{Lemma~\ref{lem:corec-coit}} \\
    &&=&~ \coit(e, u \comp g) \\
    &&=&~ \coit((\IB{g}{\id}) \comp e, u) & \by{definition of $\corec(\argument,\argument)$} \\
    &&=&~ \coit((\IB{g}{\id}) \comp e).&\by{corollary to Lemma~\ref{lem:corec-coit}}\quad \text{\qed}
  \end{flalign*}

\noindent
Let introduce the following useful morphism:
\begin{align*}
\ext=\tuo\comp (\IB{\id}{\etan}): \IB{X}{X}\to \IBnu X
\end{align*}
natural in $X$.
\subsection*{Proof of Theorem~\ref{thm:algebras-iso}}
  The proof is organized as follows. First we construct for each
  $\IBnu$-algebra a complete Elgot $\IB{}{}$-algebra  and
  vice versa. Then we extend these constructions to functors and prove that these
  functors witness an isomorphism of categories.

  Given an algebra $\chi : \IBnu{A} \to A$, i.e.
  \[
    \begin{tikzcd}
      A
      \arrow[rd, "\id"']
      \arrow[r, "\etan"]
      & \IBnu{A}
      \arrow[d, "\chi"] \\
      & A
    \end{tikzcd}\hspace{3cm}\begin{tikzcd}
      \IBnu{\IBnu{A}}
      \arrow[d, "\mun"']
      \arrow[r, "\IBnu{\chi}"]
      & \IBnu{A}
      \arrow[d, "\chi"] \\
      \IBnu{A}
      \arrow[r, "\chi"]
      & A
    \end{tikzcd}
  \]
  we define a $\IB{}{}$-algebra $a  : \IB{A}{A} \to A$ as follows:
  \[
    \IB{A}{A} \xrightarrow{~~\ext~~} \IBnu{A} \xrightarrow{~~\chi~~} A
  \]
  It is easy to see that $a $ is an algebra for the monad
  $\IB{\argument}{A}$. For any $e : X \to \IB{A}{X}$, let $e^{\istar}:X\to A$ be
  given by $\chi \comp (\coit e)$.
  
  We now need to check if the so-defined iteration operator satisfies the
  axioms of complete Elgot algebras.

  \paragraph{Solution.}
  To see that this holds, consider the following diagram, the outside of which
  constitutes the required property:
  \[
    \begin{tikzcd}[row sep=large, column sep=12em]
      X
      \arrow[d, "\coit e"]
      \arrow[dd, shiftarr={xshift=-30pt}, "e^{\istar}"']
      \arrow[r, "e"]
      & \IB{A}{X}
      \arrow[d, "\IB{\id}{\coit e}"']
      \arrow[dd, shiftarr={xshift=30pt}, "\IB{\id}{e^{\istar}}"] \\
      \IBnu{A}
      \arrow[r, shift left, "\out"]
      \arrow[d, "\chi"]
      & \IB{A}{\IBnu{A}}
      \arrow[d, "\IB{\id}{\chi}"']
      \arrow[l, shift left, "{\tuo}"] \\
      A
      & \IB{A}{A}
      \arrow[l, "a "]
    \end{tikzcd}
  \]
  The top square obviously commutes as a finality diagram. For the lower square,
  we calculate
  \begin{align*}
    &~\chi \comp \tuo \\
    =&~\chi \comp \mun \comp \etan\comp \tuo \\
    =&~\chi \comp \mun \comp \tuo \comp (\IB{\etan}{\etan}) \\
    =&~\chi \comp \IBnu{\chi} \comp \tuo \comp (\IB{\etan}{\etan}) \\
    =&~\chi \comp \tuo \comp (\IB{\chi}{\IBnu{\chi}}) \comp (\IB{\etan}{\etan}) \\
    =&~\chi \comp \tuo \comp (\IB{\id}{\etan \comp \chi}) \\
    =&~\chi \comp \tuo \comp (\IB{\id}{\etan}) \comp (\IB{\id}{\chi}) \\
    =&~a  \comp (\IB{\id}{\chi}).
  \end{align*}

  \paragraph{Functoriality.}
  This is a simple consequence of the definition of the dagger operation in terms
  of $\coit$. Suppose that $f\comp h = (\IB{\id}{h}) \comp e$. Then
  \begin{align*}
    &~\out \comp (\coit f) \comp h \\
    =&~(\IB{\id}{(\coit f)}) \comp f \comp h \\
    =&~(\IB{\id}{(\coit f)}) \comp (\IB{\id}{ h}) \comp e \\
    =&~(\IB{\id}{((\coit f) \comp  h)}) \comp e,
  \end{align*}
  i.e.\ $(\coit f) \comp h$ satisfies the identity uniquely characterizing $\coit e$.
  Therefore $(\coit f) \comp h=\coit(e)$ and hence $f^{\istar} \comp h =\chi\comp (\coit f) \comp h = \chi\comp(\coit e)=e^{\istar}$.

  \paragraph{Compositionality.}
  We have on the one hand
  \begin{flalign*}
    &&(f^{\istar} \bullet g)^{\istar} =&~ \chi \comp \coit(f^{\istar} \bullet g) \\
    &&=&~\chi \comp \coit\left((\IB{\chi \comp (\coit f)}{\id}) \comp g\right) \\
    &&=&~\chi \comp \coit\left((\IB{\chi}{\id}) \comp (\IB{(\coit f)}{\id}) \comp g\right) \\
    &&=&~\chi \comp (\IBnu{\chi}) \comp \coit((\IB{(\coit f)}{\id}) \comp g) & \by{Lemma~\ref{lem:corec-nat}} \\
    &&=&~\chi \comp \mun \comp \coit((\IB{(\coit f)}{\id}) \comp g)  & \by{$\chi$ is an $\IBnu$-algebra} \\
    &&=&~\chi \comp \corec((\IB{(\coit f)}{\id}) \comp g, \out), & \by{Lemma~\ref{lem:corec-coit}}
  \end{flalign*}
and on the other hand,
\begin{flalign*}
    && (f \filledsquare g)^{\istar} \comp \inr =&~ \chi \comp \coit(f \filledsquare g) \comp \inr \\
    && =&~\chi \comp \coit(m^{Y+X}_A \comp (\IB{((\IB{\id}{\inl}) \comp f)}{\inr}) \comp [u_Y^X, g]) \comp \inr.&
  \end{flalign*}
  Now, let $h = m^{Y+X}_A \comp (\IB{((\IB{\id}{\inl}) \comp f)}{\inr}) \comp
  [u_Y^X, g]$. We are finished once we proved that $\coit(h) \comp \inr$
  satisfies the identity characterizing  $\corec((\IB{\coit(f)}{\id}) \comp
  g, \out)$. First observe that the following:
\begin{flalign*}
    &&&\out \comp (\coit h) \comp \inl& \\
    &&=&~(\IB{\id}{\coit h}) \comp h \comp \inl \\
    &&=&~(\IB{\id}{\coit h}) \comp m^{Y+X}_A \comp (\IB{((\IB{\id}{\inl}) \comp f)}{\inr}) \comp u_Y^X\\
    &&=&~(\IB{\id}{\coit h}) \comp m^{Y+X}_A \comp u^{Y+X}_A \comp (\IB{\id}{\inl}) \comp f\\
    &&=&~(\IB{\id}{\coit h}) \comp (\IB{\id}{\inl}) \comp f \\
    &&=&~(\IB{\id}{((\coit h) \comp \inl)}) \comp f,
\end{flalign*}
i.e.\ $(\coit h) \comp \inl$ satisfies the identity chracterizing $\coit(f)$ and therefore
\begin{align}\label{eq:h_inl_f}
(\coit h) \comp \inl = \coit f 
\end{align}
Then we proceed as follows:
  \begin{flalign*}
    &&&~\out \comp (\coit h) \comp \inr \\
    &&=&~\out \comp \coit(m^{Y+X}_A \comp (\IB{((\IB{\id}{\inl}) \comp f)}{\inr}) \comp
       [u_Y^X, g]) \comp \inr \\
    &&=&~(\IB{\id}{\coit h}) \comp m^{Y+X}_A (\IB{((\IB{\id}{\inl}) \comp f)}{\inr}) \comp g \\
    &&=&~m^{\IBnu{A}}_A \comp (\IB{((\IB{\id}{((\coit h) \comp \inl)}) \comp f)}{((\coit h) \comp \inr)}) \comp g.&\\
    &&=&~m^{\IBnu{A}}_A \comp (\IB{((\IB{\id}{\coit f}) \comp f)}{((\coit h) \comp \inr)}) \comp g&\by{\eqref{eq:h_inl_f}} \\
    &&=&~m^{\IBnu{A}}_A \comp (\IB{(\out \comp (\coit f))}{((\coit h) \comp \inr))} \comp g \\
    &&=&~m^{\IBnu{A}}_A \comp (\IB{\out}{((\coit h) \comp \inr)}) \comp (\IB{\coit f}{\id}) \comp g
  \end{flalign*}
  which, as indicated above, implies that $(\coit h) \comp \inr = \corec((\IB{\coit f}{\id}) \comp g, \out)$.
  
  We proceed with the converse construction: Given a complete Elgot $\IB{}{}$-algebra $a  : \IB{A}{A} \to A$, we build 
a $\IBnu{}$-algebra by iterating the structure of the final coalgebra $\out : \IBnu{A} \to \IB{\IBnu{A}}{A}$:
\begin{align*}
\IBnu{A} \xrightarrow{~~\out^{\istar}~~}  A
\end{align*}
  To show that $\out^{\istar}$ is an $\IBnu{}$-algebra, we check the following.

  \paragraph{Compatibility with unit.} Since
  $
    \out \comp \etan = u_A^{\IBnu{A}} = (\IB{A}{\etan}) \comp u_A^A
  $, by the functoriality axiom,
  \[
    \out^{\istar} \comp \etan = (u_A^A)^{\istar}.
  \]
  Using the solution axiom, we obtain, since $a$ is an $(\IB{\argument}{A})$-algebra,
  \[
    \out^{\istar} \comp \etan =(u_A^A)^{\istar} = a  \comp (\IB{\id}{(u_A^A)^{\istar}}) \comp u_A^A =
    a  \comp u_A^A = \id.
  \]

   \paragraph{Compatibility with multiplication.}
   \mbox{We need to show that $\out^{\istar} \comp \IBnu{(\out^{\istar})} = \out^{\istar}
   \mun$}.
   
   Note that the type of morphisms on the left and on the right hand sides 
   is $\IBnu{\IBnu{A}} \to A$. We show that both morphisms are equal to 
   $(\out^{\istar} \bullet \out)^{\istar}$ having the same type, 
   which is itself by compositionality equal to $(\out \filledsquare
   \out)^{\istar} \comp \inr$. 
 
   For the left-hand side of the original equation we obtain this by functoriality:
   \begin{align*}
     \out \comp \IBnu{(\out^{\istar})}
     =&~ (\IB{\id}{(\IBnu{\out^{\istar}})}) \comp (\IB{\out^{\istar}}{\id}) \comp \out \\
     =&~ (\IB{\id}{(\IBnu{\out^{\istar}})}) \comp (\out^{\istar} \bullet \out)
   \end{align*}
   and therefore
   \[
     \out^{\istar} \comp \IBnu{(\out^{\istar})} = (\out^{\istar} \bullet \out)^{\istar}.
   \]
   As for the right-hand side, consider the following diagram:
\begin{equation}\label{eq:c_mult}
     \begin{tikzcd}[column sep=huge]
       \IBnu{A} + \IBnu{\IBnu{A}}
       \arrow[r, "\out \filledsquare \out"]
       \arrow[d, "{[\id, \mun]}"']
       & \IB{A}{(\IBnu{A} + \IBnu{\IBnu{A}})}
       \arrow[d, "{\IB{\id}{[\id, \mun]}}"] \\
       \IBnu{A}
       \arrow[r, "\out"']
       & \IB{A}{\IBnu{A}}
     \end{tikzcd}
\end{equation}   
Let us verify that this diagram commutes by case distinction (we drop the indices at $m$ and $u$
   for readability).   
On the one hand,
     $\out \comp [\id, \mun] \comp \inl = \out$
   and also
   \begin{flalign*}
     &&&~(\IB{\id}{[\id,\mun]}) \comp (\out \filledsquare \out) \comp \inl \\
     &&=&~(\IB{\id}{[\id,\mun]}) \comp m \comp (\IB{((\IB{\id}{\inl}) \comp \out)}{\inr}) \comp u&\by{definition of~$\filledsquare$} \\
     &&=&~(\IB{\id}{[\id,\mun]}) \comp m \comp u \comp (\IB{\id}{\inl}) \comp \out \\
     &&=&~(\IB{\id}{[\id,\mun]}) \comp (\IB{\id}{\inl}) \comp \out \\
     &&=&~\out;
   \end{flalign*}
   analogously, on the other hand,
   \begin{flalign*}
     &&\out \comp [\id, &\mun] \comp \inr\\
     &&=&~\out \comp \mun&\\
     &&=&~m \comp (\IB{\out}{\mun}) \comp \out
   \intertext{and}
     &&(\IB{\id}{[\id,&\mun]}) \comp (\out \filledsquare \out) \comp \inr \\
     &&=&~(\IB{\id}{[\id,\mun]}) \comp m \comp (\IB{((\IB{\id}{\inl}) \comp \out)}{\inr}) \comp \out \\
     &&=&~m \comp (\IB{(\IB{\id}{[\id,\mun]})}{[\id,\mun]}) \comp (\IB{((\IB{\id}{\inl}) \comp \out)}{\inr}) \comp \out \\
     &&=&~m \comp (\IB{(\IB{\id}{[\id,\mun] \comp \inl}) \comp \out}{[\id,\mun]\comp \inr}) \comp \out \\
     &&=&~m \comp (\IB{\out}{\mun}) \comp \out.
   \end{flalign*}
From~\eqref{eq:c_mult}, by functoriality, we obtan
   \[ 
     \out^{\istar} \comp [\id, \mun] = (\out \filledsquare \out)^{\istar}
   \]
   and thus
   \[
     \out^{\istar} \comp \mun = (\out \filledsquare \out)^{\istar} \comp \inr.
   \]
   
   Let us now complete the constructed correspondence between $|\BC^{\IBnu}|$
   and $|\cbElg(\BC)|$ to an equivalence of categories. Let 
   $F : \BC^{\IBnu} \to \cbElg(\BC)$ be defined as follows: $F$ assigns
   to an $\IBnu$-algebra $(A, \chi)$ the complete Elgot algebra $(A,
   \chi\comp\ext, \argument^\istar)$ with the iteration as presented above, and
   to an $\IBnu$-algebra homomorphism $f : (A, \chi) \to (B, \zeta)$ the underlying morphism
from $A$ to $B$. Let us check that this definition is correct, i.e.\ the above $f$ is a complete 
   Elgot $\IB{}{}$\dash algebra morphism from $F(A, \chi) = (A, a, \argument^\istar)$ to
   $F(B, \zeta) = (B, b, \argument^\iistar)$, i.e.\ for any $e : X \to
   \IB{A}{X}$:
   \begin{flalign*}
     &&f \comp e^{\istar} =&~ f \comp\chi \comp (\coit e) & \by{definition of $\argument^\istar$} \\
     &&=&~ \zeta \comp (\IBnu{f}) \comp (\coit e) & \by{$f$ is a $\IBnu$-algebra morphism} \\
     &&=&~ \zeta \comp \coit\left(\left(\IB{f}{\id}\right) \comp e\right) & \by{Lemma~\ref{lem:corec-nat}} \\
     &&=&~ \left(\left(\IB{f}{\id}\right) \comp e\right)^{\iistar} \\
     &&=&~ (f \bullet e)^{\iistar}.
   \end{flalign*}
   For the converse direction, let $G : \cbElg(\BC) \to \BC^{\IBnu}$
   send a complete Elgot $\IB{}{}$-algebra $(A, a, \istar)$ to $(A,
   \out^{\istar})$, which we proved to be an $\IBnu$-algebra. 
   Given a morphism $f : (A, a, \istar) \to (B, b, \iistar)$, let $G f = f$ and 
   let us show that $f$ is indeed an $\IBnu$-algebra morphism from $(A, \out^{\istar})$
   to $(B, \out^{\iistar})$.
   By functoriality,
   \[
     \out \comp (\IBnu{f}) = (\IB{\id}{\IBnu{f}}) \comp
     (\IB{f}{\id}) \comp \out
   \]
   implies
   \[
     \out^{\iistar} \comp (\IBnu{f}) = \left( (\IB{f}{\id}) \comp
       \out \right)^{\iistar} = (f \bullet \out)^{\iistar}.
   \]
   But, by definition of complete Elgot $\IB{}{}$-algebra morphisms,
   \[
     (f \bullet \out)^{\iistar}=f \comp \out^{\istar},
   \]
   so the functor $G$ is well-defined. 
   
   To finish the proof, we need to show that both $G F$ and $F G$
   are identities. Since
   both functors act as the identity on morphisms, we only need to verify this on objects. 
   On the one hand, 
   \[
     (G F)(A, \chi) = G(A, \chi\comp \ext, \argument^\istar) = (A, \out^{\istar}) = (A,\chi),
   \]
   for $\out^{\istar}$ is defined as $\chi \comp \coit(\out)$ and $\coit(\out)$
   is the identity. Similarly,
   \[
     (F G)(A, a, \argument^\istar) = F(A, \out^{\istar}) = (A, \out^{\istar} \comp
     \ext, \argument^\iistar) = (A, a, \argument^\istar),
   \]
   since, by functoriality applied to $\out \comp \coit(e) =
   (\IB{\id}{\coit(e)}) \comp e$,
   \[
     e^{\iistar} = \out^{\istar} \comp \coit(e) = e^{\istar},
   \]
   whatever $e:X\to\IB{A}{X}$ is, and moreover 
   \begin{flalign*}
     &&\out^{\istar} \comp \ext
     =&~ a \comp (\IB{\id}{\out^{\istar}}) \comp \out \comp \ext & \by {solution}\\
     &&=&~ a \comp (\IB{\id}{\out^{\istar}}) \comp (\IB{\id}{\etan}) &\by{definition of~$\ext$} \\
     &&=&~ a. & \by{$\out^{\istar}$ is an $\IBnu$-algebra}
   \end{flalign*} 
\qed
\subsection*{Proof of Proposition~\ref{prop:elgot_to_weak}}
In this proof all morphisms and compositions are in the Kleisli
category of the complete Elgot monad $\BBT$, and we denote identity
morphisms by their (co)domain. Note that the codiagonal law can,
equivalently, be written as
\[
((Y + \nabla) e)^\istar = e^{\istar\istar}
\] 
for any $e: X \to Y + X + X$, where $\nabla = [\inl,\inr]$ is the codiagonal
(hence the name of the law).

Let $g: X \to Y + X$ and $f: Y \to Z + Y$ and form the following morphism
\[
  w= (Y+X \xrightarrow{[\inl, g]} Y+X \xrightarrow{f+X} Z + Y + X \xrightarrow{Z+\inl+\inr} Z + (Y+X) + (Y+X)).
\]
Now observe that the left-hand morphism of~\eqref{eq:waterfall2} is $((Z+\nabla) w)^\istar \inr$. By the codiagonal law we have
\[
((Z+\nabla) w)^\istar \inr = w^{\istar\istar}\inr.
\]
So it remains to prove that $w^{\istar\istar}\inr = f^\istar g^\istar$. Clearly, we have
\begin{equation}\label{eq:altw}
  w = ((Z+\inl)f + (Y+X)) (Y + \inr) [\inl,g].
\end{equation}
Since $((Y + \inr) [\inl,g])\inr= (Y+\inr)g$ we obtain by functoriality that 
\begin{equation}\label{eq:gdag}
  ((Y + \inr) [\inl,g])^\istar \inr = g^\istar.
\end{equation}
Now we compute
\begin{align*}
  w^\istar & = (Z+\inl)f ((Y+\inr)[\inl, g])^\istar & \by{\eqref{eq:altw} and naturality} \\
  & = (Z+\inl)f [Y, ((Y+\inr)[\inl, g])^\istar] (Y+\inr) [\inl, g] & \by{fixpoint} \\
  & = (Z+\inl)f [Y, ((Y+\inr)[\inl, g])^\istar\inr] [\inl, g] \\
  & = (Z+\inl)f [Y, g^\istar] [\inl, g] & \by{\eqref{eq:gdag}} \\
  & = (Z+\inl)f [[Y, g^\istar]\inl, [Y, g^\istar]g] \\
  & = (Z+\inl)f [Y, g^\istar] & \by{fixpoint} 
\end{align*}
Now observe that $(Z+\inl)f[Y,g^\istar] \inl = (Z+\inl) f$ so that functoriality gives us
\begin{equation}\label{eq:fdag}
  ((Z+\inl)f[Y,g^\istar])^\istar \inl = f^\istar.
\end{equation}
Finally, we compute
\begin{align*}
  w^{\istar\istar}\inr & = ((Z+\inl) f [Y,g^\istar])^\istar\inr \\
  & = [Z, ((Z+\inl) f [Y,g^\istar])^\istar] (Z+\inl) f [Y,g^\istar] \inr & \by{fixpoint} \\
  & = [Z, ((Z+\inl) f [Y,g^\istar])^\istar\inl] f g^\istar \\
  & = [Z, f^\istar] f g^\istar & \by{\eqref{eq:fdag}}\\
  & = f^\istar g^\istar & \by{fixpoint}
\end{align*}
This completes the proof.%
\qed

\subsection*{Proof of Theorem~\ref{thm:emon}}
First note that~\eqref{eq:waterfall2} can, equivalently, be rewritten as
\begin{align}\label{eq:waterfall}
([(\eta\oplus\kinl)f,\ul{\inr\inr}]\klcomp [\kinl, g])^\istar\inr = f^\istar\klcomp g^\istar
\end{align}
where $g:X\kto Y + X$, $f:Y\kto Z+Y$.

(i)~Let us prove the first clause. By Corollary~\ref{cor:T-Sigma-bialg}, $\BC^{\BBT}$
is isomorphic to the category of all those $\IB{}{}$-algebras whose structure
factor through $T\nabla$, specifically, every $\BBT$-algebra $(A,a:TA\to A)$ 
gives rise to a $\IB{}{}$-algebra $(A,a\comp (T\nabla):T(A+A)\to A)$. In the case 
at hand, we equip every such $(A,a\comp (T\nabla):T(A+A)\to A)$ with an iteration $\argument^\iistar$ 
operator sending any $e:X\to T(A+X)$ to $e^\iistar=a\comp e^\istar:X\to A$.

Let us check the axioms of complete Elgot $\IB{}{}$-algebras. 
\begin{citemize}
  \item \emph{Solution.} This follows easily from the \textit{fixpoint}
  property of the dagger of the complete Elgot monad $\BBT$:
  \begin{align*}
    e^{\iistar} =&~ a  \comp e^{\istar} \\
    =&~ a  \comp [\eta, e^{\istar}]^{\klstar} \comp e \\
    =&~ a  \comp \mu \comp T[\eta, e^{\istar}] \comp e \\
    =&~ a  \comp (Ta)  \comp T[\eta, e^{\istar}] \comp e \\
    =&~ a  \comp T[\id, e^{\iistar}] \comp e \\
    =&~ a  \comp (T\nabla) \comp T(\id + e^{\istar}) \comp e\\
    =&~ a \comp (T\nabla) (\IB{\id}{e^{\istar}}) \comp e.
  \end{align*}
\item\emph{Functoriality} is a trivial application of \textit{uniformity}:
  \[
    f \comp h = (\IB{\id}{h}) \comp e = T(\id + h) \comp e
  \]
  implies
  \[
    f^{\iistar} \comp h = a \comp f^{\istar} \comp h = a \comp e^{\istar} = e^{\iistar}.
  \]

\item\emph{Compositionality.} Since $\IB{X}{Y} = T(X + Y)$, we have
  \[
    f \filledsquare g = [T(\id + \inl) \comp f, \eta \inr
    \inr]^{\klstar} \comp [\eta  \inl, g] : Y + X \longrightarrow T(A + (Y
    + X)).
  \]
Hence, by~\eqref{eq:waterfall},
\[
(f \filledsquare g)^\istar \comp\inr = f^\istar \klcomp g^\istar.
\]
Composing with $a: TA \to A$ we obtain
\begin{align*}
(f \filledsquare g)^\iistar\comp\inr = a\comp (f^\istar\klcomp g^\istar).
\end{align*} 
Let us further rewrite the right hand side.
\begin{flalign*}
&&a\comp (f^\istar\klcomp g^\istar) 
=&\; a\comp\mu\comp  T(f^\istar) g^\istar\\
&&=&\; a\comp (Ta)\comp  T(f^\istar) g^\istar\\
&&=&\; a\comp T(a\comp f^\istar) g^\istar\\
&&=&\; a\comp (T(a\comp f^\istar+\id)\comp g)^\istar & \by{naturality}\\
&&=&\; ((\IB{f^\iistar}{\id})\comp g)^\iistar\\
&&=&\; (f^{\iistar} \bullet g)^{\iistar}.
\end{flalign*}
\end{citemize}
Next, let us show that any $\BBT$-algebra morphism $h:A\to B$ from $(A,a)$ to $(B,b)$ 
gives rise to a complete Elgot $\IB{}{}$-algebra morphism between the corresponding
$\IB{}{}$-algebras, i.e.~for every $f:X\to\IB{A}{X}$ we have
\begin{displaymath}
h\comp f^\iistar = ((\IB{h}{\id})\comp f)^\iistar.
\end{displaymath}
 Indeed,
\begin{flalign*}
&&h\comp f^\iistar=&\; h\comp a\comp f^\istar\\  
&&=&\; b\comp (Th)\comp f^\istar\\  
&&=&\; b\comp (T(h+\id)\comp f)^\istar & \by{naturality}\\ 
&&=&\; ((\IB{h}{\id})\comp f)^\iistar. 
\end{flalign*} 
We have constructed a functor from $\BC^\BBT$ to $\cbElg(\BC)$. This functor is
full and faithful because the underlying functor from $\BC^\BBT$ into the category
of $\IB{}{}$-algebras is full and faithful by Corollary~\ref{cor:T-Sigma-bialg} and
any morphism of complete $\IB{}{}$-algebras is a morphism of $\IB{}{}$-algebras by Proposition~\ref{prop:hom}.

Finally, let us check that any complete Elgot $\IB{}{}$-algebra of the form 
$(A,a\comp (T\nabla),\argument^\iistar)$ satisfying $e^\iistar = a\comp (T\nabla)\comp (T\inl)\comp e^\istar$ 
for every $e:X\to T(A+X)$, is an image of a $\BBT$-algebra, specifically
of $(A,a)$. We only have to verify that $(A,a)$ is indeed a $\BBT$-algebra. This
is however straightforwards from the axioms of $\IB{}{}$-algebras and the definitions
$u_X^Y=\eta\inl$ and $m_X^Y=[\id,\eta\inr]^\klstar$.

(ii)~We now proceed with the second clause. To that end we have to verify the axioms 
of weak complete Elgot monads.

\begin{citemize}
 \item\emph{Fixpoint.} Given $f:X\to T(Y+X)$, 
\begin{flalign*}
&& f^\istar 
 =&\; (T(\eta+\id)\comp f)^\iistar\\
&& =&\; \mu\comp (T\nabla)\comp T(\id+f^\istar)\comp T(\eta+\id)\comp f&\by{solution}\\
&& =&\; [\eta,f^\istar]^\klstar\comp f.
\end{flalign*}
 \item\emph{Naturality.} Let $f:X\to T(Y+X)$ and let $g:Y\to TZ$. We consider two
special cases $g=\id: TY \to TY$ and $g=\eta\comp h$, where $h:Y \to Z$, i.e.~we will prove
\begin{align}
(\eta\comp h)^\klstar\comp f^\istar =&\, ([\eta\inl\comp h,\eta\inr]^\klstar f)^\istar\label{eq:nat_eta}\\
\id^\klstar\comp f^\istar =&\, ([T\inl,\eta\inr]^\klstar f)^\istar,\label{eq:nat_id}
\end{align}
which jointly imply the general case as follows:
\begin{flalign*}
&&g \klcomp f^\istar = &\; g^\klstar\comp f^\istar\\ 
&&= &\; \mu\comp (T g)\comp f^\istar&\\ 
&&= &\; \mu\comp (\eta\comp g)^\klstar\comp f^\istar\\ 
&&= &\; \mu\comp ([\eta\inl\comp g,\eta\inr]^\klstar f)^\istar &\by{\eqref{eq:nat_eta}}\\ 
&&= &\; \id^\klstar \comp ([\eta\inl\comp g,\eta\inr]^\klstar f)^\istar \\
&&= &\; ([T\inl,\eta\inr]^\klstar\comp [\eta\inl\comp g,\eta\inr]^\klstar f)^\istar &\by{\eqref{eq:nat_id}}\\ 
&&= &\; ([(T\inl)\comp g,\eta\inr]^\klstar\comp f)^\istar\\
&&= &\; ((g \oplus \eta) \klcomp f)^\istar.
\end{flalign*}
The proof of~\eqref{eq:nat_eta} is based on the fact that $Th: TY \to TZ$ is a morphism of $\BBT$-algebras from $(TY,\mu)$ to $(TZ,\mu)$ and hence,
by assumption, $h$ is a morphism of $\IB{}{}$-algebras from $J(TY,\mu)$ to $J(TZ,\mu)$:
\begin{flalign*}
&&(\eta h) \klcomp f^\istar = &\; (\eta\comp h)^\klstar\comp f^\istar \\
&& =&\; (Th)\comp (T(\eta+\id)\comp f)^\iistar\\
&& =&\; (T(Th+\id)\comp T(\eta+\id)\comp f)^\iistar&\\ 
&& =&\; (T(\eta\comp h+\id)\comp f)^\iistar\\
&& =&\; (T(\eta+\id)\comp [\eta\inl\comp h,\eta\inr]^\klstar f)^\iistar\\
&& =&\; ([\eta\inl\comp (\eta h),\eta\inr]^\klstar f)^\istar\\
&& =&\; ((\eta h) \oplus \eta) \klcomp f)^\istar.
\end{flalign*}
Now we prove~\eqref{eq:nat_id}. In this case we have
$f: X \to T(TY + X)$. We apply compotitionality to $f$ and
$T(\inl \eta): TY \to T(TY + TY)$ to obtain
\begin{equation}\label{eq:aux}
  (T(\inl\eta) \filledsquare f)^{\iistar} \comp \inr = ((T(\inl\eta))^{\iistar} \bullet f)^{\iistar}.
\end{equation}

First of all, note that
\begin{flalign*}
&& (T(\inl\eta))^{\iistar} 
=&\; [\id,\eta\inr]^\klstar\comp T(\id+(T(\inl\eta))^{\iistar}) \comp T(\inl\eta)&\by{fixpoint}\\
&&=&\; [\id, \eta\inr\comp (T(\inl\eta))^{\iistar}]^\klstar \comp T(\inl\eta)\\
&&=&\; \eta^\klstar\\
&&=&\; \id.
\end{flalign*}
Using the fact that $\id^\klstar=\mu$ is a morphism of complete Elgot
$\IB{}{}$-algebras, we obtain the left hand side of~\eqref{eq:nat_id}
from the left-hand side of~\eqref{eq:aux}
\begin{flalign*}
&&((T(\inl\eta))^{\iistar} \bullet f)^{\iistar} =&\; (T((T(\inl\eta))^{\iistar}+\id)\comp f)^\iistar&\\
&&=&\;f^\iistar\\
&&=&\;(T(\mu+\id)\comp T(\eta+\id))\comp f^\iistar\\
&&=&\;\mu\comp (T(\eta+\id)\comp f)^\iistar\\
&&=&\;\id^\klstar\comp f^\istar.
\end{flalign*}
In order to prove that we obtain the right-hand side of~\eqref{eq:nat_id} from the right-hand side of~\eqref{eq:aux}, 
let us denote $T(\inl\eta) \filledsquare f:TY+X\to T(TY+(TY+X))$ by $t$. Then, by definition,
\begin{align*}
t &= m^{TY+X}_{TY} \comp (\IB{((\IB{\id}{\inl}) \comp T(\inl\eta))}{\inr}) \comp [u_{TY}^X, f]\\
&= \mu \comp T[\id,\eta\inr] \comp T(\underbrace{T(\id + \inl) \comp T(\inl\eta)}_{T(\inl\eta)} + \inr)[\eta\inl, f] \\
&= \mu \comp T[T(\inl\eta), \eta \inr \inr]\comp [\eta\inl,f] \\
&= [T(\inl\eta),\eta\inr\inr]^\klstar \comp [\eta\inl, f]\\
&= [T(\inl\eta), [T(\inl\eta),\eta\inr\inr]^\klstar\comp f].
\end{align*}

Observe that we have 
\begin{align*}
t\comp\inr &= [T(\inl\eta), \eta\inr\inr]^\klstar \comp f\\
&= T(\id + \inr)\comp [T(\inl\eta), \eta\inr]^\klstar f \\
&= (\IB{\id}{\inr})\comp [T(\inl\eta),\eta\inr]^\klstar\comp f.
\end{align*}
Therefore, using uniformity we obtain
\begin{align*}
t^\iistar\inr &= ([T(\inl\eta),\eta\inr]^\klstar\comp f)^\iistar\\
&= (T(\eta+\id)\comp [T\inl,\eta\inr]^\klstar\comp f)^\iistar\\
&= ([T\inl,\eta\inr]^\klstar\comp f)^\istar,
\end{align*}
which is the right hand side of~\eqref{eq:nat_id}.

 \item\emph{Uniformity.} Suppose, $f:X\to T(Y+X)$,  $g: Z \to T(Y + Z)$, $h:Z\to X$ and
$f\comp h=T(\id+h)\comp g$. The latter implies $T(\eta+\id)\comp f\comp h=T(\eta+\id)\comp T(\id+h)\comp g=
T(\id+h)\comp T(\eta+\id)\comp g$ and hence we can apply functoriality of $\argument^\iistar$:
\begin{flalign*}
&&f^\istar\comp h
=&\; (T(\eta+\id)\comp f)^\iistar\comp h\\
&&=&\; (T(\eta+\id)\comp g)^\iistar &\by{functoriality}\\
&&=&\;g^\istar
\end{flalign*}
as required.
\end{citemize}
Finally, let us check~\eqref{eq:waterfall}. We start with the following instance
of compositionality:
\begin{align*}
([T(\eta + \inl) \comp f, \eta \inr
    \inr]^{\klstar} \comp [\eta  \inl, g])^\iistar\inr = ((T(\eta + \id)f \filledsquare g)^\iistar = (T((T(\eta+\id) f)^\iistar+\id))\comp g)^\iistar
\end{align*}
where we used the assumption that $J(TZ,\mu)$ is a complete Elgot $\IB{}{}$-algebra. Now,
\begin{flalign*}
&&([T(\eta + \inl)& \comp f, \eta \inr
    \inr]^{\klstar} \comp [\eta  \inl, g])^\iistar\inr&\\
&&=&\;(T(\eta + \id)\comp [T(\id + \inl) \comp f, \eta \inr
    \inr]^{\klstar} \comp [\eta  \inl, g])^\iistar\inr\\
&&=&\;([T(\id + \inl) \comp f, \eta \inr
    \inr]^{\klstar} \comp [\eta  \inl, g])^\istar\inr\\
&&=&\; ([(\eta\oplus\kinl)f,\ul{\inr\inr}]\klcomp [\kinl, g])^\istar\inr \\[2ex]
&&(T((T(\eta+\id) \;&f)^\iistar+\id))\comp g)^\iistar\\
&&=&\;(T(f^\istar+\id)\comp g)^\iistar\\
&&=&\;(T(\mu+\id)\comp T((\eta T)+\id)\comp T(f^\istar+\id)\comp g)^\iistar\\
&&=&\;\mu\comp (T((\eta T)+\id)\comp T(f^\istar+\id)\comp g)^\iistar&\by{$\mu$ is a morphism in $\cbElg(\BC)$}\\
&&=&\;\mu\comp (\comp T(f^\istar+\id)\comp g)^\istar\\
&&=&\;\mu\comp (Tf^\istar)\comp g^\istar&\by{naturality of~$\argument^\istar$}\\
&&=&\;f^\istar\klcomp g^\istar
\end{flalign*}
which in summary yields~\eqref{eq:waterfall}.
\qed

\subsection*{Proof of Theorem~\ref{thm:elgot_from_alg}}
Suppose that $\BBT$ is a complete Elgot monad and let us show~\eqref{eq:codiag_alg}.
Note the identity
\begin{align}\label{eq:iistar_from_istar}
f^\iistar = a\comp f^\istar,
\end{align}
which holds for every $f:X\to\IB{A}{X}$ and every $\BBT$-algebra $(A,a)$ because $a:TA\to A$ 
is a morphism of $\BBT$-algebras from $(TA,\mu)$ to $(A,a)$, hence a morphism of complete Elgot $\IB{}{}$-algebras 
from $J(TA,\mu)$ to $J(A,a)$ and therefore we have
\begin{align*}
f^\iistar = ((\IB{a}{\id})\comp (\IB{\eta}{\id})\comp f)^\iistar = a\comp ((\IB{\eta}{\id})\comp f)^\iistar  = a\comp f^\istar.
\end{align*}
Now,~\eqref{eq:codiag_alg} is obtained as follows: for any $e: X \to \IB{(\IB{A}{X})}{X} = T(T(A + X) + X$ we have
\begin{flalign*}
&&(m^X_{\IB{A}{X}}\comp e)^\iistar =\;& ([\id,\eta\inr]^\klstar\comp e)^\iistar&\\
&&=\;& (T(a+\id)\comp [T(\eta+\id),\eta\inr]^\klstar\comp e)^\iistar\\
&&=\;&a\comp ([T(\eta+\id),\eta\inr]^\klstar\comp e)^\iistar & \by{$a$ preserves $\argument^\iistar$}\\
&&=\;&a\comp (T(\eta+\id)\comp [\id,\eta\inr]^\klstar\comp e)^\iistar\\
&&=\;&a\comp ([\id,\eta\inr]^\klstar\comp e)^\istar & \by{\eqref{eq:iistar_def}}\\
&&=\;&a\comp (T[\id,\inr]\comp [T\inl,\eta\inr]^\klstar\comp e)^\istar\\
&&=\;&a\comp (([T\inl,\eta\inr]^\klstar\comp e)^\istar)^\istar&\by{codiagonal}\\
&&=\;&a\comp (((\id \oplus \eta) \klcomp e)^\istar)^\istar \\
&&=\;&a\comp (\id^\klstar\comp e^\istar)^\istar&\by{naturality}\\
&&=\;&a\comp (\mu\comp e^\istar)^\istar\\
&&=\;&(e^\iistar)^\iistar.&\by{\eqref{eq:iistar_from_istar}}
\end{flalign*}
Conversely, we assume~\eqref{eq:codiag_alg} and prove that $\BBT$ is a complete Elgot monad.
By Theorem~\ref{thm:emon}, we only need to verify the codiagonal identity. Let 
$f:X\to T((Y+X)+X)$ and let us take 
\begin{align*}
e=T(\eta\comp (\eta+\id)+\id)\comp f:X\to T(T(TY + X) + X) = \IB{(\IB{TY}{X})}{X}
\end{align*}
in~\eqref{eq:codiag_alg}. Then we obtain the codiagonal identity for $f$ as follows:
\begin{flalign*}
&&(T[\id,\inr]\comp f)^\istar 
=\;& (T(\eta+\id)\comp T[\id,\inr]\comp f)^\iistar&\\
&&=\;& (T[\eta+\id,\inr]\comp f)^\iistar\\
&&=\;& ([\eta\comp(\eta+\id),\eta\inr]^\klstar\comp f)^\iistar\\
&&=\;& ([\id,\eta\inr]^\klstar\comp T(\eta\comp (\eta+\id)+\id)\comp f)^\iistar\\
&&=\;& ((T(\eta\comp (\eta+\id)+\id)\comp f)^\iistar)^\iistar&\by{\eqref{eq:codiag_alg}}\\
&&=\;& ((T(\eta+\id)\comp T((\eta+\id)+\id)\comp f)^\iistar)^\iistar\\
&&=\;& ((T((\eta+\id)+\id)\comp f)^\istar)^\iistar\\
&&=\;& (T(\eta+\id)\comp f^\istar)^\iistar&\by{naturality}\\
&&=\;& (f^\istar)^\istar.
\end{flalign*}
This completes the proof.
\qed

\fi 
\end{document}

